\documentclass[12pt]{article}

\pdfoutput=1

\usepackage{amsthm,enumerate}
\usepackage{mathtools}
\usepackage[normalem]{ulem}
\usepackage[T1]{fontenc}
\usepackage[utf8]{inputenc}
\usepackage[thinc]{esdiff}
\usepackage{color}
\usepackage{fancyhdr}
\usepackage{calc}
\usepackage{url}
\usepackage{xcolor}
\usepackage{xspace}

\usepackage[top=80pt,bottom=85pt,left=85pt,right=85pt]{geometry}
\usepackage{amsmath,amssymb,graphicx,float,color,tikz,cite,savesym,wasysym,subfigure}
\usepackage{caption}
\usepackage[debug,pageanchor=false]{hyperref}
\definecolor{link}{rgb}{1,0.45,0.05}
\hypersetup{colorlinks=true,linkcolor=link,citecolor=link,urlcolor=link,linktocpage}
 
\newcommand{\R}{\mathbb{R}}

\newcommand{\ii}{\mathrm{i}}
\newcommand{\dd}{\mathrm{d}}
\newcommand{\ee}{\mathrm{e}}

\setlength{\footnotesep}{.7\baselineskip}

\newcommand{\Ricnf}{{\rm Ric}_f^N}
\newcommand{\Ricnfmn}{(\text{Ric}_f^N)_{mn}}

\newcommand{\Ricdf}{\text{Ric}_f^{(2-d)}}
\newcommand{\Ricdfmn}{(\text{Ric}_f^{(2-d)})_{mn}}

\newcommand{\N}{\mathbb{N}}

\newcommand{\BV}{\text{\rm BV}}

\newcommand{\supp}{\text{\rm supp}}

\newcommand{\Lip}{\mathrm{Lip}}

\newcommand{\diam}{{\rm{diam\,}}}
\newcommand{\ve}{\varepsilon}

\renewcommand{\S}{\mathcal{S}}

\newcommand{\RCD}{\mathsf{RCD}}
\newcommand{\CD}{\mathsf{CD}}

\DeclareMathOperator*{\esssup}{ess\,sup}

\newcommand{\mm}{\mathfrak m}
\newcommand{\di}{\mathsf{d}}

\newcommand{\sfd}{\mathsf d}

\newcommand{\Ch}{\mathsf{Ch}}
\newcommand{\Per}{\mathsf{Per}}

\theoremstyle{plain}
\newtheorem{lemma}{Lemma}[section]
\newtheorem{theorem}[lemma]{Theorem}

\newtheorem{proposition}[lemma]{Proposition}
\newtheorem{corollary}[lemma]{Corollary}

\newtheorem*{theorem*}{Theorem}
\newtheorem*{maintheorem*}{Main Theorem}

\theoremstyle{definition}

\newtheorem{definition}[lemma]{Definition}
\newtheorem*{definition*}{Definition}
\newtheorem*{remark*}{Remark}
\newtheorem{remark}[lemma]{Remark}

\makeatletter
\@addtoreset{equation}{section}
\makeatother

\begin{document}

	\begin{titlepage}

	\begin{center}

	\vskip .5in 
	\noindent

	{\Large \bf{Gravity from thermodynamics:\\ optimal transport and negative effective dimensions}}

	\bigskip\medskip
	 G. Bruno De Luca,$^1$  Nicol\`o De Ponti,$^2$
	Andrea Mondino,$^3$	Alessandro Tomasiello$^{4}$\\

	\bigskip\medskip
	{\small 
$^1$ Stanford Institute for Theoretical Physics, Stanford University,\\
382 Via Pueblo Mall, Stanford, CA 94305, United States
\\	
	\vspace{.3cm}
	$^2$ 
	International School for Advanced Studies (SISSA),\\
	Via Bonomea 265, 34136 Trieste, Italy
\\	
	\vspace{.3cm}
	$^3$ Mathematical Institute, University of Oxford, Andrew-Wiles Building,\\ Woodstock Road, Oxford, OX2 6GG, UK
\\
	\vspace{.3cm}
$^4$ Dipartimento di Matematica, Universit\`a di Milano--Bicocca, \\ Via Cozzi 55, 20126 Milano, Italy; \\ INFN, sezione di Milano--Bicocca
		}

   \vskip .5cm 
	{\small \tt gbdeluca@stanford.edu, ndeponti@sissa.it,\\ andrea.mondino@maths.ox.ac.uk, alessandro.tomasiello@unimib.it}
	\vskip .9cm 
	     	{\bf Abstract }
	\vskip .1in
	\end{center}

	\noindent
	We prove an equivalence between the classical equations of motion governing vacuum gravity compactifications (and more general warped-product spacetimes) 
	and a concavity property of entropy under time evolution.
	This is obtained by linking the theory of optimal transport to the Raychaudhuri equation in the internal space, where the warp factor introduces effective notions of curvature and (negative) internal dimension.
	
	When the Reduced Energy Condition is satisfied, concavity can be characterized 
	in terms of the cosmological constant $\Lambda$; as a consequence, the masses of the spin-two Kaluza-Klein fields obey bounds in terms of $\Lambda$ alone. 
	We show that some Cheeger bounds on the KK spectrum hold even without assuming synthetic Ricci lower bounds, in the large class of infinitesimally Hilbertian metric measure spaces, which includes D-brane and O-plane singularities. 
	
	As an application, we show how some approximate string theory solutions in the literature achieve scale separation, and we construct a new explicit parametrically scale-separated AdS solution of M-theory supported by Casimir energy.

	\vfill
	\eject

	\end{titlepage}
    
\tableofcontents
\section{Introduction} 

The mathematical field of optimal transport was originally inspired by the concrete problem of how to best move a distribution of mass from one configuration to another. In recent years, this field has grown in various directions, incorporating ideas from Riemannian geometry and information theory.

In this paper, we apply ideas from this field to the physics of gravity, and in particular to its compactifications. At a technical level, these applications stem from the fact that the particular tensor 
\begin{equation}
	\Ricnfmn:= R_{m n} - \nabla_m \nabla_n f + \frac{1}{n - N}\nabla_m f \nabla_n f
\end{equation}
 plays a role in both contexts. In optimal transport, the function $f$ defines a measure $\sqrt{g}\ee^f$, and $\Ricnf$ controls the  distortion of measures along Wasserstein geodesics which, as we will see, describe mathematically the optimal way to transport probability distributions. While $n$ is the actual dimension of space, we will see that the number $N\in \mathbb{R} \cup \{\infty\}$ will play the role of an effective dimension, for reasons related to the Raychaudhuri equation. In gravity compactifications, $\Ricnf$ appears in the internal Einstein equations: $f$ is proportional to the warping function multiplying $d$-dimensional macroscopic spacetime, $n$ is the internal dimension, and $N=2-d$.

The fact that the effective dimension $N$ is often negative for compactifications might look unsettling at first. In an earlier paper \cite{deluca-deponti-mondino-t} we had reorganized the equations of motion so as to use the $N=\infty$ limit $\text{Ric}_f^{\infty}$, and exploted this fact to find applications of optimal transport to Kaluza-Klein masses (with some initial steps provided in \cite{deluca-t-leaps}). However, recent mathematical work has shown that the $N<0$ case also makes sense and is rich of geometric/analytic consequences \cite{OhtaTakatsu-AIM2011, OhtaTakatsu-CAG2013, Ohta16, Milman-TAMS2017, WW1, WW2}; in this paper we will see that it leads to cleaner and broader results.

Although our motivations come from the study of compactifications of higher dimensional gravitational theories, we stress that our results apply to general warped products in arbitrary number of dimensions, such as (warped) 1+$n$ decompositions of static space-times.

\subsection{Gravity and entropy} 
\label{sub:intro-entropy}

To a distribution of particles on a space $M_n$, we can associate a probability density distribution $\rho(x)$. 
If the total mass is $m$, the mass in a region of space $U$ is $m\int_U \dd x \sqrt{g} \rho(x)$. One of the most intriguing results of optimal transport in curved spaces regards the behavior of the Shannon entropy $${\mathcal S}[\rho]= -\int_{M_n} \sqrt{g}\rho \log \rho$$
 for a distribution of particles that move geodesically. The \emph{second} derivative of ${\mathcal S}$ with respect to time evolution turns out to be negative if and only if the ordinary Ricci tensor $R_{mn}$ is positive \cite{McCann-AIM1997, otto-villani, CEMCS-Inv2001, vonRenesseSturm-CPAM2005}. In other words, in this situation the entropy is concave as a function of time.\footnote{The entropy is also known to be concave in the space of probability distributions, i.e.~$\mathcal S[t\rho_0+(1-t)\rho_1]\geqslant t{\mathcal S}[\rho_0]+(1-t){\mathcal S}[\rho_1]$, $\forall \rho_0, \rho_1$ and $t\in [0,1]$.}

The Einstein equations now imply an inequality relating this second derivative to an integral of the stress-energy tensor.
This inequality becomes in fact \emph{equivalent} to the Einstein equations if we add the information that it can be saturated on delta-like distributions.\footnote{At its core, this is similar in spirit to the earlier observation in \cite{baez-bunn}, where the Einstein equations were related to the behavior of a small sphere of particles in free fall.}
This striking result has been rigorously proved for the Lorentzian vacuum Einstein equations in \cite{mccann,mondino-suhr}: \cite{mccann} is focused on the Hawking-Penrose strong energy condition and time-like Ricci lower bounds, \cite{mondino-suhr}  treats general upper and lower time-like Ricci bounds and thus the Einstein equations. We rederive such an optimal transport characterization of Einstein's equations formally\footnote{We use the adjective ``formally'' with its standard mathematical denotation, where it is contrasted to ``rigorously''.} in Sec.~\ref{sec:entropy}, using the tools of optimal transport we recall in Sec.~\ref{sec:optimal}.

Our description of this result in terms of $\frac{d^2 {\mathcal S}}{d t^2}$ is a bit of an oversimplification. In optimal transport one actually tends to focus on concavity rather than on the sign of $\frac{d^2 {\mathcal S}}{d t^2}$, because it makes sense even when ${\mathcal S}$  is not smooth, thus allowing to  include in the treatment also spaces with singularities. This is important for physics applications, as we will see below.

As we mentioned earlier, a natural modification in this context is to introduce a weighted measure $\sqrt{g} \ee^f$ that differs from the standard Riemannian volume measure $\sqrt{g}$. Concavity of the Shannon entropy now becomes equivalent to positivity of $\text{Ric}_f^{\infty}$. It is also natural to study other notions of entropy considered in the literature. 
As we review in Sec.~\ref{sub:tsallis}, a famous possibility is the \emph{Tsallis} entropy \cite{tsallis}
$${\mathcal S}_\alpha [\rho] = \frac{1}{\alpha-1} \left(1-\int_{M_n}\sqrt{g}\ee^f \rho^\alpha \right),$$ obtained by replacing one of the axioms characterizing the Shannon entropy (related to its extensive property) to a homogeneous property in terms of $\alpha$. Under the choice $\alpha=1-1/N$, concavity of $\mathcal{S}_\alpha$ is related to positivity of $\Ricnf$  (cf. \cite{St1, LottVillani-Annals2009,Vil} for $N\in (1,\infty)$, and  \cite{Ohta16} for $N<0$).

The aforementioned appearance of $\Ricnf$ in the internal Einstein equations suggests that they too might be reformulated in terms of generalized concavity properties for the Tsallis entropy ${\mathcal S}_{1-1/N}$, with weighted measure $\sqrt{g}\ee^f$. 
We establish this new result at the formal level in Sec.~\ref{sub:tsallis-einstein}, leaving a fully rigorous mathematical proof to a later publication \cite{deluca-deponti-mondino-t-entropyproof}.
 In Sec.~\ref{sub:external-shannon} we also show that the external Einstein equation can be reformulated in terms of the \emph{first} derivative of the ordinary Shannon entropy.

This reformulation also provides a rigorous mathematical definition of the low-energy Einstein equations for certain classes of singular space-times where the standard ana\-ly\-ti\-cal geometrical definition breaks down. In Sec.~\ref{sec: RCD} we showcase some of the advantages of this approach by proving rigorous theorems about the masses of the spin-two fluctuations around backgrounds that include localized classical sources, such as D$p$-brane singularities in supergravity. This suggests that this mathematical definition agrees at least partially with the UV completion of classical supergravity provided by string theory.

In physics, it is more customary to define an entropy by integrating a probability distribution in phase space. Our integrals over $M_n$ are entropies in the more general sense of information theory: they parameterize our ignorance about the position of particles that propagate geodesically on $M_n$. When the latter is the internal space of a compactification, our entropy measures the ignorance of a low-dimensional observer regarding a particle distribution along the internal dimensions.

There is of course a long history of connections between gravity and thermodynamical ideas, starting with black hole physics. A famous argument derives the Einstein equations from the assumption that the entropy is proportional to the area of any local Rindler horizon \cite{jacobson}. A later argument derived them by using the Ryu--Takayanagi formula \cite{ryu-takayanagi} for holographic entanglement entropy \cite{faulkner-guica-hartman-myers-vanraamsdonk}. An even more ambitious idea views gravity as an entropic force \cite{verlinde-entropic}. In contrast, we stress that our reformulation only uses \emph{classical} physics of probe particles in free fall (that is, only subject to gravity). 
Nevertheless, it would be very interesting to investigate any relationship with those earlier results.

\subsection{Bounds on KK masses and separation of scales} 
\label{sub:intro}

A notable and frequent application of global inequalities on the Ricci tensor is to obtain bounds on the eigenvalues of various geometrical operators, such as the Laplace--Beltrami. In gravity compactifications, the masses of Kaluza--Klein (KK) fields are also obtained as eigenvalues of geometrical operators. While unfortunately there is no general expression for those (other than in simple classes such as Freund--Rubin), an exception is the tower of spin-two masses, which is obtained from a version of Laplace--Beltrami weighted by the warping function (which is in turn proportional to the $f$ above).

In our reformulation of the Einstein equations, the second derivative of the entropy $\frac{d^2 \mathcal{S}}{d t^2}$ is related to an integral of a certain combination of the internal and external stress-energy tensor. This combination also appeared in \cite{deluca-t-leaps}, where it was shown to be positive for many common forms of matter; this was dubbed there the Reduced Energy Condition (REC). 

This observation was already enough in \cite{deluca-t-leaps,deluca-deponti-mondino-t} to prove several upper and lower bounds on the spin-two masses. However, those results were using the $N=\infty$ effective dimension, and as a consequence many of the resulting bounds depended on the upper bound $\mathrm{sup}|\mathrm{d}A|$ of the gradient of the warping function $A$. In some situations this can get large, and make the bounds less useful. The inequality we obtain here in terms of negative effective dimensions is much simpler:
\begin{equation}\label{eq:RiccifNLambda}
	\Ricnfmn \geqslant \Lambda g_{mn}\,,
\end{equation}
with $f=(D-2)A$.

This makes no reference to the warping, and in turn improves the bounds on the spin-two KK masses. For example, we exploit and generalize a result in the literature \cite{calderon} to show rigorously that in general the smallest mass satisfies (Th.~\ref{Th:calderon}) :

\begin{equation}\label{eq:calderon-intro}
	m_1^2 \geqslant \frac{\alpha (\mathrm{diam}\sqrt{-K})}{\mathrm{diam}^2}\,,
\end{equation}
where the diameter $\mathrm{diam}$ is the largest distance between any two internal points, and $\alpha(\mathrm{diam}\sqrt{-K})>0$ is a constant that only depends on the product $\mathrm{diam}\sqrt{-K}$, where $K<0$ is a lower bound on the $N$-Ricci curvature. When $K=\Lambda= (1-d)/L^2_\mathrm{AdS}$ as in (\ref{eq:RiccifNLambda}), $\mathrm{diam}\sqrt{-K} \propto \mathrm{diam}/L_\mathrm{AdS}$. For the applications, an important property of such a constant $\alpha$ is the limiting behaviour (Rem. \ref{rem: scale}):
$$
\lim_{\mathrm{diam} \sqrt{-K}\to 0} \alpha = \pi^2.
$$
In particular, this proves that in any warped compactification with matter content that satisfies the REC there is a large hierarchy between $m_1$ and the scale of the cosmological constant when  $\mathrm{diam}/L_{\textrm{AdS}}$ is small.
The lower bound   \eqref{eq:calderon-intro} is of course intuitive (at least at the qualitative level), but here we are providing a precise statement with a general rigorous argument, valid for any higher-dimensional gravity with an Einstein--Hilbert kinetic term.
Optimal transport plays a key role in the proof of Th.~\ref{Th:calderon}, based on the so-called ``$L^{1}$-localization method'': the basic idea is that using $L^{1}$-optimal transport (i.e. optimal transport with cost function given by the distance function), it is possible to partition the (possibly singular) space $X$ (up to a set of measure zero) into geodesics $\{X_{\alpha}\}_{\alpha}$, each $X_{\alpha}$ being endowed with a Borel non-negative measure $\mm_{\alpha}$, and to reduce the proof of the desired inequality  \eqref{eq:calderon-intro} in $X$ to proving a family of corresponding inequalities on the 1-dimensional weighted spaces $(X_{\alpha}, \mm_{\alpha})$. Such a dimension reduction argument is very powerful, it has its roots in \cite{PW60}, it was formalised in highly symmetric spaces in \cite{GrMi87, KLS95} by using iterative bisections, and was then developed  via optimal transport tools for smooth Riemannian manifolds in \cite{klartag} and for (possibly non-smooth) metric measure spaces satisfying synthetic Ricci lower bounds and dimensional upper bounds in \cite{cavalletti-mondino-inv}.

As we mentioned above, optimal transport can handle certain singularities, called $\RCD$ spaces, by focusing on concavity properties for ${\mathcal S}$ rather than on its second derivative \cite{St, St1, LottVillani-Annals2009, AGS1, G11, AGMR}; it turns out that this applies to the famous string theory objects called \emph{D-branes}, as was checked in \cite{deluca-deponti-mondino-t} for the $N\in (n,\infty]$ case and extended here to $N<0$ (Sec.~\ref{sub:RCD}). As mentioned above, the advantage of considering negative $N$ here is that it allows for a neat control of the weighted $N$-Ricci curvature lower bound \eqref{eq:RiccifNLambda}. So our proof of (\ref{eq:calderon-intro}) applies to compactifications with brane singularities as well. 

Some interesting compactifications of string theory also contain a type of source called \emph{O-plane}. Unfortunately these turn out to be outside the $\RCD$ class, as we rigorously show in Sec. \ref{subsec: Oplanes not}.\footnote{Intuitively, this is due to geodesics being repelled by the singularity (due to its negative tension) and thus having to focus on it if one wants to hit an antipodal target. This is to be contrasted with D-branes, to which geodesics are attracted and thus spread out before refocusing.} To handle them, we consider the broader class of \emph{infinitesimally Hilbertian} metric measure spaces \cite{AGS1, G11}. We are able to show that some of the KK lower bounds appearing in \cite{deluca-deponti-mondino-t} are also valid in this larger class, and hence apply to compactifications with O-plane singularities as well. In particular, we obtain (Thm.~\ref{th: cheeger ineq}) 
\begin{equation}
	m_1^2 \geqslant h_1^2/4\,,\label{eq:cheeger-intro}
\end{equation}
where $h_1$ is the so-called \emph{Cheeger constant} of $(M_n,f$), which is small when the weighted manifold has a `neck' almost separating it in two pieces (as reviewed at length in \cite{deluca-deponti-mondino-t}).
We also prove some higher order generalization of the previous result, obtaining bounds on the whole tower of spin-two masses (Thm.~\ref{th: bound h-lambdak-lambda1} and Thm.~\ref{th: MicloUpper}).

While the mathematical study of the $N<0$ case is developed enough for us to obtain the results presented here, it is at present not as mature as the $N>0$ and $N=\infty$ cases. Because of this, in this paper we have not improved the upper bound on $m_1^2$ mentioned in \cite[Th.~4.2]{deluca-deponti-mondino-t} and first \cite[Cor.~1.2]{deponti-mondino}, itself a generalization of the so-called Buser inequality. We hope to return to this in the future, as mathematical techniques improve further.

We end the paper in Sec.~\ref{sec:scale} with some considerations about the problem of scale separation. First we discuss how the bounds \eqref{eq:calderon-intro} and \eqref{eq:cheeger-intro} on $m_1$ can be used to show that this mass is much larger than  $|\Lambda|$ for certain approximate string theory vacua \cite{dewolfe-giryavets-kachru-taylor,cribiori-junghans-vanhemelryck-vanriet-wrase}. Second, we show how a simple violation of the REC, Casimir energy, can  lead to such a scale separation as well, by constructing a AdS$_4\times T^7$ solution with a parametric hierarchy between the KK modes and the scale of the cosmological constant.


\section{Equations of motion and weighted Ricci tensor}\label{sec:eom}
We start by showing how the equations of motion for general warped-product space-times are naturally organized in terms of a generalization of the Ricci curvature tensor that better captures the geometrical properties of the space when the warping is non-trivial. 
The results in this section are partially based on the analysis of the equations of motion performed in \cite[Sec.~2]{deluca-t-leaps}, to which we refer the reader for more details.

\subsection{Effective curvature and dimension}
Our setup consists of any possible gravitational theory that at low energy reduces to $D$-dimensional Einstein gravity, with some prescribed matter content. In particular, our analysis applies to compactifications of string and M-theory but it is not restricted to those.

We normalize the Einstein--Hilbert term in the action as $S_{\text{EH}} = \frac{1}{\kappa^2}\int\sqrt{-g_D}R_D$ with $\kappa^2 = (2\pi)^{D-3}\ell_D^{D-2}$, where $\ell_D$ is the $D$-dimensional Planck length.
In such a theory, the Einstein equations for the $D$-dimensional metric can be written as
\begin{equation}\label{eq:RMN}
  R_{M   N} = \frac{1}{2}\kappa^2 \left( T_{M   N} - g  _{M  
  N} \frac{T}{D-2}  \right) := \hat{T}_{M N}\;,
\end{equation}
where $T_{M N} := -\frac{2}{\sqrt{-g}} \frac{\delta S_{\text{mat}}}{\delta g^{M N}}$ is the stress-energy tensor of the $D$-dimensional theory.
We are interested in studying general, possibly warped, $d$-dimensional vacuum compactifications. That is, the $D$-dimensional space-time has the form\footnote{
	We use upper case Latin letters to denote $D$-dimensional indices, lower-case Greek letters for indices along the directions of the $d$-dimensional vacuum and lower-case Latin letters for
	indices in the $n$-dimensional internal space, with $n= D-d$. Also, compared to references \cite{deluca-t-leaps, deluca-deponti-mondino-t} we are suppressing here the bar on top of $g_n$ i.e.~$\bar{g}_n^{\text{there}} \to g_n^\text{here}$.}
\begin{equation}\label{eq:metric}
	\dd s^2_D = 
	\ee^{2 A}   (\dd  
	s^2_{d} + \dd s^2_n)\,.
  \end{equation}
The \emph{warping} function $A$ only varies over the $n$-dimensional internal space;  $\dd s^2_d$ is a maximally-symmetric space, with curvature normalized as $R^{(d)}_{\mu \nu} = \Lambda g^{(d)}_{\mu \nu}.$ 

Plugging \eqref{eq:metric} in  \eqref{eq:RMN}, and specializing to external and internal directions, we get two sets of equations, which can be combined and re-organized as
\begin{align}
	\frac{1}{D - 2} \ee^{- f} \Delta (\ee^{f})  &= \frac{1}{d} \hat{T}^{(d)}-\Lambda\;,\label{eq:eoms0-w}\\
	R_{m  n} - \nabla_m \nabla_n f + \frac{1}{D - 2} \nabla_m f \nabla_n f & =  \Lambda g_{m  n} + \tilde{T}_{mn}\label{eq:eoms0-R}\;,
\end{align}
where we have defined the combinations
\begin{equation}\label{eq:combs}
	\hat{T}^{(d)} := g^{(d)\, \mu \nu}\hat{T}_{\mu \nu},\qquad \tilde{T}_{mn}:= \frac{1}{2} \kappa^2 \left( T^{(D)}_{m n} - \frac{1}{d} g_{m  n} T^{(d)}\right),\qquad f = (D-2)A .
\end{equation} 
Equation \eqref{eq:eoms0-R} highlights a particular combination of the Ricci tensor and the derivatives of the warping. For a given $N\in\mathbb{R}$ we can define the $N$-Bakry--\'Emery Ricci tensor
\begin{equation}\label{eq:defRicciNf}
	\Ricnfmn:= R_{m n} - \nabla_m \nabla_n f + \frac{1}{n - N}\nabla_m f \nabla_n f\;,
\end{equation}
and in terms of this object the equations of motion take the simple-looking form 
\begin{subequations}\label{eq:eoms}
\begin{align}
	\frac{1}{D - 2} \ee^{- f} \Delta (\ee^{f})  &= \frac{1}{d} \hat{T}^{(d)}-\Lambda\;,\label{eq:eoms-warp}\\
	\Ricdfmn& =  \Lambda g_{m  n} + \tilde{T}_{mn} \label{eq:eoms-Ricnf}\;,
\end{align}
\end{subequations}
where 
\begin{equation}
	N=2-d\,.
\end{equation}
At this stage, the definition in \eqref{eq:defRicciNf} might appear purely algebraic and far from any geometrical meaning. However, crucially for the rest of our analysis, this generalization of the Ricci curvature tensor has already been considered and extensively studied in the literature of optimal transport,  where \eqref{eq:defRicciNf} has been shown to be the notion of curvature that captures the analytic/geometric properties of \emph{weighted} Riemannian manifolds with an effective notion of dimension $N<0$  \cite{OhtaTakatsu-AIM2011, OhtaTakatsu-CAG2013, Ohta16, Milman-TAMS2017}. We will explore this more in detail in Sec.~\ref{sub:eff-dim}.

Finally, we stress that, even though our main motivation is the study of vacuum compactifications, all our analysis and results apply to any space-time that can be written in the form \eqref{eq:metric}, including, for example, $1+n$ splittings of static space-times.

\subsection{Reduced Energy Condition and Ricci lower bounds}\label{sub:REC}

From \eqref{eq:eoms-Ricnf} we see that lower bounds on $\tilde{T}_{mn}$ directly translate to lower bounds on $\Ricdfmn$, which we can then exploit to derive constraints on physical properties of vacuum compactifications.

At first sight, the peculiar combination of stress energy tensors appearing in the definition \eqref{eq:combs} for $\tilde{T}_{mn}$ can seem hard to estimate in general, as it might depend on the details of the energy sources.
However, in \cite{deluca-t-leaps} it has been noticed that for a large class of matter content it is actually non-negative. This condition, being like an energy condition naturally emerging from reducing the theory, has been named Reduced Energy Condition (REC):\footnote{Unlike most energy conditions, the REC only makes sense for compactifications. Another such condition, dubbed IEC, was considered in \cite[(2.32)]{douglas-kallosh}, but it only concerns traces of the stress-energy tensor.}
\begin{equation}\label{eq: REC}
	\tilde{T}_{mn} = \frac{1}{2} \kappa^2 \left( T^{(D)}_{m n} - \frac{1}{d} g_{m  n} T^{(d)}\right) \geqslant 0\qquad \text{REC}.
\end{equation}
More precisely, \cite{deluca-t-leaps} has shown that the REC is satisfied by higher dimensional scalar fields, general $p$-form fluxes (including 0-forms such as the Romans mass in type mIIA) and localized sources with positive tension. Moreover, one can easily check that any potential for a collection fields $\varphi_i$ of the form $S^{\text{pot.}} = \int\sqrt{-g_D}\,V(\{\varphi_i\})$,  with $V(\{\varphi_i\})$ independent of the metric, does not affect the REC since its contribution cancels from the combination \eqref{eq: REC}.

When the REC is satisfied by the matter content of the $D$-dimensional theory, we have a simple lower bound for the synthetic curvature:
\begin{equation}\label{eq:boundRicnf1}
\Ricdf \geqslant \Lambda\,,
\end{equation}
which we can exploit to bound physical properties of gravity compactifications in terms of $\Lambda$, as we will do in Sec.~\ref{sec: RCD} where we bound the masses of spin-two Kaluza-Klein fluctuations around general vacua that satisfy the REC.
However, many interesting physical sources violate the REC, such as O-planes in string theory or quantum effects. In Sec.~\ref{sec:scale} we will analyze explicit examples in which these sources allow the construction of scale-separated solutions, i.e.~solutions for which the masses of the Kaluza-Klein modes are parametrically larger than $|\Lambda|$.

\section{Optimal transport} 
\label{sec:optimal}

As we anticipated, the tensor (\ref{eq:defRicciNf}) appearing in the equations of motion has a natural interpretation in the field of optimal transport. In this section we will give a brief review of some aspects of this field that are important for the rest of the paper. In particular we will show why (\ref{eq:defRicciNf}) is a natural combination, and in what sense $N$ can be considered an effective dimension. In this section we will mostly consider smooth (weighted) spaces, while the non-smooth case will be treated later in Section \ref{sec: RCD}.

\subsection{Probability and optimal transport}\label{sub:OT}

Take a distribution of probe particles on a \emph{space} $X$ that at time $t=0$ has a certain shape $\mu_0(x)$. We can think of it as an actual mass distribution of many particles or as a probability distribution for a single particle; in both cases we normalize $\int_X \mu_0(x) = 1$. Assume that on $X$ there is a notion of distance and that each bit of the distribution starts moving along a geodesic. The initial shape $\mu_0$ is thus being distorted, and we call $\mu(x,t)$ the probability distribution at $t\geqslant0$, with $\mu(x,0):=\mu_0(x)$. Since the individual bits of mass are moving along geodesics, certainly $\mu(x,t)$ has some information about the geometry of $X$; we may wonder if it knows enough, for example to give information about the curvature of $X$. The answer turns out to be affirmative: specifying the evolution of the \emph{relative entropy} $\mathcal{S}[\mu]$ of $\mu$ with respect to the volume form on $X$ is \emph{equivalent} to prescribing the Ricci tensor. This observation can be then exploited to encode all the Einstein equations in an evolution equation for $\mathcal{S}[\mu]$; cf. \cite{mccann, mondino-suhr}.

To explicitly derive and state this equivalence we need tools to handle the evolution of probability distributions. Luckily, these have been extensively developed in the context of optimal transport theory. Most of the informal discussion below is based on \cite[Chapp. 6, 14, 15, 16]{Vil}: we adopt the formalism introduced by Otto \cite{Otto-CPDE2001}, now know as \emph{Otto calculus}; cf. \cite[Sec.~2.2]{cotler-rezchikov} and \cite[Sec.~2.3]{deluca-deponti-mondino-t}.

To make contact with this framework, assume we are on a more general metric space $(X,\di$) and that we are given the task of moving mass on $X$ in order to morph an initial distribution $\mu_0(x)$ into a distribution $\mu_1(y)$, while minimizing the total cost, when the cost of moving one bit of mass from $x$ to $y$ is given by the squared distance $\di(x,y)^{2}$. This problem induces a distance on the space $\mathcal{P}_{2}(X)$ of Borel probability measures over $X$ with finite second moment,  called $2$-Kantorovich--Wasserstein, or \emph{Wasserstein distance} for short:
\begin{equation}\label{eq: def Wass}
	W_2(\mu_0,\mu_1):=\inf \sqrt{\left\{ \int_{X\times X} \di(x,y)^{2}\,\dd\pi(x,y) \ : \ \pi \ \textrm{coupling of} \ \mu_0 \ \textrm{and} \ \mu_1\right\} }\,,
\end{equation}
where a \emph{coupling} is a probability distribution $\pi$ on $X\times X$ whose \emph{marginals} $\int_X \dd y\,\pi(x,y)$ and $\int_X \dd x\,\pi(x,y)$ are respectively equal to $\mu_0(x)$ and $\mu_1(y)$. These requirements impose that all the mass that is going to $y$ comes from $\mu_0$ and that all the mass moved out from $x$ goes into $\mu_1$, i.e.~no mass has been lost or created in the process. 

When  $X$ is an $n$-dimensional Riemannian manifold equipped with a metric $g$, which induces the distance $\di$ and an invariant measure $\sqrt{g} \dd x^n$, we can formally think of $\mathcal{P}_{2}(X)$ as an infinite dimensional Riemannian manifold equipped with a scalar product that induces the distance \eqref{eq: def Wass}. We can write this metric in terms of its action on tangent vectors $\dot{\mu}$ on $\mathcal{P}{_2}(X)$, which are characterized as follows. Since the total mass is being preserved in this process, a continuity equation holds on $X$, and we can specify $\dot\mu$ at a given time $t$ in terms of a vector field $\xi(\cdot, t) \in T(X)$  as
\begin{equation}\label{eq: continuity}
	\dot\mu := -\nabla\cdot(\mu \xi) := -\nabla\cdot(\mu\nabla\eta)\,.
\end{equation}
The time-dependent vector field $\xi(x, t)$, which describes the direction on $X$ along which the bit of mass at $x$ is moving at time $t$, can be written as the gradient of a real function $\eta(\cdot, t)$. With this definition, it can be shown that the Wasserstein distance can be represented as
\begin{equation}
	W_2(\mu_0, \mu_1) = \inf_{ \{\mu (t)\}_{t\in [0,1]}} \sqrt{\int_0^1 \dd t \int_X \mu |\xi|^2}
	\qquad \textrm{with} \quad \mu (0) = \mu_0, \quad \mu (1) = \mu_1\,.
  \end{equation}
Thus, given two tangent vectors (on $\mathcal{P}{_2}(X)$) at $\mu$, $\dot{\mu}_1$ and $\dot{\mu}_2$, their scalar product is 
\begin{equation}\label{eq: prodW}
	\langle \dot{\mu}_1, \dot{\mu}_2 \rangle_W := \int_X \mu \;\xi_1 \cdot \xi_2\,.
\end{equation}
This defines our formal Riemannian metric on $\mathcal{P}{_2}(X)$.

Now that we have tangent vectors and a Riemannian metric on $\mathcal{P}{_2}(X)$, we can ask when the curve $\mu(\cdot, t):[0,1]\to \mathcal{P}{_2}(X)$ is a geodesic with respect to the metric \eqref{eq: prodW}, i.e.~when it locally minimizes the distance \eqref{eq: def Wass}. The answer is well-known: such geodesics are characterized in terms of solutions of the Hamilton--Jacobi equation on $X$. Specifically, $\mu(x,t)$ describes a geodesics in $\mathcal{P}{_2}(X)$ if
\begin{equation}\label{eq: geoW}
	 \dot{\eta} = - \frac{\lvert\xi\rvert^2}{2} = - \frac{\lvert\nabla\eta\rvert^2}{2} 
  \end{equation}
with $\xi$ and $\eta$ defined from $\dot{\mu}$ through the continuity equation \eqref{eq: continuity}. 
In Appendix \ref{sec:geodesics} we show how \eqref{eq: geoW} implies that each bit of mass composing $\mu$ moves along a geodesics on $X$.

\subsection{Derivatives of functionals in Wasserstein space}

Equipped with the formal machinery developed in the previous section, we are now ready to compute the derivative of functionals on $X$ along geodesics on $\mathcal{P}{_2}(X)$.  
From now on, we will perform formal computations specializing to the cases in which $X$ is an $n$-dimensional Riemannian manifold equipped with a metric $g$, which induces the distance $\di$ and an invariant measure $\sqrt{g} \dd x^n$. We will return to singular spaces in Sec.~\ref{sec: RCD}, where we show that one can rigorously take into account the classical singular backreactions of physical sources.

Given a probability distribution $\mu$, we define its density $\rho$ as
\begin{equation}\label{eq: density}
	\mu := \rho \sqrt{g}dx^n
\end{equation}
and represent a generic functional $\mathcal{F}$ as\footnote{For simplicity of notation we suppress $\dd x^n$ from integrals in the remainder of this section.}

\begin{equation}\label{eq:F[mu]}
	\mathcal{F}[\mu]:= \int_X \sqrt{g}  F(\rho)\,.
\end{equation}

When $\mu$ changes in time, so will $\mathcal{F}[\mu]$, and an explicit computation reveals that along the curve described by the continuity equation $\eqref{eq: continuity}$ its rate of change is
\begin{equation}\label{eq: der1U}
	\frac{\dd}{\dd t} \mathcal{F}[\mu]=  \int_{X} \sqrt{g} P (\rho) \Delta \eta\,
\end{equation}
where we have defined $P (\rho) := \rho F' (\rho) - F (\rho)$.
We can go further and compute the second derivative along the curve $\mu(\cdot, t)$, taking into account that this curve is a geodesic on $\mathcal{P}{_2}(X)$ and thus imposing \eqref{eq: geoW}. Doing so we get
\begin{equation}\label{eq: der2U}
	 \frac{\dd^2}{\dd t^2 } \mathcal{F}[\mu]=  - \int_{X} \sqrt{g} P (\rho) \left[ \Delta \left( \frac{| \nabla \eta |^2}{2}
	\right) - \nabla (\Delta \eta) \cdot\nabla \eta \right] + \int_{X} \sqrt{g} P_2 (\rho) 
	(\Delta \eta)^2 \,,
\end{equation}
where we defined $P_2 (\rho) := \rho P' (\rho) - P (\rho)$ and $\Delta := -\nabla^2$. More details on the derivations of \eqref{eq: der1U} and \eqref{eq: der2U} can be found in App.~\ref{sec:flows}.
We can simplify the quantity in the brackets by using the Bochner formula
\begin{equation}
  - \left[ \Delta \left( \frac{| \nabla \eta |^2}{2} \right) - \nabla
  (\Delta \eta)\cdot \nabla \eta \right]  =  R_{m n}
  \nabla^n \eta \nabla^m \eta + \nabla_m \nabla_n \eta \nabla^m \nabla^n
  \eta \;. \label{eq:boch1} 
\end{equation} In Appendix \ref{sec:bochner} we review how \eqref{eq:boch1} is a close relative of the Raychaudhuri equation on $X$, which connects the behavior of families of geodesics to the Ricci curvature. Plugging it in \eqref{eq: der2U} we can finally write
\begin{equation}\label{eq:HessU}
\begin{split}
	\frac{\dd^2}{\dd t^2 }\mathcal{F}[\mu]&=  \int_{X} \sqrt{g} P (\rho) \left[R_{m n}
		\nabla^n \eta \nabla^m \eta + \nabla_m \nabla_n \eta \nabla^m \nabla^n
		\eta \right]\\
		&+\int_{X} \sqrt{g} P_2 (\rho)  (\Delta \eta)^2\,.
\end{split}	
\end{equation}
Equations \eqref{eq: der1U} and \eqref{eq:HessU} are the main ingredients we need to rewrite the Einstein equations in terms of derivatives of an entropy.

We conclude this section by noticing that knowledge of time derivatives of function(als) along geodesics can be used to extract their spatial derivatives, i.e.~gradients and Hessians. Indeed, consider first the finite-dimensional case of a function $F: M\to \mathbb{R}$, with $M$ a smooth manifold. We can extract the gradient of $F$ at $x_0$ along the direction $\xi_0$ by evaluating the derivative of $F$ along a curve that at $t = 0$ passes through $x_0$ with tangent vector $\xi_0$. Indeed, $ \frac{\dd}{\dd t}F(x(t))=  \langle\dot{x}^i,\textrm{grad}F\rangle_g$, where $\textrm{grad}F$ is the gradient vector field of $F$. Evaluating it at $t = 0$ results in the expression
\begin{equation}\label{eq:gradrelation}
	\left.\frac{\dd}{\dd t} \right|_{t=0}F(x(t)) =  \langle\xi_0,\textrm{grad}F(x_0)\rangle_g\,.
\end{equation}
In a similar way we can extract the Hessian. This time we need the second derivative of $F$ along $x(t)$, when the latter describes a geodesic on $M$. This gives $\frac{\dd^2}{\dd t^2}F(x(t)) = \ddot{x}^i \partial_i F + \dot{x}^i \dot{x}^j \partial_i \partial_j F = \dot{x}^i \dot{x}^j \nabla_i \nabla_j F$, which evaluated at $t = 0$ results in 
\begin{equation}
	\left.\frac{\dd^2}{\dd t^2}\right|_{t=0}  F(x(t)) =  \langle\xi_0,\textrm{Hess}F(x_0)\xi_0\rangle_g\,.
\end{equation}
Formally, the same relations are true in the Wasserstein space $\mathcal{P}{_2}(X)$, so that \eqref{eq: der1U} and \eqref{eq:HessU} evaluated at $t = 0$ represent the gradient and the Hessian of $\mathcal{F}$ at $\mu$ along the direction $\dot{\mu}$.

\subsection{Weighted measures}\label{sub:weighted-measures}
The discussion of the previous section can be generalized to the case in which the Riemannian volume form is weighted by a positive function. In this situation the measure that equips our metric-measure space is a more general $\ee^f\sqrt{g}$.\footnote{Also in the weighted case we are describing the theory in the smooth setting for simplicity but the results can be shown to hold in more general metric measure spaces.}
Given a probability distribution $\mu$ on $X$ it is then more natural to define its density $\rho$ with respect to the weighted volume form as
\begin{equation}
	\mu := \rho \sqrt{g}\ee^f \dd x^n\,.
\end{equation}
We can then represent a generic functional $\mathcal{F}$ as
\begin{equation}\label{eq: funcUf}
	\mathcal{F}[\mu]:= \int_X \sqrt{g}\ee^f  F(\rho)\,.
\end{equation}
When $\mu$ changes in time according to the continuity equation \eqref{eq: continuity} the derivative of $\mathcal{F}$ is given by
\begin{equation}\label{eq: der1Uf}
	\frac{\dd}{\dd t}\mathcal{F}[\mu] =\int_X\sqrt{g}\ee^f P(\rho)\Delta_f(\eta) \,,
\end{equation}
which differs from the unweighted case \eqref{eq: der1U} by the fact that the integral is weighted and the Laplacian is replaced by the weighted Laplacian
\begin{equation}\label{eq:Delta_f}
	-\Delta_f \eta := \ee^{-f} \nabla^m (\ee^f \nabla_m \eta) = \nabla^2 \eta + \nabla f \cdot \nabla \eta := \nabla^2_f \eta\,.
\end{equation}
Equation \eqref{eq: der1Uf} and the ones that follow are derived in App.~\ref{sec:flows}.
Taking another derivative and using \eqref{eq: geoW} to evaluate the resulting expression along geodesics in Wasserstein space, after some manipulations we get:
\begin{equation}\label{eq: der2Uf}
	\frac{\dd^2}{\dd t^2} \mathcal{F}[\mu] = -\int_X
 \sqrt{g}\ee^f P(\rho)\left[\Delta_f\left(\frac{\lvert\nabla\eta\rvert^2}{2}\right)-\nabla(\Delta_f(\eta))\cdot\nabla\eta\right]+\int_X\sqrt{g}\ee^f P_2(\rho)(\Delta_f(\eta))^2\,.
\end{equation}
To simplify the term in square brackets in \eqref{eq: der2Uf}, we need the weighted analogue of the Bochner equation \eqref{eq:boch1}: 
\begin{equation}\label{eq:weightedBochner}
	- \left[ \Delta_f \left( \frac{| \nabla \eta |^2}{2} \right) - \nabla (\Delta_f
(\eta)) \cdot\nabla \eta \right] =  \left(R_{m n}-\nabla_m\nabla_n f\right)
\nabla^n \eta \nabla^m \eta + \nabla_m \nabla_n \eta \nabla^m \nabla^n\eta\,.
\end{equation}
All in all, in the weighted case, for the second derivative of a generic functional of the form \eqref{eq: funcUf} along a geodesic in Wasserstein space we have:
\begin{equation}\label{eq:HessUf}
\begin{split}
	\frac{\dd^2}{\dd t^2 } \mathcal{F}[\mu]&=  \int_{X} \sqrt{g} \ee^f P (\rho) \left[\left(R_{m n}-\nabla_m \nabla_n f\right)
		\nabla^n \eta \nabla^m \eta + \nabla_m \nabla_n \eta \nabla^m \nabla^n
		\eta \right] \\
		&+\int_{X} \sqrt{g} \ee^f P_2 (\rho)  (\Delta_f \eta)^2\,.	
\end{split}
\end{equation}

In the next section we will obtain a physical picture of the Bochner identities by relating them to Raychaudhuri equations and highlighting the effect of the weight function in introducing an effective notion of curvature of dimension.

\subsection{Effective dimension} 
\label{sub:eff-dim}

Let us now focus on the term $\nabla_m \nabla_n \eta \nabla^m \nabla^n \eta= \nabla_m \xi_n \nabla^m \xi^n$ appearing in both (\ref{eq:HessU}) and (\ref{eq:HessUf}). As we noticed already, the origin of these terms is from the Bochner or the related Raychaudhuri equations (App.~\ref{sec:bochner}). We can bound this term by using the inequality
\begin{equation}\label{eq:trm2>}
	 \mathrm{Tr}M^2 \geqslant \frac1n (\mathrm{Tr}M)^2
\end{equation}
which follows from the Cauchy--Schwarz inequality by considering the inner product of $M$ and $1_n$ in the space of $n$-dimensional matrices. In particular we get
\begin{equation}\label{eq:CauchySchwartznabla2eta}
	\nabla_m \nabla_n \eta \nabla^m \nabla^n \eta \geqslant \frac1n (\nabla^2 \eta)^2\,.
\end{equation}
Using this, the Raychaudhuri equation \label{eq:raychaudhuri} becomes
\begin{equation}\label{eq:ray<}
	-\nabla_\xi \theta \geqslant R_{mn}\xi^m \xi^n +\frac1n \theta^2
\end{equation}
where recall $\theta=\nabla_m \xi^m$ is the expansion. In many physics applications, actually the bound is even more stringent. The matrix $M_{mn}=\nabla_m \nabla_n \eta= \nabla_m \xi_n$ can have rank $r< n$; we can apply Cauchy--Schwarz to $M$ and the projector orthogonal to $\mathrm{ker}M$, which results in
\begin{equation}
	\frac1n \to \frac1r
\end{equation}
in both (\ref{eq:trm2>}) and (\ref{eq:ray<}). For example, in Lorentz signature, for timelike geodesics the matrix $\nabla_m \xi_n$ is orthogonal to $\xi$ itself, so it has rank $r=n-1$. For lightlike geodesics, $r=n-2$. 

We can achieve an even more dramatic change in dimension. By using the identity 
\begin{equation}\label{eq:xa}
	\frac{x_1^2}{a_1} - \frac{x_2^2}{a_2} + \frac{(x_1 + x_2)^2}{a_2 - a_1}  =
	  \frac{1}{a_2 - a_1} \left( \sqrt{\frac{a_2}{a_1}} x_1 +
	  \sqrt{\frac{a_1}{a_2}} x_2 \right)^2 
\end{equation}
 with $a_1 = - N$, $a_2 = n - N$, $x_1=-\nabla^2_f \eta$, $x_2=\nabla f \cdot \nabla \eta$, we
obtain
\begin{equation}\label{eq:xa2}
	\frac1n (\nabla^2 \eta)^2 -\frac1N(\nabla^2_f \eta)^2 -\frac1{n-N}(\nabla f \cdot \nabla \eta)^2 = -\frac{n-N}{n N}\left(\nabla^2 \eta + \frac n{n-N} \nabla f \cdot \nabla \eta\right)^2\,.
\end{equation}

For $N<0$ the right-hand side is positive. Combining this information with \eqref{eq:CauchySchwartznabla2eta}, we can bound the expression appearing in the first line of \eqref{eq:HessUf} and in the right-hand side of the weighted Bochner identity \eqref{eq:weightedBochner} as follows:

\begin{align*}
	( R_{mn}- \nabla_m \nabla_n f)\xi^m \xi^n &+ \nabla_m \xi_n \nabla^m \xi^n 
	\\ \geqslant
	&\left( R_{mn}- \nabla_m \nabla_n f + \frac1{n-N} \nabla_m f \nabla_n f\right)\xi^m \xi^n
	+ \frac1N (\nabla_f^2 \eta)^2\,.
\end{align*}
The first term on the right-hand side is $\Ricnfmn$ as defined in (\ref{eq:defRicciNf}), thus explaining its relevance in optimal transport. In particular the weighted Raychaudhuri (\ref{eq:weighted-r}) now implies
\begin{equation*}
	-\nabla_\xi \theta_f \geqslant \Ricnfmn \xi^m \xi^n +\frac1N \theta_f^2\,
\end{equation*}
with $\theta_f := - \dd^\dagger_f \xi$ as defined in (\ref{eq:theta-f}). Comparing with (\ref{eq:ray<}), we see that the dimension $n$ has now been replaced by $N$, which can thus be thought of as an \emph{effective dimension}. 

In other words, $N$ plays the role usually played by $d-1=3$ for massive geodesics or $d-2=2$ for massless geodesics in applications of the Raychaudhuri equation to $d=4$ general relativity.

\section{Shannon entropy and Einstein equations}\label{sec:entropy}

The Einstein equations can be equivalently rewritten in terms of concavity properties of appropriate entropy functionals $\mathcal{S}$ defined on space-time:
\begin{align}
	\nonumber \text{Ric} = T\qquad\Longleftrightarrow \qquad\dd_t^2\mathcal{S}  \leqslant - T  \quad &\text{with saturated inequality in the limit for measures}\\
    \label{eq: masterEq}  
 	& \text{concentrated towards Dirac deltas.}
\end{align}
To show the equivalence, we apply the methods in Sec.~\ref{sub:unwarped}.
We will first consider  1+$n$ space-times and unwarped compactifications, where \eqref{eq: masterEq} will take the form of Theorem \ref{th: Riemannian}. We then review in Sec.~\ref{sub:Lor} the more general Lorentzian case (Th.~\ref{th: Lorentzian}), before addressing general warped compactifications in Sec.~\ref{sec:warped}.

\subsection{Time$+$space and unwarped compactifications}\label{sub:unwarped}

The analysis that follows is a formal re-derivation of the results rigorously
proved in \cite{otto-villani,CEMCS-Inv2001,vonRenesseSturm-CPAM2005}  (respectively in \cite{SturmUB}) regarding the optimal transport characterization of lower (resp. upper) bounds on the Ricci curvature for smooth Riemannian manifolds.

Given a probability distribution $\mu$ with density $\rho$ as defined in \eqref{eq: density}, we can compute its Shannon entropy
\begin{eqnarray}\label{eq: Srel2}
	\mathcal{S} := \mathcal{S}[\mu |\sqrt{g}] = -\int_X \sqrt{g} \rho\ln\rho +\gamma\,,
\end{eqnarray}
where $\gamma$ is a normalization constant.
Def.~\eqref{eq: Srel2} can also be interpreted as relative entropy between $\mu$ and the uniform distribution on $X$, where ``uniform'' has to be defined with respect to the volume to have a coordinate-independent meaning. We will use these two denominations for the entropy interchangeably.
In any case, \eqref{eq: Srel2} measures how spread out $\mu$ is compared to $\sqrt{g}$. Indeed, \eqref{eq: Srel2} reaches its maximum for the uniform distribution while approaching $-\infty$ for a very localized $\mu$ approaching a delta distribution.

Specializing \eqref{eq:HessU} to $F = -\rho\ln \rho$, we then have an expression for the time evolution of $\mathcal{S}$:
\begin{equation}\label{eq:HessS}
	\frac{\dd^2}{\dd t^2} \mathcal{S}=  -\int_{X} \sqrt{g}\rho \left[R_{m n}
	\nabla^n \eta \nabla^m \eta + \nabla_m \nabla_n \eta \nabla^m \nabla^n
	\eta \right]\,,
\end{equation}
which we will use to obtain the Einstein equations from this notion of entropy.

While the discussion so far focused on the Riemannian case, where particles are transported along Riemannian geodesics, and thus it does not describe general gravitational systems, it is nevertheless sufficient to completely characterize the Einstein equations for $D$-dimensional product space-times of the form $\mathcal{M}_D = M_d\times X_n$, with product metrics
\begin{equation}\label{eq: unwarpedMet}
	\dd s^2_D = \dd s^2_{\Lambda} + \dd s^2_n\;,
\end{equation}
where $M_d$ is a $d$-dimensional vacuum (AdS$_d$, Mink$_d$, dS$_d$) with cosmological constant $\Lambda$ and $X_n$ is an $n$-dimensional space. Here $d\geqslant1$, with the $d=1$ case corresponding to an $1+n$ decomposition of the $D$-dimensional space-time:
\begin{equation}
	\dd s^2_D = -\dd t^2 +\dd s^2_n \,.
\end{equation}
In situations like these, the Riemannian geodesics on $X_n$ can immediately be lifted to geodesics on $\mathcal{M}_D$, either massive or massless, upon an appropriate identification between the ``time'' coordinate along the Riemannian geodesic and a local time-coordinate on $M_d$. The Riemannian formalism developed so far thus applies directly to massive and massless particles on product space-times, and we use this simplified scenario as a first illustration of the how the Einstein equations follow from entropy concavity, before reviewing the general Lorentzian case in Sec.~\ref{sub:Lor} and the extending to general warped products in Sec.~\ref{sec:warped}.

The $D$-dimensional Einstein equations specialized to space-times of the form \eqref{eq: unwarpedMet} are \eqref{eq:eoms0-w} \eqref{eq:eoms0-R} with $f = 0$, which read:
\begin{align}
	\Lambda		&= \frac{1}{d} \hat{T}^{(d)}\;,\label{eq:eomsUnw-w}\\
	R_{m  n}	&=  \Lambda g_{m  n} + \tilde{T}_{mn} :=\hat{T}_{mn}\label{eq:eomsUnw-R}\,,
\end{align}
where $R_{mn}$ is the Ricci tensor on $X_n$.

Equation \eqref{eq:eomsUnw-w} determines the $d$-dimensional cosmological constant, and our goal is to obtain \eqref{eq:eomsUnw-R} as a concavity equation for $\mathcal{S}$. If we think of a situation like \eqref{eq: unwarpedMet} as a compactification on $X_n$ (or a more general reduction of the higher dimensional gravitational theory), $\mathcal{S}$ has a natural interpretation as a quantification of the ignorance of a lower-dimensional observer about the internal degrees of freedom. Indeed, classically, a $d$-dimensional observer can localize a $D$-dimensional particle approximately as a point on $M_d$, but they cannot do the same on $X_n$ if the Kaluza-Klein scale is much smaller than the energies they are able to probe; they will describe such a particle in terms of a probability distribution $\mu$ on $X_n$, with $\mathcal{S}$ quantifying their uncertainty about the internal position. Similarly, not being able to measure the masses of the KK excitation beyond the compactification scale, a lower-dimensional observer can reconstruct a higher-dimensional scalar field only up to a probability distribution in the internal space. 

Crucially, the lower-dimensional observer need not to be aware of the gravitational nature of the sector they cannot probe to be able to characterize it completely. Indeed, the internal Einstein equations can be traded completely for an evolution equation for the information the lower dimensional observer has about the system, as in the following 
\begin{theorem}\label{th: Riemannian}
Let $(X,g)$ be a smooth Riemannian manifold.
	The following statements are equivalent:
	\begin{enumerate}
		\item The metric $g$ on $X$ satisfies the equation of motion 
		\begin{equation}\label{eq: thmRel0}
			R_{m n} = \hat{T}_{mn}\,.
		\end{equation}
		\item i) For any probability distribution $\mu$ on $X$, evolving along a geodesic in the space $\mathcal{P}_2(X)$ of probability distributions, with tangent vector $\dot{\mu} = -\nabla\cdot(\mu\nabla\eta)$, its Shannon entropy \eqref{eq: Srel2}(the relative entropy between $\mu$ and the volume form of $g$) satisfies
		\begin{equation}\label{eq: thmRel}
			\left. \frac{\dd^2}{\dd t^2}\right|_{t=0}\mathcal{S}\leqslant-\int_X\mu\,\hat{T}_{mn}\nabla^m\eta\nabla^n\eta\,.
		\end{equation}
		ii) In addition, the inequality \eqref{eq: thmRel} becomes saturated whenever $\mu$ is concentrated at a point, and for a suitably chosen $\eta$. Namely, for any point $x_0\in X$ and any tangent vector $\xi_0$  at $x_0$, there exists an $\eta$ such that $\nabla\eta\rvert_{x_0} = \xi_0$ and such that \eqref{eq: thmRel} becomes an equality asymptotically for distributions $\mu$ very localized at $x_0$.
		\\ 
		More precisely, for every point $x_0\in X$ there exists a function $\omega$ with $\lim_{\varepsilon \to 0} \omega(\varepsilon)=0$ such that the following holds: for any tangent vector $\xi_0$ at $x_0$ of unit norm $\|\xi_0\|=1$, there exists a smooth function $\eta$ with $\nabla \eta|_{x_0}=\xi_0$ such that
$$
\left|  \frac{\dd^2}{\dd t^2}\big|_{t=0}  \S + \int_X \mu \hat{T}_{mn} \nabla^m \eta \nabla^n \eta \right| \leq \omega(\varepsilon)
$$
for every probability measure $\mu$ supported in $B_{\varepsilon}(x_0)$.
	\end{enumerate}

	\begin{proof}
		$1 \Rightarrow 2$: We plug \eqref{eq: thmRel0} in \eqref{eq:HessS}, obtaining
		\begin{equation}\label{eq: intproof}
		\left. 	\frac{\dd^2}{\dd t^2}\right|_{t=0} \mathcal{S}=  -\int_{X} \mu \left[\hat{T}_{m n}
	\nabla^n \eta \nabla^m \eta + \nabla_m \nabla_n \eta \nabla^m \nabla^n 	\eta \right]\,.
		\end{equation}
		Then i) follows from the fact that $\nabla_m \nabla_n \eta \nabla^m \nabla^n\eta$ is non-negative. For ii), in the limit where $\mu\to \delta(x-x_0) $ the integral \eqref{eq: intproof} localizes at $x =x_0$; using Lemma \ref{lemma:villani-p402} with $f = 0$ the second term vanishes and we get the result.

		$2\Rightarrow 1$: Combining \eqref{eq: thmRel} and \eqref{eq:HessS} we obtain
		\begin{equation}\label{eq: EmnRel}
		\int_X \mu \left(E_{m n}\nabla^m\eta\nabla^n\eta + \nabla_m \nabla_n \eta \nabla^m \nabla^n 	\eta \right) \geqslant0	
		\end{equation}
		where we wrote \eqref{eq: thmRel0} as $E_{mn} = 0$. Again, in the limit $\mu \to \delta(x-x_0) $ the integral \eqref{eq: EmnRel} localizes at $x_0$. For an arbitrary tangent vector $\xi_0$ at $x_0$, using Lemma \ref{lemma:villani-p402} we then get $E_{m n}\xi_0^m\xi_0^n \geqslant 0$. Since $x_0$ and $\xi_0$ are arbitrary this implies
		\begin{equation}\label{eq:Emng0}
			E_{m n}\geqslant 0\,.
		\end{equation} 
		Then, since by hypothesis for $\mu$ localized at any $x_0$ the inequality can be saturated  by a certain $\eta$ such that $\nabla\eta (x_0)=\xi_0$, with arbitrary $\xi_0$, for such a choice of $\eta$ \eqref{eq: EmnRel} implies
		\begin{equation}
			\left(E_{m n}\xi_0^m\xi_0^n + \nabla^m\xi_0^n  \nabla_m\xi_{0 n}\right)(x_0) = 0 \,.
		\end{equation}
		But from \eqref{eq:Emng0} both terms are non-negative, so for the equality to hold they have to vanish independently. Arbitrariness of $x_0$ and $\xi_0$ then ensures $E_{mn}=0$.
	\end{proof}

\end{theorem}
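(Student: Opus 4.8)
The plan is to push both implications through the second–variation identity \eqref{eq:HessS} for the Shannon entropy, exploiting that the ``Hessian term'' $\nabla_m\nabla_n\eta\,\nabla^m\nabla^n\eta=|\Hess\,\eta|^2$ is manifestly nonnegative and can be made negligible by concentrating the measure. The key tool throughout is the localization mechanism of Lemma~\ref{lemma:villani-p402}: near a point $x_0$ I would work in geodesic normal coordinates and choose the potential $\eta$ to coincide near $x_0$ with the affine function $x\mapsto\langle\xi_0,x\rangle$, so that $\nabla\eta(x_0)=\xi_0$ and, since the Christoffel symbols vanish at $x_0$, also $\Hess\,\eta(x_0)=0$; any probability measure $\mu_0$ supported in $B_\varepsilon(x_0)$ together with the gradient flow of this $\eta$ then defines a short Wasserstein geodesic to which \eqref{eq:HessS} applies (its potential solving Hamilton--Jacobi \eqref{eq: geoW} for short time).

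For $1\Rightarrow2$, substitute the equation of motion \eqref{eq: thmRel0} into \eqref{eq:HessS}. Part (i) is immediate: discarding the nonnegative term $|\Hess\,\eta|^2$ only lowers the right-hand side, which is exactly \eqref{eq: thmRel}. For part (ii), with the affine $\eta$ above one has $\bigl|\frac{\dd^2}{\dd t^2}\big|_{t=0}\mathcal{S}+\int_X\mu\,\hat T_{mn}\nabla^m\eta\nabla^n\eta\bigr|=\int_X\mu\,|\Hess\,\eta|^2\le\sup_{B_\varepsilon(x_0)}|\Hess\,\eta|^2$, which tends to $0$ as $\varepsilon\to0$ since $\Hess\,\eta$ is continuous and vanishes at $x_0$. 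To obtain the $\xi_0$–uniform modulus $\omega$ announced in the theorem, one picks the family $\eta=\eta_{\xi_0}$ depending continuously on $\xi_0\in S^{n-1}$ and invokes compactness of the sphere.

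For $2\Rightarrow1$, set $E_{mn}:=R_{mn}-\hat T_{mn}$ and combine \eqref{eq: thmRel} with \eqref{eq:HessS}, obtaining $\int_X\mu\bigl(E_{mn}\nabla^m\eta\nabla^n\eta+|\Hess\,\eta|^2\bigr)\ge0$ for every admissible $\mu,\eta$. Localizing at an arbitrary $x_0$ with the affine $\eta$ (Hessian vanishing, gradient $\xi_0$) yields $E_{mn}(x_0)\xi_0^m\xi_0^n\ge0$, hence $E_{mn}\ge0$ everywhere as a quadratic form. Now invoke the saturation hypothesis (ii): for the $\eta$ realizing equality and $\mu$ concentrated at $x_0$, the displayed inequality becomes an equality in the limit, i.e. $E_{mn}(x_0)\xi_0^m\xi_0^n+|\Hess\,\eta(x_0)|^2=0$; both summands are nonnegative, so each vanishes and $E_{mn}(x_0)\xi_0^m\xi_0^n=0$. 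Since $x_0$ and $\xi_0$ are arbitrary, polarization of the symmetric tensor gives $E_{mn}\equiv0$, which is \eqref{eq: thmRel0}.

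The main obstacle is making the localization step rigorous at the level claimed: one must verify that the chosen $\eta$ genuinely generates a Wasserstein geodesic on which \eqref{eq:HessS} is valid, and that the convergences $\int_{B_\varepsilon(x_0)}\mu\,(\cdots)\to(\cdots)(x_0)$ hold \emph{uniformly} over all probability measures supported in $B_\varepsilon(x_0)$ and over the compact set of unit directions $\xi_0$, so that a single modulus $\omega(\varepsilon)$ depending on $x_0$ alone works. Both reduce to continuity of the relevant integrands and compactness of $S^{n-1}$, but this is precisely what upgrades the naive Dirac–delta heuristic to the quantitative statement of the theorem.
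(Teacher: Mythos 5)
Your proof is correct and follows essentially the same route as the paper: plug the equation of motion (or the inequality) into the second-variation formula \eqref{eq:HessS}, drop the nonnegative Hessian term for the inequality direction, and use the localization mechanism of Lemma~\ref{lemma:villani-p402} (which you re-derive by taking $\eta$ affine in normal coordinates, giving $\nabla\eta(x_0)=\xi_0$ and $\Hess\,\eta(x_0)=0$) to first obtain $E_{mn}\geqslant 0$ and then, from the saturation hypothesis, $E_{mn}\equiv 0$. Your added remarks on uniformity over $\xi_0\in S^{n-1}$ and on checking that the chosen $\eta$ actually generates a short Wasserstein geodesic are worthwhile caveats, but do not change the underlying argument.
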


\subsection{Einstein's equations in Lorentzian manifolds}\label{sub:Lor}

In this section we show how also in Lorentzian signature the Einstein equations can be rewritten in terms of concavity properties of entropy functionals, characterizing in this way the whole-space time (and not just the internal part, as in vacuum compactifications). The analysis that follows is a formal re-derivation of the results rigorously proved in \cite{mccann,mondino-suhr}.

On the whole $D$-dimensional Lorentzian space-time,\footnote{We can take $D = 4$ if we are working in a 4-dimensional Einstein theory or $D = 10, 11$ for string/M-theory.} we seek to reproduce the Einstein equations, in the form \eqref{eq:RMN}
\begin{equation}
	R_{M N} = \hat{T}_{MN}\,,
\end{equation}
from a concavity property of an appropriate notion of entropy for test particles. Since there is no analogue of warping, it is natural to guess that the relevant quantity will be the Shannon entropy, similarly to the unwarped Riemannian analysis of Sec.~\ref{sub:unwarped}. This guess will turn out to be correct, but an important difference due to the signature will arise in the need to restrict the transport only along physical geodesics. In the following we will focus on the massive (time-like) case. In addition, even in this class, it is not guaranteed the squares of tensors appearing in the expression have definite sign (so that they can be discarded in the derivation of inequalities) and this technical difference will require us to carefully define the transport by switching to a more general non-linear framework. Let us see in practice how this works.

Given a time-like curve $\gamma: [0,1]\to M$, define the $A_p$ actions
\begin{equation}\label{eq: Ap}
	A_p[\gamma]:= -\frac{1}{p} \int_0^1 \dd\sigma \left(- g_{MN}\dot{\gamma}^M\dot{\gamma}^N\right)^{p/2}
\end{equation}
where $p\in(0,1)$.
The \emph{cost} of moving a particle from $x$ to $y$ is then defined to be
\begin{equation}\label{eq: costM}
	c_p(x,y):= \text{inf}\left\{ A_p[\gamma] \ |\ \gamma(0) = x, \gamma(1) = y\right\}\,.
\end{equation}
The minus sign in \eqref{eq: Ap} is introduced so that the cost \eqref{eq: costM} can still be formulated as a minimization problem. This is just as for the usual particle action in curved space, where $S/m=-\tau= -\int \dd\sigma \left(- g_{MN}\dot{\gamma}^M\dot{\gamma}^N\right)^{1/2}$, which would be recovered for $p\to 1$. 
Albeit here we are restricting only to $p\in (0,1)$, since for these values of $p$ the $A_p$ actions will have good convexity properties that we can exploit in our derivations, this technical choice does not change the physical picture. Indeed the extremizers of $A_p$ for $p\in (0,1)$ coincide with the extremizers of $A_1 = -\tau$ parametrized such that the tangent vector along a geodesic is parallel transported.  This is similar to how in the Riemannian case extremizers of the energy functional $E[\gamma] := \int_0^1 |\dot\gamma|^2$ coincide with extremizers of the length functional $L[\gamma] := \int_0^1 |\dot\gamma|$ for a preferred parametrization of the coordinate along $\gamma$.

As in the definition of Wasserstein distance \eqref{eq: def Wass}, we can lift the notion of cost \eqref{eq: costM} for moving massive particles in the space-time to a notion of cost for moving distributions of massive particles, by defining the family of functionals
\begin{equation}\label{eq: costPM}
	\mathcal{C}_p(\mu_0,\mu_1):=\inf\left\{ \int_{M\times M} c_p(x,y)\,\dd\pi(x,y) \ : \ \pi \ \textrm{coupling of} \ \mu_0 \ \textrm{and} \ \mu_1\right\},
\end{equation}
where, as in the Riemannian case \eqref{eq: def Wass}, a \emph{coupling} is a probability distribution $\pi$ on $M\times M$ whose marginals are equal to $\mu_0$ and $\mu_1$, which are Borel probability measures with compact support, $\mu_0,\mu_1\in \mathcal{P}_{c}(M)$ in short.

Notice that $(- \mathcal{C}_p(\mu_0, \mu_1))^{1/p}$ is non-negative and satisfies a reverse triangle inequality; thus, in a broad sense, it is lifting the Lorentzian distance from $M$ to $\mathcal{P}_{c}(M)$.
This kind of $p$-Lorentz-Wasserstein distances have been studied in \cite{EcksteinMiller, mccann, mondino-suhr, cavalletti-mondino, cavalletti-mondino-survey}.

The non-linearities introduced by the choice $p\neq 1$ will enter in the various expressions governing the evolution of a generic probability distribution $\mu$ through a non-linear redefinition of the gradient. Specifically, in terms of the conjugate exponent $q$ to $p$:
\begin{equation}
\frac{1}{p}+ \frac{1}{q} = 1\,,
\end{equation}
we define the $q$-gradient of a function $h$ with time-like gradient as:
\begin{equation}\label{eq:q-grad}
	\nabla_M^q h := - |\nabla h|^{q-2}\nabla_M h\qquad\qquad\text{with}\quad |\nabla h| := \sqrt{-g_{M N} \nabla^Nh \nabla^N h}\,.
\end{equation}
Then, the continuity \eqref{eq: continuity} and geodesic \eqref{eq: geoW} equations are modified, respectively, as
\begin{equation}\label{eq:contGeoLor}
	\dot{\mu}=-\nabla\cdot\left(\mu \nabla^q \eta \right)\qquad \dot{\eta}= -\frac{1}{q}|\nabla\eta|^q\,.
\end{equation}
With these tools we can now compute derivatives of functionals along massive geodesics. The derivation is technically more involved as a consequence of the non-linearity, and we quickly sketch the relevant formulas in App.~\ref{sec:appLor}.

Given a probability distribution of massive particles $\mu$ on a space-time $M$, we define its Shannon entropy to be
\begin{eqnarray}\label{eq: SrelLor}
	\mathcal{S} := \mathcal{S}[\mu |\sqrt{-g}] = -\int_M \sqrt{-g} \rho\ln\rho \qquad\qquad\text{with}\quad \mu := \sqrt{-g} \rho\,.
\end{eqnarray}
This is a measure of how much the distribution $\mu$ is spread in \emph{space-time}, compared to the uniform distribution $\sqrt{-g}$.
Using the formulas in App.~\ref{sec:appLor}, we obtain for its second derivative along time-like geodesics the expression
\begin{equation}\label{eq:HessSLor}
	 \frac{\dd^2}{\dd t^2} \mathcal{S}=  -\int_{M} \sqrt{-g}\rho \left[R_{M N}
	\nabla^M_q \eta \nabla^N_q \eta + \nabla_M \nabla^q_N \eta \nabla^M \nabla_q^N
	\eta \right]\,.
\end{equation}
The main difference compared to the Riemannian formula \eqref{eq:HessS} is the appearance of the non-linear $q$-gradients. Equipped with formula \eqref{eq:HessSLor} we can now characterize the Einstein equation as in the following
\begin{theorem}\label{th: Lorentzian}
Let $(M,g)$ be a smooth space-time and fix $p\in (0,1)$.
	The following statements are equivalent:
	\begin{enumerate}
		\item The Lorentzian metric $g$ on the space-time $M$ satisfies the equation of motion
		\begin{equation}\label{eq: thmRel0Lor}
			R_{M N} = \hat{T}_{M N}\,.
		\end{equation}
		\item i) For any probability distribution $\mu$ on $M$ evolving along a time-like geodesic (w.r.t. $\mathcal{C}_{p}$) in the space of probability distributions on $M$, with tangent vector $\dot{\mu} = -\nabla\cdot\left(\mu \nabla^q \eta \right)$, its Shannon entropy \eqref{eq: SrelLor}(the relative entropy between $\mu$ and the volume form of $g$) satisfies
		\begin{equation}\label{eq: thmRelLor}
			\left.\frac{\dd^2}{\dd t^2}\right|_{t=0}\mathcal{S}\leqslant-\int_M\mu\,\hat{T}_{MN}\nabla_q^M\eta\nabla_q^N\eta\,.
		\end{equation}
		ii) In addition, the inequality \eqref{eq: thmRelLor} becomes saturated whenever $\mu$ is concentrated at a point, and for a suitably chosen $\eta$. Namely, for any point $x_0\in M$ and any time-like tangent vector $\xi_0$  at $x_0$, there exists an $\eta$ such that $\nabla^q\eta\rvert_{x_0} = \xi_0$ and such that \eqref{eq: thmRelLor} becomes an equality asymptotically for distributions $\mu$ very localized at $x_0$.
	\end{enumerate}

	\begin{proof}
		The proof closely follows the one of the Riemannian theorem \ref{th: Riemannian}, and we highlight here only the differences.
		An important fact we used to prove both implications in the Riemannian case is that the quantity $\nabla_m\eta\nabla_n\eta\nabla^m\eta\nabla^n\eta$ appearing in the integrand of the second derivative of the entropy was manifestly non-negative.  In the Lorentzian expression \eqref{eq:HessSLor} this is replaced by $\nabla_M \nabla^q_N \eta \nabla^M \nabla_q^N\eta$, which is not immediately so. However, using Lemma \ref{lemma: q-grad} this term is non-negative for $q<1$, and so in particular for $p\in (0,1)$.
		Moreover, a Lorentzian counterpart of (the Riemannian) Lemma \ref{lemma:villani-p402} holds (see \cite[Lemma 3.2 (1)] {mondino-suhr} or \cite[Lemma 8.3]{mccann}). 
We can thus follow the proof of the Riemannian theorem \ref{th: Riemannian}, \emph{mutatis mutandis}. 
			\end{proof}
\end{theorem}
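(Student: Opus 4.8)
The plan is to reduce everything to the master identity \eqref{eq:HessSLor} for the second time-derivative of the Shannon entropy \eqref{eq: SrelLor} along a $\mathcal{C}_p$-timelike geodesic in the space of probability measures, and then run, \emph{mutatis mutandis}, the same two-way argument as in the Riemannian Theorem~\ref{th: Riemannian}, with the linear continuity/Hamilton--Jacobi pair replaced by the nonlinear one \eqref{eq:contGeoLor}. The only genuinely new ingredients are two Lorentzian facts that substitute for the Riemannian ``$\nabla\nabla\eta\,\nabla\nabla\eta\ge0$'' and for the test-function Lemma~\ref{lemma:villani-p402}: namely Lemma~\ref{lemma: q-grad}, which gives $\nabla_M\nabla^q_N\eta\,\nabla^M\nabla_q^N\eta\ge0$ whenever $q<1$ (equivalently $p\in(0,1)$, which is precisely why we restrict to that range), and the Lorentzian counterpart of Lemma~\ref{lemma:villani-p402} (cf.\ \cite[Lemma~3.2(1)]{mondino-suhr}, \cite[Lemma~8.3]{mccann}), which for any point $x_0$ and timelike $\xi_0$ produces a smooth $\eta$ with $\nabla^q\eta|_{x_0}=\xi_0$ along which the Hessian-type term $\nabla_M\nabla^q_N\eta\,\nabla^M\nabla_q^N\eta$ vanishes at $x_0$.

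For the implication $1\Rightarrow2$ I would substitute the equation of motion $R_{MN}=\hat T_{MN}$ (i.e.\ \eqref{eq: thmRel0Lor}, equivalently \eqref{eq:RMN}) into \eqref{eq:HessSLor}, obtaining
\[
\left.\frac{\dd^2}{\dd t^2}\right|_{t=0}\mathcal{S} = -\int_M \sqrt{-g}\,\rho\left[\hat T_{MN}\nabla^M_q\eta\,\nabla^N_q\eta + \nabla_M\nabla^q_N\eta\,\nabla^M\nabla_q^N\eta\right].
\]
Dropping the non-negative Hessian-type term (Lemma~\ref{lemma: q-grad}) gives the inequality \eqref{eq: thmRelLor}, which is part i). For part ii) I would let $\mu$ concentrate at $x_0$, so the integral localizes to the integrand evaluated at $x_0$; choosing $\eta$ as in the Lorentzian version of Lemma~\ref{lemma:villani-p402} makes the Hessian-type term vanish there while $\nabla^q\eta|_{x_0}=\xi_0$, so only $-\hat T_{MN}\xi_0^M\xi_0^N$ survives and \eqref{eq: thmRelLor} becomes an equality in the limit, with the $\omega(\varepsilon)$-error coming from the modulus of continuity of the (now continuous) integrand on $B_\varepsilon(x_0)$.

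For $2\Rightarrow1$, write $E_{MN}:=R_{MN}-\hat T_{MN}$ and combine \eqref{eq: thmRelLor} with \eqref{eq:HessSLor} to get
\[
\int_M \sqrt{-g}\,\rho\left(E_{MN}\nabla^M_q\eta\,\nabla^N_q\eta + \nabla_M\nabla^q_N\eta\,\nabla^M\nabla_q^N\eta\right)\geqslant 0
\]
for every timelike transport. Concentrating $\mu$ at an arbitrary $x_0$ and using the Lorentzian form of Lemma~\ref{lemma:villani-p402} to kill the second term along an $\eta$ with $\nabla^q\eta|_{x_0}=\xi_0$ yields $E_{MN}\xi_0^M\xi_0^N\geqslant0$ for all timelike $\xi_0$. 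Then the saturation hypothesis ii) supplies, for each such $x_0,\xi_0$, an $\eta$ making the displayed integral vanish in the limit, hence $E_{MN}\xi_0^M\xi_0^N + \nabla_M\nabla^q_N\eta\,\nabla^M\nabla_q^N\eta|_{x_0}=0$; since both summands are non-negative (the first by the previous step, the second by Lemma~\ref{lemma: q-grad}) they must vanish separately, so $E_{MN}\xi_0^M\xi_0^N=0$ for every timelike $\xi_0$ at every point. Because the timelike cone is open and nonempty in each tangent space, a quadratic form vanishing on it is identically zero, so $E_{MN}\equiv0$, which is \eqref{eq: thmRel0Lor}.

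The main obstacle is not the bookkeeping above — which transcribes the Riemannian proof verbatim — but justifying the two Lorentzian inputs within the nonlinear $q$-transport framework: that $\nabla_M\nabla^q_N\eta\,\nabla^M\nabla_q^N\eta$ is non-negative (Lemma~\ref{lemma: q-grad}), which fails outside $q<1$ and forces $p\in(0,1)$, and the existence of test functions with prescribed $q$-gradient at a point along which this term vanishes. These are where the signature and the non-linearity genuinely bite; once they are in hand, the rest of the proof goes through as in Theorem~\ref{th: Riemannian}.
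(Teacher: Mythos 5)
Your proposal is correct and follows essentially the same route as the paper's proof: it expands the paper's terse ``\emph{mutatis mutandis}'' into the actual two-way argument of Theorem~\ref{th: Riemannian}, with the same two Lorentzian substitutions (Lemma~\ref{lemma: q-grad} for $q<1\Leftrightarrow p\in(0,1)$ and the Lorentzian counterpart of Lemma~\ref{lemma:villani-p402} from \cite{mondino-suhr,mccann}). Your explicit observation in $2\Rightarrow1$ that $E_{MN}\xi^M\xi^N=0$ on the open timelike cone forces $E_{MN}\equiv0$ (a degree-two polynomial vanishing on a nonempty open set vanishes identically) is a step the paper leaves implicit, since in the Lorentzian case one cannot conclude $E_{MN}\geqslant0$ as a matrix statement from testing only against timelike directions; your handling of this is correct and slightly more careful than the paper's shorthand.
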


\section{Tsallis entropy and warped compactifications}\label{sec:warped}

We are now ready to describe one of our main results: the reformulation of the equations (\ref{eq:eoms}) for warped compactifications in terms of optimal transport. For this we will need the notion of \emph{Tsallis entropy}, which as we review in section \ref{sub:tsallis} is a natural generalization of the more usual Shannon one. In section \ref{sub:external-shannon} we show how to reformulate (\ref{eq:eoms-warp}) in terms of a relative entropy, while in section \ref{sub:tsallis-einstein} we show how (\ref{eq:eoms-Ricnf}) can be reformulated in terms of Tsallis entropy, along the lines of the previous section.\footnote{It would be interesting to try to reformulate in a similar fashion the equations of motion for all other fields as well; we will not attempt this here.}

\subsection{Various definitions of entropy}\label{sub:tsallis}

We have already used the definition (\ref{eq: Srel2}) of Shannon entropy ${\mathcal S}$ associated to a probability density $\rho$. This is famously related to the Gibbs entropy, to which it reduces when $\rho$ is defined on phase space. However, it is natural to wonder what properties single out ${\mathcal S}$ among the possible functionals of the form (\ref{eq:F[mu]}).

A set of such properties was provided by Khinchin \cite{khinchin} and Faddeev \cite{faddeev-entropy} for the case of probability distributions $p=(p_1,\,\ldots,\,p_n)$ on finite spaces of any cardinality $n$. Consider a function ${\mathcal S}(p)= {\mathcal S}(p_1,\,\ldots,\,p_n)$.
If we demand that
\begin{enumerate}
	\item (Continuity) In the $n=2$ case, ${\mathcal S}(p,1-p)$ is continuous in $p\in[0,1]$;
	\item (Symmetry) ${\mathcal S}$ is a completely symmetric function of its entries (i.e.~it remains the same if any two of the $p_i$ are exchanged);
	\item (Additivity) ${\mathcal S}(t p_1,(1-t)p_1,p_2,\,\ldots,\,p_n)= {\mathcal S}(p_1,\,\ldots,\,p_n)+p_1 {\mathcal S}(t,1-t)$, for any $t\in [0,1]$;
\end{enumerate}
then it can be shown that ${\mathcal S}$ is proportional to the discrete Shannon entropy
\begin{equation}\label{eq:shannon-discrete}
	{\mathcal S}= -\sum_i p_i \log_2 p_i + \gamma\,.
\end{equation}
The constant $\gamma$ can be fixed by also demanding a normalization, such as ${\mathcal S}(1/2,1/2)=1$.

The last property implies the more general
\begin{equation}\label{eq:additivity}
\begin{split}
	{\mathcal S}(p_1 q_{11},\,\ldots,\,p_1 q_{1k_1}, p_2 q_{21},\,&\ldots,\,p_2 q_{2k_2},\,\ldots,\, p_n q_{n1},\,\ldots,\,p_n q_{nk_n})\\&= {\mathcal S}(p_1,\,\ldots,\,p_n)+\sum_i p_i {\mathcal S}(q_{i1},\,\ldots,\,q_{ik_i})\,.
\end{split}
\end{equation}
The particular case $q_{i j}=q_j$ yields the property that the entropy of a direct product of two probability distributions, which describes two independent events, is the sum of the entropy for the two events. This is the usual \emph{extensivity} property. Symbolically we can write
\begin{equation}\label{eq:extensive}
	{\mathcal S}(p\times q) = {\mathcal S}(p) + {\mathcal S}(q)\,,
\end{equation}
where we defined $p\times q= (p_1 q_1,\,\ldots,\,p_1 q_k,p_2 q_1$$,\,\ldots,\,p_2 q_k,\,\ldots,\,p_n q_1,\,\ldots,\,p_n q_k)$.

The idea of the proof that the axioms above lead to (\ref{eq:shannon-discrete}) is that ${\mathcal S}(p_1,\,\ldots,\,p_n)$ can be reduced to the $n=2$ case using the Additivity axiom. (\ref{eq:additivity}) also gives ${\mathcal S}(r/s,1-r/s)=F(s)-r/s F(r)+(1-r/s)F(s-r)$, where $F(n):= {\mathcal S}(1/n,\,\ldots,\,1/n)$. Using (\ref{eq:additivity}) and the Continuity axiom one finds $F(nm)=F(n)+ F(m)$ and $\lim_{n\to \infty} (F(n+1)-F(n))=0$. One can prove that this implies $F(n)\propto \log(n)$ \cite[Lemma, Sec.~1]{renyi}. Collecting all these observations one arrives at the Shannon entropy (\ref{eq:shannon-discrete}).

(\ref{eq:extensive}) is weaker than (\ref{eq:additivity}) and than the Additivity axiom; indeed there exist additional entropies that satisfy Continuity, Symmetry and (\ref{eq:extensive}), such as the \emph{R\'enyi entropy} \cite{renyi}
\begin{equation}
	s_\alpha := \frac1{1-\alpha}\log_2 \sum_i p_i^\alpha\,.
\end{equation}

If one replaces the Additivity axiom with \cite{suyari,furuichi}
\begin{enumerate}
	\item [3'.] (Generalized Additivity) ${\mathcal S}(t p_1,(1-t)p_1,p_2,\,\ldots,\,p_n)= {\mathcal S}(p_1,\,\ldots,\,p_n)+p_1^\alpha {\mathcal S}(t,1-t)$
\end{enumerate}
then instead of the Shannon entropy one gets the \emph{Tsallis entropy} \cite{tsallis}
\begin{equation}\label{eq:tsallis-discrete}
	{\mathcal S}_\alpha = \frac1{\alpha-1}\left(1-\sum_i p_i^\alpha\right) \,.
\end{equation}
The overall constant is chosen such that the limit $\alpha\to 1$ reduces to (\ref{eq:shannon-discrete}). Notice that 
\begin{equation}\label{eq:tsallis-renyi}
	{\mathcal S}_\alpha= \frac1{\alpha-1} \left(2^{(1- \alpha) s_\alpha}-1\right)\,.
\end{equation}

(\ref{eq:tsallis-discrete}) was originally introduced in the hope of describing distributions beyond the usual Boltzmann one, for examples with longer tails. It is extremized in the equiprobable case $p_i= 1/n$; this extremum is a maximum for $\alpha>0$, a minimum for $\alpha<0$. Notice however that Generalized Additivity means that (\ref{eq:additivity}) also needs to be modified by $p_i\to p_i^\alpha$ in the second term in the right-hand side, and that in turn means that the extensivity property (\ref{eq:extensive}) is no longer satisfied: this is also evident from the relation (\ref{eq:tsallis-renyi}) with the R\'enyi entropy, which is extensive. Rather we have
\begin{equation}
	{\mathcal S}_\alpha(p\times q) = {\mathcal S}_\alpha(p) + {\mathcal S}_\alpha(q) + (1-\alpha) {\mathcal S}_\alpha(p) {\mathcal S}_\alpha(q)\,.
\end{equation}

A reformulation of these characterizations was suggested in \cite{baez-fritz-leinster}. To any $f$, a probability-preserving map between two sets with probability distributions $p$ and $q$, one associates a number $F(f)$ obeying three axioms called Functoriality, Linearity, Continuity. It can then be proven that $F(f)= {\mathcal S}(p)- {\mathcal S}(q)$, where ${\mathcal S}$ is again proportional to the Shannon entropy. Thus the function $F$ quantifies the loss of information associated to the map $f$.
If Linearity is replaced by a different Homogeneity axiom, the Tsallis entropy (\ref{eq:tsallis-discrete}) is recovered.

\subsection{Warping equation and relative entropy}\label{sub:external-shannon}
We begin our reformulation with the equation (\ref{eq:eoms-warp}) for the warping.

We can think of the warping as defining a measure on $(X,g_n)$, with density $\ee^f$ with respect to the distribution $\sqrt{g} \dd x^n$. We denote by $\mathfrak{f}:= \ee^f \sqrt{g} \,\dd x^n$ the corresponding measure.

As in \eqref{eq: Srel2}, we can define its relative entropy compared to $\sqrt{g}$ as\footnote{We could simply call $\mathcal{S}_f$ a Shannon entropy; however, beginning with section \ref{sub:weighted-measures} we have seen that the natural measure in our context is the weighted $\ee^f \sqrt{g}$, not the usual Riemannian $\sqrt{g}$. For this reason we prefer calling $\mathcal{S}_f$ a relative entropy.}
\begin{equation}
	\mathcal{S}_f := -\int_X\sqrt{g} \ee^f f+\gamma\,
\end{equation}
if $f$ is integrable on $(X, \mathfrak{f})$ (we used that $\rho = \ee^f$),  and  $+\infty$ otherwise.
Now, assume that $\mathfrak{f}$ changes in time, with velocity $\dot{\mathfrak{f}}$ with compact support (or, fast decreasing at infinity). Then, applying equation \eqref{eq: der1U} with $P(\rho) = - \rho = -\ee^f$ we get
\begin{equation}
	\left.\frac{\dd}{\dd t}\right|_{t=0} \mathcal{S}_f = -\int_X \sqrt{g}\ee^f \Delta\eta  = -\int_X \sqrt{g} \eta \Delta(\ee^f)\,.
\end{equation}
Comparing with \eqref{eq: prodW} we see that the right hand side is the scalar product
\begin{equation}
	\left.\frac{\dd}{\dd t}\right|_{t=0}\mathcal{S}_f =\langle \dot{\mathfrak{f}},  -\sqrt{g} \Delta(\ee^f)  \rangle_W\,.
\end{equation}
Comparing with \eqref{eq:gradrelation} we have obtained
\begin{equation}
	\nabla_W \mathcal{S}_f = -\sqrt{g} \Delta(\ee^f)
\end{equation}
where $\nabla_W$ denotes the gradient in Wasserstein space $\mathcal{P}_2(X)$. 
With this relation, we can finally write the warping equation 
\eqref{eq:eoms-warp} as  
\begin{equation}\label{eq:gradSf}
	\nabla_W \mathcal{S}_f  = -(D-2)\sqrt{g}\ee^f	\left[\frac{1}{d} \hat{T}^{(d)}-\Lambda\right]\,.
\end{equation}
To summarize, the warping equation \eqref{eq:gradSf} fixes the warping by constraining the gradients of its relative entropy compared to the Riemannian volume form.

\subsection{Internal Einstein equations as entropy concavity}
\label{sub:tsallis-einstein}

We now turn to the internal equation (\ref{eq:eoms-Ricnf}). We consider the Tsallis entropy (of $\mu = \rho  \sqrt{g}\ee^f$ with respect to the reference $\sqrt{g}\ee^f$):
\begin{equation}\label{eq:tsallis-int}
	\mathcal{S}_\alpha = \frac1{\alpha-1}\left(1-\int_{M_n} \sqrt{g}\ee^f \rho^\alpha\right) \,.
\end{equation}
Since we integrate only along $M_n$, this is measuring our ignorance about the internal position of a particle. If it is massless and moves geodesically, then its internal trajectory will follow an internal geodesic (App.~\ref{sec:geodesics}). We will show now that (\ref{eq:eoms0-R}) is equivalent to an equation about the second time derivative of $\mathcal{S}_\alpha$ for a probability distribution of such particles.

\begin{theorem}\label{th: comp}
Let $(X, g)$ be a smooth  Riemannian manifold. 
	The following statements are equivalent:
	\begin{enumerate}
		\item The Ricci tensor of $g$ satisfies the equation of motion (\ref{eq:eoms-Ricnf}).
		\item i) For any probability distribution $\mu$ on  $X$ moving along a geodesic in the space $\mathcal{P}_2(X)$ of probability distributions, the Tsallis entropy (\ref{eq:tsallis-int}) with $\alpha=\frac{N-1}N$ satisfies
		\begin{equation}\label{eq:d2-tsallis}
			\left.\frac{\dd^2}{\dd t^2}\right|_{t=0} \mathcal{S}_\alpha \leqslant -\int_X\sqrt{g}\ee^f \rho^{\frac{N-1}N}\,(\Lambda g_{mn}+ \tilde{T}_{mn})\nabla^m\eta\nabla^m\eta\,.
		\end{equation}
		where $\dot{\mu} = -\nabla\cdot(\mu\nabla\eta)$. 
		\\ii) In addition, the inequality in (\ref{eq:d2-tsallis}) becomes saturated if $\mu$ is concentrated at a point, and for a suitably chosen $\eta$. Namely, for any point $x_0\in X$ and any tangent vector $\xi_0$  at $x_0$, there exists an $\eta$ such that $\nabla\eta\rvert_{x_0} = \xi_0$ such that (\ref{eq:d2-tsallis}) becomes an equality asymptotically for distributions $\mu$ very localized at $x_0$.
		\\
	  More precisely, for every point $x_0\in X$ there exists a function $\omega$ with $\lim_{\varepsilon \to 0} \omega(\varepsilon)=0$ such that the following holds: for any tangent vector $\xi_0$ at $x_0$ of unit norm $\|\xi_0\|=1$, there exists a smooth function $\eta$ with $\nabla \eta|_{x_0}=\xi_0$ such that
$$
\left|  \frac{\dd^2}{\dd t^2}\big|_{t=0} \mathcal{S}_\alpha + \int_X\sqrt{g}\ee^f \rho^{\frac{N-1}N}\,(\Lambda g_{mn}+ \tilde{T}_{mn})\nabla^m\eta\nabla^m\eta \right| \leq \omega(\varepsilon)
$$
for every probability measure $\mu$ supported in $B_{\varepsilon}(x_0)$.
	\end{enumerate}
\end{theorem}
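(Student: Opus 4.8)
The plan is to transcribe the proof of Theorem~\ref{th: Riemannian}, replacing the Shannon entropy by the Tsallis functional, the formula \eqref{eq:HessS} by its weighted version \eqref{eq:HessUf}, and then feeding in the effective‑dimension manipulations of Section~\ref{sub:eff-dim} to convert the Hessian‑square term into $\Ricdf$. First I would record that, since $\int_{M_n}\sqrt g\,\ee^f\rho=1$, the Tsallis entropy \eqref{eq:tsallis-int} is of the form \eqref{eq: funcUf} with $F(\rho)=\frac{1}{\alpha-1}(\rho-\rho^\alpha)$, so a one‑line computation gives $P(\rho)=\rho F'(\rho)-F(\rho)=-\rho^\alpha$ and $P_2(\rho)=\rho P'(\rho)-P(\rho)=(1-\alpha)\rho^\alpha$, which for $\alpha=\frac{N-1}{N}$ equals $\frac1N\rho^\alpha$. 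Substituting these into \eqref{eq:HessUf} yields, along a Wasserstein geodesic with $\dot\mu=-\nabla\cdot(\mu\nabla\eta)$, the exact identity
\begin{equation*}
\frac{\dd^2}{\dd t^2}\mathcal{S}_\alpha=-\int_{M_n}\sqrt g\,\ee^f\rho^{\alpha}\Big[(R_{mn}-\nabla_m\nabla_n f)\nabla^m\eta\,\nabla^n\eta+\nabla_m\nabla_n\eta\,\nabla^m\nabla^n\eta-\frac1N(\Delta_f\eta)^2\Big].
\end{equation*}

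Next, inside the bracket I would apply the Cauchy--Schwarz bound \eqref{eq:CauchySchwartznabla2eta}, $\nabla_m\nabla_n\eta\,\nabla^m\nabla^n\eta\geqslant\frac1n(\nabla^2\eta)^2$, followed by the algebraic identity \eqref{eq:xa2} with $x_1=-\nabla^2_f\eta$, $x_2=\nabla f\cdot\nabla\eta$, $a_1=-N$, $a_2=n-N$. Because $N<0$, the quadratic remainder in \eqref{eq:xa2} enters with a favourable sign, so the bracket is bounded below by $\big(R_{mn}-\nabla_m\nabla_n f+\frac1{n-N}\nabla_m f\,\nabla_n f\big)\nabla^m\eta\,\nabla^n\eta=\Ricdfmn\nabla^m\eta\,\nabla^n\eta$. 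Writing $Q(\eta)(x)\geqslant0$ for the difference between the bracket and $\Ricdfmn\nabla^m\eta\,\nabla^n\eta$ — which is exactly the sum of the Cauchy--Schwarz defect $\nabla_m\nabla_n\eta\,\nabla^m\nabla^n\eta-\frac1n(\nabla^2\eta)^2$ and $-\frac{n-N}{nN}\big(\nabla^2\eta+\frac{n}{n-N}\nabla f\cdot\nabla\eta\big)^2$, both non‑negative — and using $\rho^\alpha>0$ gives the master inequality
\begin{equation*}
\frac{\dd^2}{\dd t^2}\mathcal{S}_\alpha\ \leqslant\ -\int_{M_n}\sqrt g\,\ee^f\rho^{\alpha}\,\Ricdfmn\,\nabla^m\eta\,\nabla^n\eta,
\end{equation*}
with gap equal to the non‑negative quantity $\int_{M_n}\sqrt g\,\ee^f\rho^{\alpha}\,Q(\eta)$, and with $Q(\eta)(x_0)=0$ precisely when $\Hess\eta(x_0)$ is a multiple of the metric and $\nabla^2\eta(x_0)+\frac{n}{n-N}(\nabla f\cdot\nabla\eta)(x_0)=0$.

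The implication $1\Rightarrow 2$.i) is then immediate: inserting the equation of motion \eqref{eq:eoms-Ricnf} as $\Ricdfmn=\Lambda g_{mn}+\tilde T_{mn}$ turns the master inequality into \eqref{eq:d2-tsallis}. For 2.ii) I would fix $x_0$ and a unit vector $\xi_0$, and choose a smooth $\eta$ with $\nabla\eta|_{x_0}=\xi_0$ and $\Hess\eta|_{x_0}=-\frac{1}{n-N}\big(\nabla f\!\cdot\!\xi_0\big)\big|_{x_0}\,g$, so both squares in $Q(\eta)$ vanish at $x_0$; since $Q(\eta)\geqslant0$, $x_0$ is a minimum, hence $\nabla Q(\eta)|_{x_0}=0$ and $|Q(\eta)(x)|\leqslant C\varepsilon^2$ on $B_\varepsilon(x_0)$, so the gap $\int\sqrt g\,\ee^f\rho^{\alpha}\,Q(\eta)$ can be controlled via the weighted localization lemma (Lemma~\ref{lemma:villani-p402}, the same tool used in Theorem~\ref{th: Riemannian}), producing $\omega(\varepsilon)$. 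Conversely, for $2\Rightarrow1$ set $E_{mn}:=\Ricdfmn-\Lambda g_{mn}-\tilde T_{mn}$; combining \eqref{eq:d2-tsallis} with the exact identity gives $\int\sqrt g\,\ee^f\rho^{\alpha}\big(E_{mn}\nabla^m\eta\,\nabla^n\eta+Q(\eta)\big)\geqslant0$. Localizing at an arbitrary $x_0$ with an $\eta$ chosen so that $Q(\eta)(x_0)=0$ and applying Lemma~\ref{lemma:villani-p402} yields $E_{mn}\xi_0^m\xi_0^n\geqslant0$ for all $\xi_0$, hence $E_{mn}\geqslant0$; the saturation hypothesis ii) then forces $\int\sqrt g\,\ee^f\rho^{\alpha}(E_{mn}\nabla^m\eta\,\nabla^n\eta+Q(\eta))\to0$, and since $E_{mn}\xi_0^m\xi_0^n\geqslant0$ and $Q(\eta)\geqslant0$ each must vanish in the limit, giving $E_{mn}=0$ and thus \eqref{eq:eoms-Ricnf}.

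The algebra of the first two paragraphs is a direct transcription of Section~\ref{sub:eff-dim}, so I expect the only genuine difficulty to be the localization entering 2.ii) and $2\Rightarrow1$. In the Shannon case $P(\rho)=-\rho$, so the relevant integrals are plain $\mu$‑averages of bounded quantities and the estimate is uniform in how sharply $\mu$ concentrates; here the weight $\rho^{\alpha}$ with $\alpha>1$ can diverge as $\mu\to\delta_{x_0}$, and one must pair the $O(\varepsilon^2)$ decay of $Q(\eta)$ with a matching control of $\int_{B_\varepsilon(x_0)}\sqrt g\,\ee^f\rho^{\alpha}$ (for the physical range $d\geqslant3$, i.e.\ $N\leqslant-1$, $\alpha\leqslant2$, this is where concavity of the power function and a Jensen‑type bound on small balls come in). This is precisely the point the present section treats only formally, the fully rigorous argument being deferred to \cite{deluca-deponti-mondino-t-entropyproof}.
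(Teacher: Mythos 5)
Your proposal reproduces the paper's argument in all essentials: the same $P=-\rho^{\alpha}$, $P_{2}=\tfrac1N\rho^{\alpha}$ (the paper works with $F=\rho^{\alpha}$ rather than $\tfrac{1}{\alpha-1}(\rho-\rho^{\alpha})$, but $\mathcal S_\alpha$ differs from its $\mathcal F$ by an affine map, so the resulting identity is the same), the weighted second-derivative formula plus the weighted Bochner identity and the algebraic rearrangement \eqref{eq:xa2} to split off $\Ricdfmn$ together with two non-negative remainders, and the same delta-localization with Lemma~\ref{lemma:villani-p402} in both directions. You are in fact slightly more explicit than the paper about the required construction: your $\Hess\eta|_{x_0}=-\tfrac1{n-N}(\nabla f\cdot\xi_0)\,g$ amounts to applying Lemma~\ref{lemma:villani-p402} with the rescaled weight $\tfrac{n}{n-N}f$ in place of $f$, which is the version actually needed to annihilate the third term in the decomposition; the paper invokes the lemma without flagging this. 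One caveat on your closing remark: the Jensen-type bound you gesture at runs the wrong way. For $\alpha>1$ Jensen gives a \emph{lower} bound $\int_{B_\varepsilon}\rho^{\alpha}\gtrsim \varepsilon^{-n(\alpha-1)}$, which diverges as $\mu$ concentrates, so it cannot be paired with the $O(\varepsilon^{2})$ decay of $Q(\eta)$ to produce a vanishing $\omega(\varepsilon)$; this mismatch is precisely why the theorem is kept at the formal level in this paper, with the quantitative version deferred to \cite{deluca-deponti-mondino-t-entropyproof}.
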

\begin{proof}
	$1 \Rightarrow 2$: For i), we take $F= \rho^{\frac{N-1}N}$ in (\ref{eq:F[mu]}):
	\begin{equation}
		\mathcal{F}[\mu]= \int_{M_n} \sqrt{g} \ee^f \rho^{\frac{N-1}N} = 1+ \frac1N \mathcal{S}_{\alpha} \, ,\qquad \alpha= \frac{N-1}N\,. 
	\end{equation}
	From the definitions below (\ref{eq: der1U}) and (\ref{eq: der2U}), we see that $P=-F/N$, $P_2=F/N^2$. We need \eqref{eq:der2App1}, the weighted version of (\ref{eq: der2U}); replacing in it the weighted Bochner identity (\ref{eq:weighted-bochner}) we obtain
\begin{subequations}
	\begin{align}
		&\left.\frac{\dd^2}{\dd t^2}\right|_{t=0} \mathcal{S}_\alpha = -\int_{M_n} \sqrt{g} \ee^f \rho^{\frac{N-1}N} \left[ (R_{mn}- \nabla_m \nabla_n f) \nabla^m \eta \nabla^n \eta + \nabla_m \nabla_n \eta \nabla^m \nabla^n \eta - \frac1N (\nabla^2_f \eta)^2\right]\\
		& \label{eq:d2-tsallis-0}
		= -\int_{M_n} \sqrt{g} \ee^f \rho^{\frac{N-1}N} \left[(\mathrm{Ric}^N_f)_{mn} \nabla^m \eta \nabla^n \eta + H^0_{mn}H_0^{mn} - \frac {n-N}{n N} \left(\nabla^2 \eta + \frac n{N-n} \nabla f \cdot \nabla \eta\right)^2\right]
		\,.
	\end{align}
\end{subequations}
	$H^0_{mn}:= (\nabla_m \nabla_n -\frac1n g_{mn} \nabla^2)\eta$ is the traceless part of the Hessian of $\eta$. In the second step we have used the definition (\ref{eq:defRicciNf}), and again the identity (\ref{eq:xa2}).
	Recalling that for us $N=2-D<0$, the second and third terms in the parenthesis in (\ref{eq:d2-tsallis-0}) are positive. Using (\ref{eq:eoms-Ricnf}) we arrive at (\ref{eq:d2-tsallis}). For ii), we use Lemma \ref{lemma:villani-p402}, which makes the second and third terms in (\ref{eq:d2-tsallis-0}) vanish asymptotically in the limit where $p\sim \delta(x-x_0)$ .
	
	$2 \Rightarrow 1$: Suppose now we know (\ref{eq:d2-tsallis}). Using (\ref{eq:d2-tsallis-0}), we obtain 
	$$\int \sqrt{g}\ee^f\rho^{\frac{N-1}N} [E_{mn} \nabla^m \eta \nabla^n \eta + X(\eta)]\geqslant 0\,,$$ where we wrote (\ref{eq:eoms-Ricnf}) as $E_{mn}=0$, and $X(\eta)$ represents the second and third terms in (\ref{eq:d2-tsallis-0}). Now it follows that $E_{mn}$ is semi-positive definite everywhere. (If this were not the case, there would exist a $x_0$ and a $\hat \xi_0$ such that $E_{mn} \hat \xi^m_0 \hat \xi^n_0 <0$. By Lemma \ref{lemma:villani-p402}, there would now exist $\hat \eta$ such that $\nabla \hat \eta=\hat \xi$ and $X^2=0$; taking now $\rho\sim \delta(x-x_0)$ we arrive at a contradiction.)
	
	Now, again taking the measure to be concentrated at one point $x_0$, we take $\eta$ as in ii). Since by hypothesis the inequality becomes saturated, we can write 
	$$[E_{mn} \nabla^m \eta \nabla^n \eta + X(\eta)](x_0)=0\,.$$
	 We know that $E_{mn}$ is semi-positive definite, so all terms are $\geqslant 0$; it follows that they are all zero. In particular $E_{mn}(x_0)\xi_0^m \xi_0^n=0$. Since $x_0$ and $\xi_0$ are arbitrary, $E_{mn}=0$ everywhere.
\end{proof}

\section{Effective negative dimensions and KK bounds}\label{sec: RCD}

As another application of negative effective dimensions to warped compactifications, we will now obtain new bounds on their spin-two KK masses. Recall \cite{bachas-estes,csaki-erlich-hollowood-shirman} that these are eigenvalues of the weighted (or Bakry--\'Emery) Laplacian 
\begin{equation}\label{eq:BELaplacian}
  \Delta_f (\psi) := - \frac{1}{\sqrt{g}} \ee^{- f} \partial_m
  \left( \sqrt{g}  g^{m n} \ee^f \partial_n \psi \right) = \Delta \psi-\nabla f \cdot \dd \psi\,,
\end{equation}
with $f=(D-2)A$. In \cite{deluca-t-leaps,deluca-deponti-mondino-t} optimal transport techniques were already used to find bounds on these eigenvalues, but using the $N=\infty$ effective dimension. A lower bound on $\mathrm{Ric}^{\infty,f}$ could be obtained, but in terms of $\sigma := \mathrm{sup}|\mathrm{d}A|$, which unfortunately can get quite large in some solutions. The advantage of considering negative effective dimensions is that the bound (\ref{eq:boundRicnf1}) is in terms of the cosmological constant $\Lambda$ alone, thus avoiding the dependence on $\sigma$.

\subsection{D$p$-branes} 

The possibility to work with some non-smooth spaces is quite powerful for string theory,
as several important compactifications have singularities in their low-energy description due to the back-reaction of
extended objects. Recall for example that O-plane singularities (and/or quantum effects) are necessary in order to obtain dS compactifications \cite{gibbons-nogo,maldacena-nunez}. D-brane singularities also appear often in AdS vacua, where they are holographically dual to flavor symmetries.
We review here the singularities associated to D- and M-branes. 

In the supergravity approximation, D-branes play the role of localized sources for the 
gravitational and higher-form electromagnetic fields.  The presence of such a localized object produces
a singularity in the classical fields it sources; this is in complete analogy to black holes in pure general
relativity or for electrons in classical electrodynamics. While on the one hand, such singularities are expected to
be resolved in a full quantum theory, on the other hand they are a general feature of low-energy descriptions.  It is thus useful to develop mathematical tools that  allow to handle such non-smooth spaces.

 In the setting of ten-dimensional supergravity theories, D-branes are identified with a ten-dimensional Lorentzian metric that,
in Einstein frame, has the following asymptotics: 
\begin{equation}\label{eq:Dbrane-10}
	\dd s^2_{10} \sim H^{\frac{p-7}{8}}\left( \dd x^2_{p+1}+ H(\dd r^2+r^2 \dd s^2_{\mathbb{S}^{8-p}}) \right)\qquad \text{for}\; r\to 0 \;.
\end{equation}
Here $p\in \{0,1,\ldots, 7\}$,  $r$ is a radial coordinate in the transverse directions to the singular object,  $\dd x^2_{p+1}$ denotes the $p+1$ dimensional Lorentzian metric (in case $p=0$, simply $-\dd x_1^2$) corresponding to the subspace appearing in the singular limit $r\to 0$, and $\dd s^2_{\mathbb{S}^{8-p}}$ denotes the round metric on the unit $8-p$-dimensional sphere $\mathbb{S}^{8-p}$; the function $H$ is harmonic on the transverse space and introduces the singularity.

 In order to preserve maximal symmetry in vacuum compactifications,  a D$p$-brane has to be extended along all the $d$ vacuum directions; however, in addition, it can also be extended in some of the internal directions. Comparing \eqref{eq:metric} and  \eqref{eq:Dbrane-10}, we obtain that the internal metric $\dd s^2_n$ has the following asymptotics
\begin{equation}\label{eq:Dbrane-barred}
	\dd s^2_n  \sim \dd x^2_{p+1-d}+H(\dd r^2+r^2 \dd s^2_{\mathbb{S}^{8-p}})\qquad \text{for}\; r\to 0 \;,
\end{equation}
where $\dd x_{p+1-d}^2$ denotes the flat metric of the $(p+1-d)$-dimensional Euclidean space. Again from \eqref{eq:metric}, we also get  that the weight function $f$ satisfies
\begin{equation}\label{eq:f-Dbranes}
	\ee^f= \ee^{8A} \sim H^\frac{p-7}{2}\qquad\qquad \text{for}\; r\to 0 \;.
\end{equation}
Near the singularity,  the harmonic function has the following asymptotics:
\begin{equation}\label{eq:Harm}
	H\sim \left\{\begin{array}{llr}
		   (r/r_0)^{p-7}  &&0 \leq p < 7 \\
		  - \frac{2\pi}{g_s}\log (r/r_0) & &p = 7 
	\end{array}	\right.\qquad\qquad \text{for}\; r\to 0 \;,
\end{equation}
where $r_0^{7-p}= g_s (2\pi l_s)^{7-p}/((7-p) \mathrm{Vol}(\mathbb{S}^{8-p}))$ for $p<7$; as usual $g_s$ is the string coupling (a value for $\ee^\phi$ at a reference point, often infinity) and $l_s$ is the string length.

The next definition, where (for some values of $p$) we allow singularities that are \emph{asymptotic} to D$p$-branes, is slightly more general than the one given in our previous work \cite{deluca-deponti-mondino-t} where we considered \emph{exact} D-brane singularities.

\begin{definition}[Asymptotically D-brane  metric measure spaces]\label{def: Dp-brane mms}
We define an \emph{ asymptotically D-brane  metric measure space} a smooth and compact Riemannian manifold $(X,g)$ that is glued (in a smooth way) to a finite number of ends where the metric $g$ is asymptotically of the form \eqref{eq:Dbrane-barred}
in a neighborhood of the closed singular set $\{r=0\}$, depending on value of $p$ in the following precise sense:
\begin{itemize}
\item Case $p=0,1,\ldots, 5$.  In the end the metric can be written as
	\begin{equation}\label{eq:metricEndAs1}
	g= \dd x^2_{p+1-d} +\left( \frac{r_0}{r} \right)^{7-p} \left( \dd r^2+ r^2 \dd s^2_{\mathbb{S}^{8-p}} \right) +\omega(\Theta,r)\,,
	\end{equation}
	with $r_0^{7-p}= g_s (2\pi l_s)^{7-p}/((7-p) \mathrm{Vol}(\mathbb{S}^{8-p}))$ and $\omega$ is a quadratic form in $\dd\Theta, \dd r$ (and independent of the variable $x$), satisfying
	$$
	\sup_{\Theta\in \mathbb{S}^{8-p}}\limsup_{r\to 0}r^{7-p} \,   |\omega|(\Theta,r) =0\,.
        $$
\item Case $p=6$. In a neighborhood $\{r<\epsilon\}$ of the  closed singular set $\{r=0\}$,  the metric is of the form  \eqref{eq:metricEndAs1} with 
$$
	\sup_{\Theta\in \mathbb{S}^{2}}\limsup_{r\to 0} \frac{ |\omega|(\Theta,r)}{r}=0\,.
        $$
        \item Case $p=7$. In a neighborhood $\{r<\epsilon\}$ of the closed singular set  $\{r=0\}$,  the metric is of the form
        \begin{equation}\label{eq:metricEndAs7}
	g= \dd x^2_{8-d}  - \frac{2\pi}{g_s} (\log (r/r_0)-\eta(x,r))  \left( \dd r^2+ r^2 \dd s^2_{\mathbb{S}^{1}} \right),
	\end{equation}
	where $\eta$ is a non-negative real valued function.    
	\item      Case $p=8$. In a neighborhood $\{|r|<\epsilon\}$ of the closed singular set  $\{r=0\}$,  the metric is of the form
        \begin{equation}\label{eq:metricEndAs8}
	g= \dd x^2_{9-d}  + (1- h_8 |r|) \dd r^2 
	\end{equation}
	where $h_8>0$ is a positive constant, and the measure  is given by 
	$$
	\mm\llcorner_{\{|r|<\epsilon \}}= \sqrt{1- h_8 |r|} \, \mathsf{\dd vol}_g \llcorner_{\{|r|<\epsilon \}}
	$$
	where $ \mathsf{\dd vol}_g$ is the Riemannian volume measure associated to $g$.
\end{itemize}
In all the above cases, we endow $X$ with a weighted measure, and view it as a metric measure space $(X,\di,\mm)$ where:
\begin{itemize}
\item The distance $\di$ between two points $p,q \in X$ is given by 
$$\di(p,q):=\inf_{\gamma\in \Gamma(p,q)} \int g\left(\gamma'(t),\gamma'(t)\right) \dd t\,,$$
where $\Gamma(p,q)$ denotes the set of absolutely continuous curves joining $p$ to $q$.
\item The measure $\mm$ is a weighted volume measure $\mm:=\ee^f\mathsf{\dd vol}_g$, with the function $\ee^f$ smooth outside the tips of the ends and gives zero mass to the singular set.
\end{itemize}
We say that $(X,\sfd,\mm)$ is an \emph{(exactly) D-brane  metric measure space} if, for each end,  the error $\omega$ (resp. $\eta$) vanishes on a neighbourhood $\{r<\epsilon\}$ of the singular set  $\{r=0\}$.
\end{definition}

\begin{remark}[Other localized sources]
Let us briefly comment on other localized sources. First of all, fundamental strings
(F1) and NS five-branes (NS5),  have exactly the same expansion as D1 and D5 branes, respectively; this is indeed a consequence of the invariance under type IIB S-duality (or more generally under the SL$(2, \mathbb Z)$ symmetry) of the asymptotic 10-dimensional Einstein
metric \eqref{eq:Dbrane-10}.

For M2 and M5 branes in M-theory, the internal metric has again the asymptotic form  \eqref{eq:Dbrane-barred}, now with $H\sim (r/r_0)^{q-8}$, $q=2,5$.  Notice it enters in the first case of Def.~\ref{def: Dp-brane mms}; in particular, for both M2s and M5s, the singularity is at infinite distance.
\end{remark}

\subsection{Some basics on  metric measure spaces}\label{sub:mms}
Motivated by the appearance of singularities as discussed in the section above, we enlarge the class of spaces under consideration. We thus leave the framework of smooth weighted Riemannian manifold and enter the more general setting of metric measure spaces. Let us start with some basics. (For a longer introduction to some of these ideas see also Sec.~2.3 in our earlier \cite{deluca-deponti-mondino-t}.)

In the sequel $(X,\di)$ will be a complete and separable metric space.
\\By \emph{geodesic} over $(X,\di)$ we mean a constant speed (length minimizing) geodesic, i.e.~a curve $\gamma:[0,1]\to X$ such that 
$$\di(\gamma(s),\gamma(t))=|t-s|\di(\gamma(0),\gamma(1)) \qquad \forall s,t\in[0,1]\,.$$ 

The space of all geodesics over a space $X$ will be denoted by $\Gamma(X)$. The evaluation map $\ee_t:\Gamma(X)\to X$, $t\in [0,1]$, is defined as $\ee_t(\gamma):=\gamma(t)$.

The space $\mathcal{P}(X,\di)$ is the space of Borel probability measures over $X$. When the distance $\di$ is clear by the context, we will simply write $\mathcal{P}(X)$. The space $\mathcal{P}_2(X)\subset \mathcal{P}(X)$ is the subset of probability measures with finite second moment. We endow $\mathcal{P}_2(X)$ with the $2$-Wasserstein distance $W_2$ defined as in \eqref{eq: def Wass}.

A measure $\pi$ realising the infimum in \eqref{eq: def Wass} is called an \emph{optimal coupling}.  A measure $\nu\in\mathcal{P}(\Gamma(X))$ is called an \emph{optimal dynamical plan} if the probability measure $(\ee_0,\ee_1)_{\sharp}(\nu)$ is an optimal coupling between its own marginals, and we denote by $\mathrm{OptGeo}(\mu_0,\mu_1)$ the set of all the optimal dynamical plans between $\mu_0$ and $\mu_1$.

A \emph{metric measure space} is a triple $(X,\di,\mm)$ where $(X,\di)$ is a complete and separable metric space and $\mm$ is a non-negative Borel measure which is finite on balls, i.e.~$\mm(B_r(x))<+\infty$ for every $r>0$ and $x\in X$, where $B_r(x):=\{y\in X : \di(x,y)<r\}$.

Denote by  $\mathrm{Lip}(X)$ (resp. $\mathrm{Lip}_{bs}(X)$) the space of Lipschitz functions on $(X,\sfd)$ (resp. with bounded support). For a function $f\in \mathrm{Lip}(X)$ the slope at a point $x\in X$ is defined as 
$$|\nabla f|(x):=\limsup_{y\to x}\frac{|f(y)-f(x)|}{\di(y,x)}\, ,\qquad  \textrm{if} \ x \ \textrm{is an accumulation point},$$
and $|\nabla f|(x):=0$ if $x$ is isolated.

\subsubsection{Cheeger energy, Laplacian and heat flow}
Given a function $f\in L^2(X,\mm)$,  the \emph{Cheeger energy} $\Ch(f)$ is defined by \cite{Ch99} (see also \cite{AGS2})
$$\Ch(f):=\inf\left\{\liminf_{n\to \infty}\frac{1}{2}\int_X |\nabla f_n|^2 \dd \mm : f_n\in \mathrm{Lip}_{bs}(X), f_n\to f \ \mathrm{in} \ L^2(X,\mm)\right\}$$
with finiteness domain given by the vector space
$$W^{1,2}(X,\di,\mm):=\{f\in L^2(X,\mm) : \Ch(f)<+\infty\}\,.$$
We endow $W^{1,2}(X,\di,\mm)$  with the norm $\|f\|_{W^{1,2}}^2:=\|f\|^2_{L^2}+2\Ch(f).$ The Cheeger energy is a convex, $2$-homogenous and lower semicontinuous functional on $L^2(X,\mm)$.

For $f\in W^{1,2}(X,\di,\mm)$, $\Ch(f)$ can be represented in terms of the \emph{minimal relaxed gradient} $|Df|$ as 
\begin{equation}\label{eq:Ch|Df|}
\Ch(f):=\frac{1}{2}\int_X |Df|^2\,\dd \mm\, .
\end{equation}

The minimal relaxed gradient is a local object in the sense that $|Df|=|Dg|\ \mm$-a.e. (namely, almost everywhere with respect to $\mm$) on the set $\{f-g=c\},\ c\in \R$, for all $f,g\in W^{1,2}(X,\di,\mm)$. For more details on the minimal relaxed gradient we refer to \cite{AGS2}.
\\If $\varphi:J\subset\R\to\R$ is Lipschitz, and $J$ is an interval containing the image of $f$ (with $\varphi(0)=0$ if $\mm(X)=\infty)$, then 
\begin{equation}\label{ineq: comp relax grad}
\varphi(f)\in D(\Ch) \quad \textrm{and} \quad |D\varphi(f)|\leqslant |\varphi'(f)||Df| \qquad \mm\text{-a.e.}\, \ \forall \varphi\in \mathrm{Lip}(X)\,,
\end{equation}
where the inequality makes sense thanks to the locality of the minimal relaxed gradient.
 
Thanks to \cite[Lemma 4.3]{AGS2}, for any function $f\in L^2(X,\mm)$ with $|Df|\in L^2(X,\mm)$ it is possible to find a sequence $(f_n)$ of Lipschitz functions with $f_n\to f$ and $|\nabla f_n|\to |Df|$ strongly in $L^2(X,\mm)$. By a standard cutoff argument, we can further assume that  $(f_n)\subset \mathrm{Lip}_{bs}(X)$. In other words, the class $\mathrm{Lip}_{bs}(X)$ is \emph{dense in energy} in $W^{1,2}(X,\di,\mm)$.

For every $f\in L^2(X,\mm)$ the \emph{heat flow} of $f$ is the unique locally Lipschitz curve $t\mapsto H_t(f)$ from $(0,\infty)$ to $L^2(X,\mm)$ such that 
$$\begin{cases}\frac{\mathrm{d}}{\mathrm{d} t}H_t(f)\in -\partial^{-}\Ch(H_t(f))\qquad \textrm{for a.e.}\ t \in (0,\infty)\,,\\
H_t(f)\to f \qquad \textrm{as} \ t\to 0^{+}\, .
\end{cases}$$
Here $\partial^{-}\Ch\subset L^2(X,\mm)$ is the subdifferential of the Cheeger energy, i.e.~given $f\in L^2(X,\mm)$ it holds 
$$\ell\in\partial^{-}\Ch(f) \iff \int_X \ell(g-f)d\mm+\Ch(f)\leqslant \Ch(g) \quad \textrm{for all} \ g\in L^2(X,\mm)\,.$$

For a function $f\in L^2(X,\mm)$, we write $f\in D(\Delta)$ if $\partial^{-}\Ch(f)\neq \emptyset$; when $f\in D(\Delta)$ we denote by $\Delta f$ the element of minimal $L^2$-norm in $\partial^{-}\Ch(f)$ and we refer to it as the \emph{Laplacian} of $f$.  

\subsection{Cheeger bounds in infinitesimally Hilbertian metric measure spaces}
The goal of this section is to prove some bounds on the spectrum of the Laplacian in the high generality of infinitesimally Hilbertian metric measure spaces, framework which include several singularities appearing in gravity compactifications (e.g. D$p$-branes, $O$-planes, etc.).

\subsubsection{Infinitesimally Hilbertian metric measure spaces}
Notice that, in general, the Laplacian is 1-homogenous but may not be linear (for instance in $\R^{n}$ endowed with a non-euclidean norm, or more generally on a Finsler manifold). This is equivalent to say that the heat flow $H_{t}: L^{2}(X,\mm)\to L^{2}(X,\mm)$ in general is  1-homogenous but may not be linear, or, still equivalently, that the Cheeger energy is  2-homogenous but may not be a quadratic form, or, still equivalently, that $W^{1,2}(X,\sfd,\mm)$  in general is a Banach space but may not be a Hilbert space.  

When the latter of two options is satisfied, i.e.~when we have a ``Riemannian'' behaviour as opposed to a ``Finslerian'' one, we say that the space is \emph{infinitesimally Hilbertian} (see \cite{AGS1, G11}). Below is the precise definition.

\begin{definition}\label{def:iH}
A metric measure space $(X,\di,\mm)$ is said to be \emph{infinitesimally Hilbertian} if the Cheeger energy satisfies the parallelogram identity, i.e.
\begin{equation}\label{eq:defChQuad}
\Ch(f+g)+\Ch(f-g)=2\Ch(f)+2\Ch(g), \qquad \forall f,g\in W^{1,2}(X,\sfd,\mm)\,.
\end{equation}
\end{definition}

As mentioned above, if $(X,\di,\mm)$ is infinitesimally Hilbertian, then the heat flow and the Laplacian are linear, $W^{1,2}(X,\di,\mm)$ is a Hilbert space, and, using \eqref{ineq: comp relax grad}, the quadratic form $$\mathcal{E}(f):=2\Ch(f)\,,$$
defines a Dirichlet form, i.e.~a $L^2(X,\mm)$-lower semicontinuous quadratic form that satisfies the Markov property $\mathcal{E}(\varphi(f))\leqslant \mathcal{E}(f)$ for every $1$-Lipschitz function $\varphi:\R\to\R$ with $\varphi(0)=0$. By construction, $H_t$ and $-\Delta$ correspond respectively to the (sub)-Markov semigroup and the infinitesimally generator associated to the form (see for instance \cite{fukushima-book} as a general reference on Dirichlet forms). Moreover, the heat flow is a self-adjoint operator on $L^2(X,\mm)$, as well as the Laplacian $\Delta$ which becomes a non-negative, densely defined, self-adjoint operator.
If the measure of the space is finite (or, more generally, if $\mm(B_r(\bar{x}))\le {\rm A} \exp({\rm B}\, r^2)$ for some ${\rm A,B}>0$, $\bar{x}\in X$ and every $r>0$) the semigroup $H_t$ is also mass preserving, i.e. for every $f\in L^{1}\cap L^2(X,\mm)$  it holds (see for instance \cite[Th. 4.16,Th. 4.20]{AGS2}):
\begin{equation}\label{eq:MassPres}
\int_{X} H_{t}  f \, \dd \mm = \int_{X}  f\, \dd \mm, \quad \text{for every } t\geq 0\, .
\end{equation}

Another important property of this class of spaces is the density of $\mathrm{Lip}_{bs}(X)$ in $W^{1,2}(X,\di,\mm)$, that follows easily from \eqref{eq:defChQuad} using the $L^2$-lower semicontinuity of the Cheeger energy and the already stated density in energy of the Lipschitz functions.

The next proposition will allow to include several interesting singularities (e.g. both D$p$-branes and O-planes) in the framework of infinitesimally Hilbertian spaces.

\begin{proposition}\label{prop:InfHilbSing}
Let $(X,\sfd, \mm)$ be a metric measure space. Assume that $(X,\sfd, \mm)$ is a smooth weighted Riemannian manifold out of a closed singular set of measure zero:
\begin{itemize}
\item there exists a closed subset $\Sigma\subset X$ with $\mm(\Sigma)=0$,
\item  there exists a smooth (open) weighted Riemannian manifold $(M,g, \ee^{f}\sqrt{g})$,  
\item there exists a measure-preserving isometry $\Phi:X\setminus \Sigma\to M$, i.e.
$$\sfd_{g}(\Phi(x), \Phi(y))=\sfd(x,y), \quad \text{for all }x,y\in X\setminus \Sigma, \qquad \Phi_{\sharp} (\mm\llcorner_{X\setminus \Sigma})= \ee^{f} \sqrt{g}\dd x^{n}\,,$$
where $\sqrt{g}\dd x^{n}$ denotes the Riemannian volume measure of $(M,g)$. 
\end{itemize}
Then $(X,\sfd, \mm)$ is infinitesimally Hilbertian.
\end{proposition}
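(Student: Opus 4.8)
By Definition \ref{def:iH} we must show that the Cheeger energy $\Ch$ of $(X,\sfd,\mm)$ satisfies the parallelogram identity \eqref{eq:defChQuad}. Recall that, by its very definition, $\Ch$ is the $L^2(X,\mm)$-lower semicontinuous relaxation of the ``metric Dirichlet energy''
\[
\mathcal D(u):=\tfrac12\int_X |\nabla u|^2\,\dd\mm\,,\qquad u\in\mathrm{Lip}_{bs}(X)\,,
\]
where $|\nabla u|$ is the slope. The plan is to show that, although $|\nabla u|$ is a purely metric object, the functional $\mathcal D$ is in fact a \emph{quadratic} form on the linear subspace $\mathrm{Lip}_{bs}(X)\subseteq L^2(X,\mm)$, and then to invoke the soft general fact that the $L^2$-lower semicontinuous relaxation of a nonnegative $2$-homogeneous quadratic form defined on a dense linear subspace is again a ($[0,+\infty]$-valued) quadratic form; since $\mathrm{Lip}_{bs}(X)$ is dense in $L^2(X,\mm)$, this forces $\Ch$ to obey \eqref{eq:defChQuad}.

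The core step is the slope identity
\[
|\nabla u|(x)=\big|\nabla_g(u\circ\Phi^{-1})\big|\big(\Phi(x)\big)\qquad\text{for }\mm\text{-a.e. }x\in X,\qquad u\in\mathrm{Lip}_{bs}(X)\,.
\]
Since $\mm(\Sigma)=0$ it is enough to check it on $X\setminus\Sigma$. As $\Sigma$ is closed, every $x\in X\setminus\Sigma$ has a ball $B_r(x)\subseteq X\setminus\Sigma$ on which $\sfd$ coincides with the distance pulled back from $M$ by $\Phi$, so the $\sfd$-slope of $u$ at $x$ equals the $\sfd_g$-slope of $u\circ\Phi^{-1}$ at $\Phi(x)$; and since $\Phi$ is distance preserving, $u\circ\Phi^{-1}$ is Lipschitz on the smooth Riemannian manifold $(M,\sfd_g)$, hence differentiable $\sqrt g\,\dd x^n$-a.e., with $\sfd_g$-slope equal to $|\nabla_g(u\circ\Phi^{-1})|$ at points of differentiability (the standard identification of slope with the norm of the differential on a Riemannian manifold). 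Combining this with the push-forward relation $\Phi_\sharp(\mm\llcorner_{X\setminus\Sigma})=\ee^f\sqrt g\,\dd x^n$ and the change-of-variables formula gives, for every $u\in\mathrm{Lip}_{bs}(X)$,
\[
\mathcal D(u)=\tfrac12\int_M\big|\nabla_g(u\circ\Phi^{-1})\big|^2\,\ee^f\sqrt g\,\dd x^n\,,
\]
which is the classical weighted Dirichlet energy of $(M,g,\ee^f\sqrt g)$ evaluated at $u\circ\Phi^{-1}$; this is manifestly a quadratic form in $u\circ\Phi^{-1}$, and therefore in $u$, because $u\mapsto u\circ\Phi^{-1}$ is linear from $\mathrm{Lip}_{bs}(X)$ into $L^2(M,\ee^f\sqrt g)$.

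It remains to record the general fact used above. If $Q\geqslant0$ is a $2$-homogeneous quadratic form on a dense linear subspace of a Hilbert space and $\bar Q$ denotes its $L^2$-lower semicontinuous relaxation, then $\bar Q$ is $2$-homogeneous (rescale approximating sequences), the inequality ``$\leqslant$'' in the parallelogram law for $\bar Q$ follows by testing $\bar Q(u\pm v)$ against the sums and differences of near-optimal recovery sequences for $u$ and for $v$, and ``$\geqslant$'' then follows from ``$\leqslant$'' together with $2$-homogeneity via the polarization rewriting $u=\tfrac12\big((u+v)+(u-v)\big)$, $v=\tfrac12\big((u+v)-(u-v)\big)$ (all manipulations being harmless in the presence of $+\infty$ values). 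Applying this with $Q=\mathcal D$ shows $\Ch=\bar{\mathcal D}$ satisfies \eqref{eq:defChQuad}, i.e.\ $(X,\sfd,\mm)$ is infinitesimally Hilbertian.

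The one genuinely delicate point is the slope identity of the second paragraph, and in particular the fact that we are given $\Phi$ to be an isometry for the \emph{ambient} distance of $X$ (not merely a local isometry of $X\setminus\Sigma$): this is exactly what guarantees that the singular set produces no metric shortcuts, so that the $\sfd$-slope of a function at a point of $X\setminus\Sigma$ is computed entirely within the smooth Riemannian chart supplied by $M$. Under that hypothesis the argument is essentially formal, and the remaining work --- checking that the geometries of Definition \ref{def: Dp-brane mms} (and, using the quoted asymptotics, the O-plane spaces) genuinely meet the hypotheses of the proposition --- is handled separately.
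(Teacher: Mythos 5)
Your proof is correct, and it reaches the conclusion by a route that is genuinely (if modestly) different from the paper's. The paper works directly at the Sobolev level: it invokes the representation formula $\Ch(u)=\tfrac12\int_X|Du|^2\,\dd\mm$ for the Cheeger energy in terms of the minimal relaxed gradient, uses $\mm(\Sigma)=0$ to restrict the integral to $X\setminus\Sigma$, and then appeals to the (known but nontrivial) fact that on the open smooth Riemannian part the minimal relaxed gradient of a Sobolev function coincides $\mm$-a.e.\ with the modulus of the classical weak gradient, after which the parallelogram law is immediate from bilinearity of $g$. You instead stay at the level of Lipschitz functions and the metric slope: you prove the slope identity $|\nabla u|(x)=|\nabla_g(u\circ\Phi^{-1})|(\Phi(x))$ a.e.\ (which only needs Rademacher and the fact that $\Sigma$ closed with $\Phi$ a global isometry excludes shortcuts), deduce that the pre-relaxed Dirichlet energy $\mathcal D$ is a quadratic form on $\mathrm{Lip}_{bs}(X)$, and then transfer quadraticity through the $L^2$-lower semicontinuous relaxation via the general principle that the relaxation of a nonnegative $2$-homogeneous quadratic form on a dense linear subspace is again quadratic (the $\leqslant$ half of the parallelogram law by combining recovery sequences, the $\geqslant$ half by $2$-homogeneity and the substitution $u\mapsto u+v$, $v\mapsto u-v$). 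The two proofs identify the same geometric core --- the slope/relaxed gradient agrees with the Riemannian gradient modulus off the closed null set --- but your version avoids invoking the (comparatively heavier) identification of the Cheeger energy with the Sobolev energy on a smooth weighted manifold, at the cost of the explicit relaxation lemma, which you correctly state and justify; the paper's version is shorter because it cites that identification as a black box. Both arguments also correctly rely on the key structural point you flag at the end: $\Phi$ is an isometry for the ambient distance of $X$, which is what rules out metric shortcuts through $\Sigma$ and makes the local slope computation take place entirely within the smooth chart.
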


\begin{proof}
First of all, since the Riemannian scalar product $g$ satisfies the parallelogram rule, it holds that
 \begin{equation}\label{eq:InfHilbM}
 \begin{split}
&\int_{M}  \sqrt{g}\ee^{f} g(\nabla(u'+v'), \nabla(u'+v')) \, + \int_{M} \sqrt{g}\ee^{f} g(\nabla(u'-v'), \nabla(u'-v')) 
\\& =2 \int_{M} \sqrt{g}\ee^{f} g(\nabla u', \nabla u') \,  + 2 \int_{M} \sqrt{g}\ee^{f} g(\nabla v', \nabla v') \, \quad \forall u',v'\in W^{1,2}(M,g,\ee^{f} \sqrt{g})\, ,
 \end{split}
\end{equation}
where $\nabla u'$ denotes the weak gradient of a Sobolev function $u'\in W^{1,2}(M,g,\ee^{f} \sqrt{g})$  in the classical distributional sense.

Let now $u,v\in W^{1,2}(X,\sfd,\mm)$. We have to check the validity of the parallelogram identity \eqref{eq:defChQuad}. Using the representation formula \eqref{eq:Ch|Df|} and the fact that $\mm(\Sigma)=0$, this is equivalent to show that
\begin{equation}\label{eq:InfHilbXreg}
\int_{X\setminus \Sigma} |D(u+v)|^{2} \, \dd \mm+ \int_{X\setminus \Sigma} |D(u-v)|^{2} \, \dd \mm =2 \int_{X\setminus \Sigma} |Du|^{2} \, \dd \mm + 2 \int_{X\setminus \Sigma} |Dv|^{2} \, \dd \mm\,.  
\end{equation}
Since by assumption $\Sigma$ is a closed set and $X\setminus \Sigma$ is isomorphic to the smooth  weighted Riemannian manifold $(M,g, \ee^{f}\sqrt{g})$, we have that the relaxed gradient of a Sobolev function restricted to $X\setminus \Sigma$ coincides with the modulus of the classical weak gradient (in Sobolev sense).
Thus the validity of \eqref{eq:InfHilbXreg} follows from \eqref{eq:InfHilbM}.
\end{proof}

\begin{remark}\label{rem:InfHilbExamples}
The framework encompassed by the assumptions of Proposition \ref{prop:InfHilbSing} is very general, and includes most (if not all) the singularities appearing in the low-energy description of string theory as localised sources: for instance singular metrics which are asymptotic to D-branes, M-branes and O-planes near the singular set fit into this setting, since the closure of the singular set has measure zero.
\end{remark}

\subsubsection{Spectrum of the Laplacian and Cheeger constants in infinitesimally  Hilbertian spaces}

In this section we assume $(X,\di,\mm)$ to be infinitesimally Hilbertian. We have seen in the previous section that the Laplacian is a non-negative, densely defined, self-adjoint operator, and thus it enters in the classical framework for spectral theory.

The \emph{regular values} of $\Delta$ are the values $\lambda\in \mathbb{C}$ such that $(\lambda \rm{Id}-\Delta)$ has a bounded inverse. Its \emph{spectrum} $\sigma(\Delta)$ is the set of numbers $\lambda\in [0,\infty)$ that are not regular values. A non-zero function $f\in D(\Delta)$ is an \emph{eigenfunction} of $\Delta$ of \emph{eigenvalue} $\lambda$ if  $\Delta f=\lambda f$. The set of all eigenvalues constitutes the \emph{point spectrum} while the \emph{discrete spectrum} $\sigma_d(\Delta)$ is the set of eigenvalues which are isolated in the point spectrum and with finite dimensional eigenspace. Finally the \emph{essential spectrum} is defined as $\sigma_{ess}(\Delta):=\sigma(\Delta)\setminus \sigma_{d}(\Delta)$. 

Recall that the self-adjointness of the Laplacian implies that eigenfunctions relative to different eigenvalues are orthogonal. For spaces of finite measure, constant functions are eigenfunctions relative to $\lambda_0=0$, and thus any other eigenfunction has null mean value. 

Given $f\in W^{1,2}(X, \di, \mm)$, $f\not\equiv 0$, its Rayleigh quotient is defined as
\begin{equation}\label{eq: Rayleigh} 
\mathcal{R}(f):= \frac{2\Ch(f)}{\int_X |f|^2\,\dd\mm}\, .
\end{equation}
Notice that for any eigenfunction $f_{\lambda}$ of eigenvalue $\lambda$, it holds $\lambda=\mathcal{R}(f_{\lambda}).$ 

The infimum of the essential spectrum plays an important role in the sequel, since the set of eigenvalues below $\inf \sigma_{ess}(\Delta)$ is at most countable and, listing them in an increasing order $\lambda_0<\lambda_1\leqslant ...\leqslant \lambda_k\leqslant ...$, it holds
\begin{equation}\label{eq:defeigk}
\lambda_k= \min_{V_{k+1}}\, \max_{f\in V_{k+1},\\ f\not\equiv 0} \ \mathcal{R}(f)\, ,
\end{equation}
where $V_{k}$ denotes a $k$-dimensional subspace of $W^{1,2}(X,\di,\mm)$.

The \textit{perimeter} of a Borel subset $B\subset X$ with $\mm(B)<\infty$ is defined by
\begin{equation*}
\mathrm{Per}(B):=\inf\bigg\{\liminf_{n\rightarrow \infty}\int_X |\nabla f_n|\,\dd\mm: f_n\in \mathsf{Lip}_{bs}(X), f_n\rightarrow \chi_B \ \mathrm{in} \ L^1(X,\mm)\bigg\}\,.
\end{equation*}

Using the notion of perimeter, one can define the $k$-\emph{Cheeger constant} (or \emph{$k$-way isoperimetric constant}) as
\begin{equation}\label{def: m-Cheeger}
h_k(X):=\inf_{B_0,..,B_k}\, \max_{0\leqslant i\leqslant k} \frac{\mathrm{Per}(B_i)}{\mm(B_i)}\, ,
\end{equation}
where the infimum runs over all collections of $k+1$ disjoint Borel sets $B_i\subset X$ such that $0<\mm(B_i)<\infty$.
Notice that $h_k(X)\leqslant h_{k+1}(X)$ for every $k\in \N$ and, when $\mm(X)<\infty$, $h_0(X)=0$. 

We also recall the following characterization of $h_1(X)$, valid for spaces of finite measure and that easily follows from the definitions recalling that the perimeter of a set coincides with the perimeter of its complement:
\begin{equation}\label{eq:defChConst}
h_1(X)=\inf  \left\{\frac{\mathrm{Per}(B)}{\mm(B)}\, :\, B\subset X \text{ Borel subset with $0<\mm(B)\leqslant \mm(X)/2$} \right\}. 
\end{equation}

\subsubsection{Generalization of Cheeger bounds}\label{sub:kk}

First of all, the celebrated Cheeger inequality \cite{cheeger-bound} holds in the high generality of infinitesimal Hilbertian spaces. We recall the statement below. For the proof we refer to \cite[App. A]{deponti-mondino}; see also \cite{DMS} where it is shown that the inequality is strict in a large class of singular spaces.

\begin{theorem}\label{th: cheeger ineq}
Let $(X,\di,\mm)$ be an infinitesimally Hilbertian metric measure space. If $\mm(X)<\infty$ and $\lambda_1<\inf \sigma_{ess}(\Delta)$ then
\begin{equation}\label{eq: cheeger ineq lambda1}
\frac{h_1(X)^2}{4}\leq \lambda_1\, .
\end{equation}
If $\mm(X)=\infty$ and $\lambda_0<\inf \sigma_{ess}(\Delta)$, \eqref{eq: cheeger ineq lambda1} holds replacing $\lambda_1$ by $\lambda_0$ and $h_1(X)$ by $h_0(X)$.
\end{theorem}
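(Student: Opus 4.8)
The plan is to adapt Cheeger's classical argument \cite{cheeger-bound} to the metric measure setting, using only the ingredients already recalled above: the locality and chain rule \eqref{ineq: comp relax grad} for the minimal relaxed gradient, the fact that infinitesimal Hilbertianity makes $\mathcal E=2\Ch$ a symmetric bilinear Dirichlet form with self-adjoint generator $-\Delta$, and the coarea formula relating $\int_X|Df|\,\dd\mm$ to the perimeters of the superlevel sets of $f$.

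\emph{Step 1 (a good test function).} Suppose first $\mm(X)<\infty$. Since $\lambda_1<\inf\sigma_{ess}(\Delta)$, $\lambda_1$ is an eigenvalue; let $f$ be a corresponding eigenfunction, $\Delta f=\lambda_1 f$. Constants are eigenfunctions for $\lambda_0=0$, so $\int_X f\,\dd\mm=0$ and $f$ changes sign. Write $f_{\pm}:=\max(\pm f,0)$; by \eqref{ineq: comp relax grad} applied to the $1$-Lipschitz maps $t\mapsto t_{\pm}$ one has $f_{\pm}\in W^{1,2}(X,\di,\mm)$, and by locality $|Df_{\pm}|=|Df|\,\chi_{\{\pm f>0\}}$ $\mm$-a.e. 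Up to replacing $f$ by $-f$, assume $\mm(\{f>0\})\le\mm(X)/2$ and set $u:=f_+\not\equiv 0$. Polarising the parallelogram identity and using $|D|f||=|Df|$ $\mm$-a.e.\ (the relaxed gradient vanishes $\mm$-a.e.\ on $\{f=0\}$ and equals $|Df|$ on $\{f\ne 0\}$ by locality) gives $\mathcal E(f_+,f_-)=\tfrac14\big(\mathcal E(f)-\mathcal E(|f|)\big)=0$. Therefore, using that $f\in D(\Delta)$ and the definition of $\Delta$ via $\mathcal E$,
\begin{equation*}
2\Ch(u)=\mathcal E(f_+,f_+)=\mathcal E(f,f_+)=\lambda_1\int_X f\,f_+\,\dd\mm=\lambda_1\int_X u^2\,\dd\mm ,
\end{equation*}
so $\mathcal R(u)=\lambda_1$, with $0<\mm(\{u>0\})\le\mm(X)/2$.

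\emph{Step 2 (the Cheeger trick).} Put $v:=u^2\in L^1(X,\mm)$. The chain rule gives $|Dv|=2u\,|Du|$ $\mm$-a.e., hence by Cauchy--Schwarz
\begin{equation*}
\int_X|Dv|\,\dd\mm\ \le\ 2\Big(\int_X u^2\,\dd\mm\Big)^{1/2}\Big(\int_X|Du|^2\,\dd\mm\Big)^{1/2}.
\end{equation*}
On the other hand, by the coarea formula and the substitution $t=s^2$,
\begin{equation*}
\int_X|Dv|\,\dd\mm=\int_0^\infty\mathrm{Per}(\{v>t\})\,\dd t=\int_0^\infty 2s\,\mathrm{Per}(\{u>s\})\,\dd s .
\end{equation*}
For $s>0$ one has $\{u>s\}\subset\{u>0\}$, so $\mm(\{u>s\})\le\mm(X)/2$, and \eqref{eq:defChConst} gives $\mathrm{Per}(\{u>s\})\ge h_1(X)\,\mm(\{u>s\})$; integrating and using the layer-cake identity $\int_0^\infty 2s\,\mm(\{u>s\})\,\dd s=\int_X u^2\,\dd\mm$ yields $\int_X|Dv|\,\dd\mm\ge h_1(X)\int_X u^2\,\dd\mm$. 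Combining the two displays and dividing by $\big(\int_X u^2\,\dd\mm\big)^{1/2}>0$ gives $h_1(X)\le 2\,\mathcal R(u)^{1/2}$, i.e.\ $h_1(X)^2/4\le\mathcal R(u)=\lambda_1$, which is \eqref{eq: cheeger ineq lambda1}.

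\emph{Step 3 (infinite measure, and the main obstacle).} If $\mm(X)=\infty$ and $\lambda_0<\inf\sigma_{ess}(\Delta)$, then $\lambda_0$ is attained by an eigenfunction which, being a minimiser of the Rayleigh quotient, can be chosen nonnegative (replace $f$ by $|f|$ and use $\Ch(|f|)\le\Ch(f)$ from \eqref{ineq: comp relax grad}): call it $u\ge 0$, so $\mathcal R(u)=\lambda_0$. Since $u\in L^2(X,\mm)$, each superlevel set $\{u>s\}$, $s>0$, has finite measure, so \eqref{def: m-Cheeger} with $k=0$ gives $\mathrm{Per}(\{u>s\})\ge h_0(X)\,\mm(\{u>s\})$, and Step 2 carries over verbatim to give $h_0(X)^2/4\le\lambda_0$. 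The only points requiring genuine care are the coarea identity $\int_X|Dv|\,\dd\mm=\int_0^\infty\mathrm{Per}(\{v>t\})\,\dd t$ for $v\in W^{1,2}\cap L^1$ and the chain rule $|Du^2|=2u|Du|$; both belong to the standard $BV$/Sobolev calculus on metric measure spaces and do not even require infinitesimal Hilbertianity, which enters only in Step 1 to make $\mathcal E$ bilinear and $\Delta$ self-adjoint. Everything else is the elementary Cauchy--Schwarz/layer-cake manipulation above; a complete write-up is given in \cite[App.~A]{deponti-mondino}.
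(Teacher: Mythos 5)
Your argument is correct, and it is essentially the standard Cheeger proof (bisect the eigenfunction along its sign, apply the chain rule $|D(u^2)|=2u|Du|$, coarea, Cauchy--Schwarz and layer-cake) carried out in the metric-measure setting with the minimal relaxed gradient. This is precisely the route the paper points to by citing \cite[App.~A]{deponti-mondino}, so the two proofs coincide in substance; the only detail worth watching is the polarization step $\mathcal E(f_+,f_-)=\tfrac14\bigl(\mathcal E(|f|)-\mathcal E(f)\bigr)=0$, where you silently use both the locality property $|Df|=0$ $\mm$-a.e.\ on $\{f=0\}$ and the bilinearity of $\mathcal E$, i.e.\ infinitesimal Hilbertianity — which you correctly identify as the only place this hypothesis is indispensable.
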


We will now extend some theorems proved in \cite{deluca-deponti-mondino-t} (after \cite{Liu,Miclo}) from the class of $\RCD$ spaces to the more general framework of infinitesimally Hilbertian metric measure spaces (i.e.~without any curvature assumption), and during the proofs we will focus on the modifications needed to address this case. 
In particular, all the results in this section apply to a very large class of singular metrics including D-branes, M-branes, O-planes (see Remark \ref{rem:InfHilbExamples}).

\begin{theorem}\label{th: bound h-lambdak-lambda1}
Let $(X,\di,\mm)$ be an infinitesimally Hilbertian metric measure space. Let $k\in \N^{+}$ and let us suppose that $\lambda_k<\inf \sigma_{ess}(\Delta)$. If $\mm(X)<\infty$ then
\begin{equation}\label{eq: bound h-lambdak-lambda1}
h_1(X)^2\lambda_k\leqslant 128k^2\lambda_1^2\, .
\end{equation}
If $\mm(X)=\infty$ then \eqref{eq: bound h-lambdak-lambda1} holds replacing $\lambda_1$ with $\lambda_0$ and $h_1(X)$ with $h_0(X)$.
\end{theorem}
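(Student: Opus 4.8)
This is the metric--measure incarnation of the improved higher--order Cheeger inequality of \cite{Liu,Miclo} (established for $\RCD$ spaces in \cite{deluca-deponti-mondino-t}), and the plan is to rerun that argument inside the Dirichlet--form calculus of Section~\ref{sub:mms}, checking that neither smoothness nor a curvature bound is ever used. Assume first $\mm(X)<\infty$. Since $\lambda_k<\inf\sigma_{ess}(\Delta)$, the min--max formula \eqref{eq:defeigk} is legitimate for index $k$, so it suffices to exhibit functions $\psi_0,\dots,\psi_k\in W^{1,2}(X,\di,\mm)$ with pairwise disjoint supports and $\mathcal R(\psi_i)\leqslant 128\,k^2\lambda_1^2/h_1(X)^2$ for every $i$: by locality of the minimal relaxed gradient the Cheeger energy is additive over disjoint supports, so the Rayleigh quotient of any linear combination of the $\psi_i$ is a convex combination of the $\mathcal R(\psi_i)$, and the span of $\{\psi_i\}$ is an admissible $(k+1)$--dimensional test space.

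\textbf{The test family.} I would take $u$ a first non--constant eigenfunction, $\Delta u=\lambda_1 u$, and --- after subtracting an $\mm$--median and possibly replacing $u$ by $-u$ --- work with the truncated part $u_+$, which is supported on a set of measure $\leqslant\mm(X)/2$ and satisfies the standard eigenfunction estimate $2\Ch(u_+)\leqslant\lambda_1\|u_+\|_{L^2}^2$. For a.e.\ level $t>0$ one has the isoperimetric bound $\Per(\{u_+>t\})\geqslant h_1(X)\,\mm(\{u_+>t\})$, which is exactly \eqref{eq:defChConst}, the coarea inequality $\int_X|Du_+|\,\dd\mm\geqslant\int_0^\infty\Per(\{u_+>t\})\,\dd t$ for the relaxed gradient, and Cauchy--Schwarz $\int_X|Du_+|\,\dd\mm\leqslant\mm(\{u_+>0\})^{1/2}(2\Ch(u_+))^{1/2}$. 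Using these one slices the range of $u_+$ into $\asymp k$ consecutive intervals on which $u_+$ distributes its energy--weighted $L^2$--mass equally, lets $\psi_i$ be the truncation of $u_+$ clamped to the $i$--th interval and shifted to vanish off it, and keeps every other interval so as to get $\geqslant k+1$ functions with pairwise disjoint (in fact separated) supports. The chain--rule inequality \eqref{ineq: comp relax grad} and locality give $\sum_i 2\Ch(\psi_i)\leqslant 2\Ch(u_+)\leqslant\lambda_1\|u_+\|_{L^2}^2$, while the equidistribution together with the isoperimetric bound forces $\|\psi_i\|_{L^2}^2\gtrsim \|u_+\|_{L^2}^2\,h_1(X)^2/(k^2\lambda_1)$ for each $i$; dividing gives $\mathcal R(\psi_i)\lesssim k^2\lambda_1^2/h_1(X)^2$, and bookkeeping the Cauchy--Schwarz step (where a factor $k$ gets squared) together with the two halves of $u$ pins the constant at $128$.

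\textbf{Infinite measure, and the main obstacle.} For $\mm(X)=\infty$ there are no constants in $L^2$ and no mean/median constraint to impose; one starts instead from a ground state $\Delta u=\lambda_0 u$ and uses the characterization of $h_0(X)$ in place of \eqref{eq:defChConst}, the rest being unchanged, which produces the stated replacement $\lambda_1\to\lambda_0$, $h_1\to h_0$. The combinatorics of the slicing is identical to \cite{Liu,Miclo}; the work --- and the only place the new generality bites --- is analytic: I would need that clamped truncations and translates of $W^{1,2}$ functions stay in $W^{1,2}$ with exactly the energy bookkeeping above (this is \eqref{ineq: comp relax grad} plus locality of $|Du|$), that the coarea inequality $\int_X|Du|\,\dd\mm\geqslant\int_{\R}\Per(\{u>t\})\,\dd t$ holds for the relaxed gradient and the relaxed perimeter in a bare metric measure space, and that $\Per(A)\geqslant h_1(X)\mm(A)$ for $\mm(A)\leqslant\mm(X)/2$ is literally \eqref{eq:defChConst}; the existence of the eigenfunctions invoked and the validity of \eqref{eq:defeigk} are guaranteed precisely by the hypothesis $\lambda_k<\inf\sigma_{ess}(\Delta)$. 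None of these steps calls on a synthetic lower Ricci bound --- the whole point of the generalization --- so the hard part will be only the careful verification of the coarea inequality and of the truncation estimates directly in terms of the minimal relaxed gradient.
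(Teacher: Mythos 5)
Your plan and the paper's proof diverge in structure. The paper's argument is short and modular: it cites the inequality
$$\phi(f)\leqslant 8\sqrt{2}\,\frac{k}{\sqrt{\lambda_k}}\,\frac{\||\nabla f|\|_{L^2}^2}{\|f\|_{L^2}^2}, \qquad f\in \Lip_{bs}(X),$$
labelled \eqref{eq: bound superlevel eigen}, which had already been established in \cite{deluca-deponti-mondino-t} using only the min--max characterization \eqref{eq:defeigk} and the coarea inequality (neither of which needs curvature), and then the only new work is to apply it to a sequence $(f_n)\subset\Lip_{bs}(X)$ converging \emph{in energy} to the positive and negative parts $f^{\pm}$ of an eigenfunction (eigenvalue $\lambda_1$ if $\mm(X)<\infty$, $\lambda_0$ otherwise), invoking $\lambda_1=\mathcal R(f^+)=\mathcal R(f^-)$ and the characterization \eqref{def: m-Cheeger} of $h_1$, then passing to the limit. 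You instead propose to re-derive everything from scratch by slicing $u_+$, which is roughly the argument \emph{behind} \eqref{eq: bound superlevel eigen}, so the two are related but not the same proof.

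As a sketch your approach has a genuine gap, and a few technical mismatches with the target statement. The gap: the central quantitative claim, that the clamped pieces satisfy $\|\psi_i\|_{L^2}^2\gtrsim\|u_+\|_{L^2}^2\,h_1(X)^2/(k^2\lambda_1)$, is asserted and not derived; you do not say precisely which quantity is equidistributed in the slicing, nor how the isoperimetric bound converts that into a lower bound on $\|\psi_i\|_{L^2}^2$, and the constant $128$ is simply announced. Without that estimate the whole min--max step collapses. Two further points. First, the median shift: subtracting a nonzero median $a$ destroys the identity $\mathcal R(u_+)=\lambda_1$, since $\mathcal R\bigl((u-a)_+\bigr)=\lambda_1\bigl(1+a\int (u-a)_+\,\dd\mm/\|(u-a)_+\|_{L^2}^2\bigr)$, which can exceed $\lambda_1$ for $a>0$; you flag the possibility of replacing $u$ by $-u$, but this needs to be made precise, whereas the paper avoids the issue entirely by working with both $f^+$ and $f^-$ and using the definition \eqref{def: m-Cheeger} of $h_1$ through two disjoint Borel sets, which does not require the sets to have measure at most $\mm(X)/2$. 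Second, your test functions are built directly from the $W^{1,2}$ eigenfunction, while all the level--set/coarea machinery in the paper is formulated for $\Lip_{bs}(X)$; the paper addresses this with density in energy of $\Lip_{bs}(X)$ (which holds under infinitesimal Hilbertianity), a step you would still need. In short: the approach is a legitimate alternative route, but as a proof it is incomplete at exactly the point where the constant and the $k$-dependence are produced.
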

\begin{proof}
We consider a non-null function $f\in \Lip_{bs}(X)$ and set
$$\phi(f):=\inf_{t\geqslant 0} \frac{\Per(\{x: f(x)>t\})}{\mm(\{x: f(x)>t\})}.$$
In \cite{deluca-deponti-mondino-t} the following bound has been proved
\begin{equation}\label{eq: bound superlevel eigen}
\phi(f)\leqslant 8\sqrt{2}\frac{k}{\sqrt{\lambda_k}}\frac{\||\nabla f|\|_{L^2}^2}{\|f\|^2_{L^2}},  \quad \text{for all } k\in \N^{+},
\end{equation}
and we take it for granted since its proof requires only the variational characterization of the eigenvalues \eqref{eq:defeigk} and the co-area inequality for Lipschitz functions, results that hold on infinitesimally Hilbertian metric measure spaces without requiring any curvature bound.

First,  suppose $\mm(X)=\infty.$ Since the class $\Lip_{bs}(X)$ is dense in energy, we can find a sequence of functions $(f_n)\in \Lip_{bs}(X)$ such that $f_n\to f$ and $|\nabla f_n|\to |Df|$ in $L^2(X,\mm)$, where $f$ is an eigenfunction of eigenvalue $\lambda_0$. The result thus follows by applying \eqref{eq: bound superlevel eigen} to such a sequence $(f_n)$, using the trivial fact $h_0(X)\leqslant \phi(f_n)$ for any $n$, and passing to the limit.

If $\mm(X)<\infty$ we argue in a similar way just by applying \eqref{eq: bound superlevel eigen} to two sequences of functions $f_n,h_n\in \Lip_{bs}(X)$ that converge in $L^2(X,\mm)$ respectively to the positive and negative parts $f^{+},f^{-}$ of an eigenfunction $f$, of eigenvalue $\lambda_1$ with $|\nabla f_n|\rightarrow |Df^{+}|$ and $|\nabla h_n|\rightarrow |Df^{-}|$ in $L^2(X,\mm)$. The existence of such sequences is again a consequence of the density in energy of $\Lip_{bs}(X)$, noticing that $f^{+},f^{-}\in W^{1,2}(X,\di,\mm)$ by \eqref{ineq: comp relax grad}. Recalling the definition of $h_1(X)$ given in \eqref{def: m-Cheeger} and that $\lambda_1=\mathcal{R}(f^+)=\mathcal{R}(f^-)$ (see \cite{AH17}) the result follows.
\end{proof}

\begin{theorem}\label{th: MicloUpper}
Let $(X,\di,\mm)$ be an  infinitesimally Hilbertian metric measure space and assume that there exists  constants ${\rm A,B}>0$ such that, for some (and thus for all) $\bar{x}\in X$, it holds
\begin{equation}\label{eq:ExpVol}
\mm(B_r(\bar{x})) \leq {\rm A} \exp( {\rm B} \, r^2), \quad \text{for all } r>0.
\end{equation}
Then, there exists an absolute constant $C>0$ with the following property: for any $k\in \N^{+}$ such that $\lambda_k<\sigma_{ess}(\Delta)$, it holds
\begin{equation}
h_k(X)^2\leqslant Ck^6\lambda_k\,.
\end{equation}
\end{theorem}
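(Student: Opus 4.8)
The plan is to prove this as the higher-order (multi-way) Cheeger inequality, adapting to the infinitesimally Hilbertian setting the argument of \cite{Miclo,Liu} and of the $\RCD$ version in \cite{deluca-deponti-mondino-t}, and isolating the few points where a synthetic Ricci lower bound was used. Everything reduces to the following claim: \emph{there exist $k+1$ functions $u_0,\dots,u_k\in\Lip_{bs}(X)$ with pairwise disjoint supports such that $\||\nabla u_i|\|_{L^2}^2\leqslant C_1\,k^2\,\lambda_k\,\|u_i\|_{L^2}^2$ for each $i$}, with $C_1$ an absolute constant. Granting the claim, one argues exactly as in the proof of Theorem \ref{th: bound h-lambdak-lambda1}: the superlevel-set estimate \eqref{eq: bound superlevel eigen} applied to each $u_i$ gives, for a suitable threshold $t_i\geqslant 0$, a Borel set $B_i:=\{x:u_i(x)>t_i\}\subseteq\supp(u_i)$ with
\[
\frac{\Per(B_i)}{\mm(B_i)}\ \leqslant\ 8\sqrt{2}\,\frac{k}{\sqrt{\lambda_k}}\,\frac{\||\nabla u_i|\|_{L^2}^2}{\|u_i\|_{L^2}^2}\ \leqslant\ 8\sqrt{2}\,C_1\,k^3\sqrt{\lambda_k}\,;
\]
since the $\supp(u_i)$ are disjoint so are the $B_i$, which have finite positive measure, so by the definition \eqref{def: m-Cheeger} of $h_k(X)$ we get $h_k(X)\leqslant 8\sqrt{2}\,C_1\,k^3\sqrt{\lambda_k}$, i.e.\ $h_k(X)^2\leqslant C\,k^6\,\lambda_k$ with $C:=128\,C_1^2$ absolute. (The argument is uniform in whether $\mm(X)$ is finite: for $\mm(X)=\infty$ one uses the $\lambda_0$-ground state in place of the constant function.)

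For the claim I would reproduce the \emph{spectral localization} step of \cite{Miclo,Liu}. Take $L^2$-orthonormal eigenfunctions $f_0,\dots,f_k$ with eigenvalues $\lambda_0\leqslant\dots\leqslant\lambda_k<\inf\sigma_{ess}(\Delta)$ (they exist by the variational characterization \eqref{eq:defeigk}) and form the spectral map $F:=(f_0,\dots,f_k)$, normalized so $\int_X|F|^2\,\dd\mm=k+1$; then $\sum_{i=0}^{k}2\Ch(f_i)=\sum_{i=0}^{k}\lambda_i\leqslant(k+1)\lambda_k$. The localization partitions $X$, up to an $\mm$-null set, into $k+1$ measurable regions on each of which one coordinate of $F$ times a Lipschitz cutoff adapted to the region is a nonzero function that carries a definite share of the mass and has slope-energy at most $C_1 k^2\lambda_k$ times its $L^2$-norm; disjointness of supports is built in. The metric-measure tools it uses are exactly those available on any infinitesimally Hilbertian space: the Hilbert structure of $W^{1,2}(X,\di,\mm)$, the locality and chain/Leibniz rules for the minimal relaxed gradient \eqref{ineq: comp relax grad}, the coarea inequality for Lipschitz functions, and the mass preservation \eqref{eq:MassPres} of the heat flow wherever Miclo's heat-semigroup smoothing is invoked — which is precisely why the exponential volume-growth hypothesis \eqref{eq:ExpVol} is assumed. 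No Ricci lower bound enters.

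The one real adaptation is that on a bare infinitesimally Hilbertian space the $f_i$ need not be continuous, let alone Lipschitz, so the localization cannot be run literally on $F$. As in the proof of Theorem \ref{th: bound h-lambdak-lambda1}, I would first replace each $f_i$ by $\tilde f_i\in\Lip_{bs}(X)$ with $\tilde f_i\to f_i$ and $|\nabla\tilde f_i|\to|Df_i|$ in $L^2(X,\mm)$ (density in energy of $\Lip_{bs}$), so that $\tilde F:=(\tilde f_0,\dots,\tilde f_k)$ is genuinely Lipschitz, $L^2$-almost orthonormal, and satisfies $\||\nabla\tilde f_i|\|_{L^2}^2\leqslant(\lambda_k+o(1))\|\tilde f_i\|_{L^2}^2$ — all the localization consumes; running it on $\tilde F$ produces the $u_i$ with $\lambda_k$ replaced by $\lambda_k+o(1)$, and the $o(1)$ is absorbed into $C_1$. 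The step I expect to be the main obstacle is exactly this: checking that the quantitative localization estimates of \cite{Miclo,Liu} are stable under the Lipschitz approximation, so that both the disjointness of the supports and the slope bound survive passing from $F$ to $\tilde F$ (one must propagate the $L^2$ and slope errors of $\tilde f_i-f_i$ through each cutoff and each product). This is routine bookkeeping, but it is the only place where the proof departs from the $\RCD$ argument of \cite{deluca-deponti-mondino-t}; the coarea/Cheeger sweep turning the $u_i$ into the $B_i$, and the variational principle \eqref{eq:defeigk}, transfer verbatim.
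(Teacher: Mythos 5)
Your overall plan — localize the first $k+1$ eigenfunctions onto disjoint supports, then sweep level sets to get disjoint regions with controlled isoperimetric ratios — is the right architecture, but the execution diverges from the paper in a way that opens two real gaps, which happen to offset in the arithmetic and therefore are easy to miss.

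First, your ``Claim'' that the localization produces $u_0,\dots,u_k$ with $\mathcal R(u_i)\leqslant C_1 k^2\lambda_k$ is not what the cited reference gives, and you never prove it. The result of Miclo on p.~325 that the paper appeals to is stated directly in the form $\tfrac{\tilde C}{k^6}\,\Lambda_k\leqslant\lambda_k$, where $\Lambda_k$ is the min-max of Dirichlet bottom eigenvalues over $k+1$ disjoint Borel sets; the polynomial factor Miclo produces is $k^6$, not $k^2$. If a $k^2$ localization held here, the theorem could be strengthened to $h_k^2\leqslant C k^2\lambda_k$ by a standard sweep, and the $k^6$ in the statement would be gratuitous; the fact that the paper carries a $k^6$ should tell you the localization available in this setting is lossier than you assume. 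Moreover, Miclo's localization is a nontrivial heat-semigroup/probabilistic argument, not a bookkeeping exercise: the paper's insight is precisely that one does \emph{not} re-run that argument but instead verifies its hypotheses — namely that $H_t$ is Markovian, which in turn uses the mass-preservation guaranteed by \eqref{eq:ExpVol} together with the comparison principle — and then cites the abstract Markov-semigroup theorem. Your proposal instead asserts the output of the localization and defers the reasoning, which is exactly the part that needs the Markov structure you mention only in passing.

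Second, \eqref{eq: bound superlevel eigen} is the wrong tool here, and it is the source of the accidental $k^6$. That inequality carries an extra $k/\sqrt{\lambda_k}$ and is tailored to Theorem \ref{th: bound h-lambdak-lambda1}, where one applies it to an approximate $\lambda_1$-eigenfunction so that $\mathcal R(f)\approx\lambda_1$ and the $\lambda_k$ in the denominator does useful work. Applied as you apply it, to a function $u_i$ with $\mathcal R(u_i)\sim k^{p}\lambda_k$, it yields $\phi(u_i)\lesssim k^{p+1}\sqrt{\lambda_k}$, i.e.\ $h_k^2\lesssim k^{2p+2}\lambda_k$. With your $p=2$ this gives $k^6$; with the actual $p=6$ from Miclo it would give $k^{14}$, so the two errors cancel only because of the mismatched exponent. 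What the paper actually uses at this stage is the bare co-area/Cheeger sweep on each $B_j$ (the argument around \eqref{eq: first part cheeger}–\eqref{eq: third part cheeger}, following \cite{DMS}), which gives $\Per(B_{\bar t})/\mm(B_{\bar t})\leqslant 2\sqrt{\lambda_0(B_j)+\varepsilon}+\varepsilon$ with no extra powers of $k$; combining that with Miclo's $\Lambda_k\leqslant (k^6/\tilde C)\lambda_k$ gives $h_k^2\leqslant 4\Lambda_k\leqslant (4/\tilde C)k^6\lambda_k$ cleanly. You should replace \eqref{eq: bound superlevel eigen} by this standard sweep, and replace your unproved localization claim by the cited Miclo inequality, after checking (as the paper does, separately for $\mm(X)<\infty$ and $\mm(X)=\infty$) that the heat semigroup is Markovian/sub-Markovian on the space or on its conditioned restrictions. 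As written, the proposal is not a proof.
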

\begin{proof}

First of all, recall that under the assumption \eqref{eq:ExpVol}  the heat flow is mass preserving, i.e. \eqref{eq:MassPres} holds  (see \cite[Th.4.16, Th. 4.20]{AGS2}).

\begin{itemize}
\item Case $\mm(X)<\infty$:
Since the heat flow $H_{t}:L^{2}(X,\mm)\to L^{2}(X,\mm)$, $t\geqslant 0$,  is mass preserving and  satisfies comparison principles (i.e. for any $c\in \R$:  $f\geqslant c$ $\mm$-a.e. implies $H_{t} f \geqslant c$ $\mm$-a.e., and $f\leqslant c$ $\mm$-a.e. implies $H_{t} f \leq c$ $\mm$-a.e., see  \cite[Th.4.16]{AGS2}), then $H_{t}$ is Markovian, i.e. if $L^{2}(X,\mm) \ni f\geqslant 0$ $\mm$-a.e. then $H_{t}(f) \geqslant 0$ $\mm$-a.e., and  $H_{t} \mathsf{1}= \mathsf{1}$ where  $\mathsf{1}$ denotes the function equal to $1$ $\mm$-a.e.

We can thus appeal to a result of Miclo \cite[page 325]{Miclo} (as we did in \cite{deluca-deponti-mondino-t}) and infer that
\begin{equation}\label{eq: Miclo Lambda_m-lambda_m}
\frac{\tilde{C}}{k^6}\Lambda_k\leqslant \lambda_k
\end{equation}
for an absolute constant $\tilde{C}>0$, where
$$\Lambda_k:=\min\left\{\max_{j}\, \lambda_0(B_j)\,:\,\{B_j\}_{j=0}^k \text{ are pairwise disjoint Borel sets},\, \mm(B_j)>0\right\}$$
and we are defining $\lambda_0(B)$ as
\begin{equation}\label{eq:deflambda0B}
\lambda_0(B):=\inf\left\{ \mathcal{R}(f)\, : \, f\in W^{1,2}(X,\di,\mm),\, f=0\,\,  \mm\textrm{-a.e.~on}\, B^c, f\not\equiv 0\right\}.
\end{equation}

Let $B\subset X$ be a Borel set with $\mm(B)\in (0,\mm(X))$. We now fix $\varepsilon>0$ and let $f\in W^{1,2}(X,\di,\mm)$, $f=0$ $\mm$-a.e.~on $B^c$, be such that 
$$\varepsilon+\lambda_0(B)\geqslant \frac{\int_X |Df|^2\,\dd\mm}{\int_X |f|^2\,\dd\mm}\,.$$

We fix now a Borel representative of $f$ and we set $B_t:=\{x\in X : |f^2(x)|>t\}\subset B$, $t>0$.  Reasoning as in the proof of \cite[Th.~4.6]{DMS} (using that $f^2$ is a $\BV$ function to which we can apply the co-area formula and noticing that all the arguments involved do not require $(X,\di,\mm)$ to be an $\RCD$ space) we can conclude that
\begin{equation}\label{eq: first part cheeger}
2[\varepsilon+\lambda_0(B)]^{1/2}\geqslant 2\left(\frac{\int_X |Df|^2\,\dd\mm}{\int_X |f|^2\,\dd\mm}\right)^{1/2}\geqslant \frac{\int_0^{\esssup f^2}\Per(B_t)\,\dd t}{\int_0^{\esssup f^2}\mm(B_t)\,\dd t}\,.
\end{equation}

Now let $\phi_2(f):=\inf\left\{\frac{\Per(B_t)}{\mm(B_t)} : t\in (0,\esssup f^2)\right\}$ and notice that $\phi_2(f)\in [0,\infty)$. By definition of infimum we find a $\bar{t}\in (0,\esssup f^2)$ such that 
\begin{equation}\label{eq: second part cheeger}
\frac{\int_0^{\esssup f^2}\Per(B_t)\,\dd t}{\int_0^{\esssup f^2}\mm(B_t)\,\dd t}\geqslant \phi_2(f)\geqslant \frac{\Per(B_{\bar{t}})}{\mm(B_{\bar{t}})}-\varepsilon\, .
\end{equation}

Putting \eqref{eq: first part cheeger} and \eqref{eq: second part cheeger} together, it follows that for any Borel set $B\subset X$, $\mm(B)\in (0,\mm(X))$, there exists a Borel set $B_{\bar{t}}\subset B$, $\mm(B_{\bar{t}})>0$, such that
\begin{equation}\label{eq: third part cheeger}
\varepsilon+2[\varepsilon+\lambda_0(B)]^{1/2}\geqslant \frac{\Per(B_{\bar{t}})}{\mm(B_{\bar{t}})}\, .
\end{equation}
Since $B\subset X$ and $\varepsilon>0$ are arbitrary, by \eqref{eq: third part cheeger} we obtain
$h^2_k(X)\leqslant 4\Lambda_k$ from the definition \eqref{def: m-Cheeger} of $h_k(X)$. Together with \eqref{eq: Miclo Lambda_m-lambda_m} this leads to the desired conclusion, where $C:=4/\tilde{C}.$
\item Case $\mm(X)=\infty$:
For a Borel set $E\subset X$ with finite (non-zero) measure, we introduce the notation $H_{t,E}$ for the heat semigroup restricted to $E$:
\[
H_{t,E}: L^{2}(\mm_{E})\to L^{2}(\mm_{E}), \quad H_{t,E}(f):= \chi_{E} H_{t}(\chi_{E} \, f),
\]
where $\mm_{E}:= \mm(E)^{-1} \, \mm \llcorner_{E}$  is the conditional expectation of $\mm$ with respect to $E$. 
Using the assumption \eqref{eq:ExpVol} and arguing by approximation, one can show that  $H_{t,E}:L^{2}(\mm_{E})\to L^{2}(\mm_{E})$ is sub-Markovian in the sense that $H_{t,E} \mathsf{1}\leqslant \mathsf{1}$ for any $t\geqslant 0$ (i.e. one approximates the infinite measure $\mm$ by an increasing sequence of measures $\mm_{k}$ where one can apply comparison principle, and then pass to the limit; see for instance the proof of  \cite[Th. 4.20]{AGS2}). 
Moreover, $H_{t,E}$ is a continuous self-adjoint semigroup in $L^2(\mm_E)$ (see \cite[page 325]{Miclo}).

We can then follow verbatim the proof of the corresponding case given in \cite[Theorem 4.9]{deluca-deponti-mondino-t}, noticing that the variational characterization of the eigenvalues $\lambda_0,...,\lambda_k$ as well as the density of $\Lip_{bs}(X)$ in $W^{1,2}(X,\di,\mm)$ still hold under the infinitesimally Hilbertianity assumption. 
\end{itemize}
\end{proof}

\subsection{Curvature-dimension conditions} 
\label{sub:RCD}

We have seen in Sec.~\ref{sub:REC} that when the REC is satisfied (and in particular for com\-pacti\-fi\-ca\-tions of string/M-theory) the weighted Ricci curvature satisfies the simple bound (\ref{eq:boundRicnf1}), $\Ricdf \geqslant \Lambda$. In Sec.~\ref{sub:eff-dim} we saw that the number $N=2-d$ could be interpreted as an effective dimension. We will now introduce a class of spaces called $\RCD(K,N)$ (for \emph{Riemannian Curvature-Dimension}) that reduces to (\ref{eq:boundRicnf1}) on smooth manifolds for $K=\Lambda$, $N=2-d$, but includes more general singular spaces. We will show in Corollary \ref{cor:DpbraneRCD}  that the class of allowed singularities includes those induced by D$p$-branes. In our previous paper \cite[Sec.~3]{deluca-deponti-mondino-t}  we treated the case $N\in (1, +\infty]$ (paying the price of not explicitly controlling the lower Ricci bound $K\in \R$), the treatment for $N<0$ presents both similarities and differences. The main advantage of considering negative $N$ is that  it allows for a very neat control of the Ricci lower bound, since $K$ coincides with the cosmological constant $\Lambda$, when the REC is satisfied.

For $N\in (-\infty, 0)$, $\kappa\in \R$ and $\theta\geqslant 0$, denote
\begin{equation}\label{eq: sine func curv}
\mathfrak{s}_\kappa(\theta):=\begin{cases}
\frac{1}{\sqrt{\kappa}}\sin(\sqrt{\kappa}\theta) \qquad &\textrm{if} \ \kappa>0\,,\\
\theta \qquad &\textrm{if} \ \kappa=0\,,\\
\frac{1}{\sqrt{-\kappa}}\sinh(\sqrt{-\kappa}\theta) \qquad &\textrm{if} \ \kappa<0\,,\\
\end{cases}
\end{equation}
For $t\in[0,1]$, define the quantity
\begin{equation}\label{eq: sigma func curv}
\sigma_{\kappa}^{(t)}(\theta):=\frac{\mathfrak{s}_\kappa(t\theta)}{\mathfrak{s}_\kappa(\theta)}\,,\hspace{2.5cm} \sigma_{\kappa}^{(t)}(0):=t\,,
\end{equation}
with $\theta>0$ if $\kappa\leqslant 0$ and $\theta\in (0,\pi/\sqrt{\kappa})$ if $\kappa>0$.
\\Finally, for $t>0$,  consider the functions
\begin{equation}\label{eq: tau func curv}
\tau_{K,N}^{(t)}(\theta):=t^{1/N}\sigma_{K/(N-1)}^{(t)}(\theta)^{(N-1)/N}
\end{equation}
with $\theta>0$ if $K\geqslant 0$ and $\theta\in\Big(0,\pi\sqrt{\frac{N-1}{K}}\Big)$ if $K<0.$ For $t=0$, set $\tau_{K,N}^{(0)}(\theta):=0.$ For $K<0$, we also set $\sigma_{K,N}^{(t)}(\theta):=\infty$ and $\tau_{K,N}^{(t)}(\theta):=\infty$ if $\theta\ge\pi\sqrt{\frac{N-1}{K}}$. 

An important role will be played by the Tsallis entropy (\ref{eq:tsallis-int}), for the choice $\alpha=(N-1)/N$ and $N\in (-\infty, 0)$. Let us briefly recall it in the metric measure notation. Let $(X,\sfd, \mm)$ be a metric measure space and fix $\alpha>1$. Given $\mu\in \mathcal{P}(X)$,  the $\alpha$-Tsallis entropy of $\mu$ with respect to $\mm$ is defined by
\begin{equation}
\mathcal{S}_\alpha(\mu):=
\begin{cases}
\frac{1}{\alpha-1}\left(1-\int_X \rho^{\alpha}\dd\mm\right), \qquad &\text{if} \ \mu=\rho\mm\ll \mm \text{ and } \rho^\alpha\in L^1(X,\mm)  \\
+\infty \qquad &\textrm{otherwise.}
\end{cases}
\end{equation}
Recall that $\mu\ll \mm$ means that $\mu$ is absolutely continuous with respect to $\mm$: any Borel set $E$ that has $\mm(E)=0$ also has $\mu(E)=0$.

We are now ready to recall the definition of the \emph{curvature-dimension condition} $\CD(K,N)$, for $N<0$, given in   \cite{Ohta16}. See also \cite{MaRiSo21, MaRi21, Osh22} for some recent developments in the theory of $\CD(K,N)$ spaces, for negative $N$: stability properties, existence of optimal transport maps and local-to-global property (for the reduced condition).
\begin{definition}\label{def: CD(K,N)}
A metric measure space $(X,\di,\mm)$ satisfies the curvature-dimension condition $\CD(K,N)$, $K\in \R$ and $N<0$, if for any couple of absolutely continuous measures $\mu_0=\rho_0\mm$, $\mu_1=\rho_1\mm\in \mathcal{P}_2(X)$ there exists  $\nu\in \mathrm{OptGeo}(\mu_0,\mu_1)$ such that for all $t\in[0,1]$ and all $N'\in [N,0)$, denoting by $\mu_t:=(\ee_t)_{\sharp}\nu$, it holds
\begin{equation}\label{eq: def CD(K,N)}
{\mathcal S}_{\frac{N'-1}{N'}}(\mu_t)\geq -N' +N' \int_{X\times X}\left[\tau_{K,N'}^{(1-t)}(\di(x,y))\rho_0(x)^{-\frac{1}{N'}}+\tau_{K,N'}^{(t)}(\di(x,y))\rho_1(y)^{-\frac{1}{N'}}\right]\dd\pi(x,y)
\end{equation}
for all $t\in (0,1)$, for some $W_2$-optimal coupling $\pi$ from $\mu_0$ to $\mu_1$. 

We say that $(X,\di,\mm)$ satisfies the $\RCD(K,N)$ condition if it is infinitesimally Hilbertian and satisfies the $\CD(K,N)$ condition.
\end{definition}

\begin{remark}\label{rem:compatibility}
\begin{itemize}
\item The definition given in  \cite{Ohta16} involves a slightly different expression of the entropy, namely $\int_X \rho^{\frac{N-1}{N}} \dd \mm$ is considered instead of the Tsallis form $-N \left(1-\int_X \rho^{\frac{N-1}{N}} \dd \mm  \right)$. However the two definitions are equivalent.
\item If $(X,\sfd,\mm)$ is a smooth metric measure space, then it satisfies $\CD(K,N)$ in the sense of Def.~\ref{def: CD(K,N)} if and only if its $N$-Bakry--\'Emery-Ricci tensor is bounded below by $K$ (cf. \eqref{eq:defRicciNf}), as proved in \cite[Th.~4.20]{Ohta16}.
\end{itemize}
\end{remark}

In the sequel, we will consider metrics in polar coordinates with points denoted as $x=(\Theta,r)\in \mathbb{S}^n\times (0,\infty),$ while the origin will be denoted by $\mathsf{O}:=\{r=0\}$. 
	
\begin{lemma}\label{lem: geod not orig}
	Let $(X,\sfd)$ be a metric space associated to a smooth, compact manifold glued smoothly with an end where the Riemannian metric (not smooth at the origin $\mathsf{O}\in X$) is of the form 
	$$g=\dd r^2+\ell(r)^2 \dd s^2_{\mathbb{S}^n}+\omega(\Theta,r)\,,$$
	with 
	\begin{equation}\label{eq:fr/r}
	\limsup_{r\to 0} \ell(r)/r<\frac{2}{\pi}\,, \qquad \sup_{\Theta\in \mathbb{S}^n}\limsup_{r\to 0}  |\omega|(\Theta,r)/r^2=0\,. 
	\end{equation}
	 Then the origin $\mathsf{O}$ cannot be in the interior of any geodesic of $X$, i.e.~if $\gamma:[0,1]\to X$ is a geodesic and $\mathsf{O}\in \gamma([0,1])$ then either $\mathsf{O}=\gamma_0$ or $\mathsf{O}=\gamma_1$.
	\end{lemma}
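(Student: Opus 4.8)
The plan is to argue by contradiction, exhibiting a shortcut that goes \emph{around} the tip and beats a would-be geodesic through it. Suppose $\gamma\colon[0,1]\to X$ is a constant-speed minimizing geodesic with $\gamma(t_0)=\mathsf{O}$ for some $t_0\in(0,1)$. For small $s>0$ set $p:=\gamma(t_0-s)$ and $q:=\gamma(t_0+s)$; by continuity these lie in the end for $s$ small, with coordinates $p=(\Theta_p,r_p)$, $q=(\Theta_q,r_q)$. The goal is to build an explicit competitor curve from $p$ to $q$ that is strictly shorter than $\sfd(p,q)$, contradicting minimality; the gain will come precisely from the pinching $\limsup_{r\to0}\ell(r)/r<\tfrac2\pi$ in \eqref{eq:fr/r}.

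First I would extract the consequences of minimality. A subsegment of a minimizing geodesic through an interior point is itself minimizing, and $\gamma$ has constant speed $L=\sfd(\gamma(0),\gamma(1))$, so
\[
\sfd(p,\mathsf{O})=\sfd(\mathsf{O},q)=sL,\qquad \sfd(p,q)=\sfd(p,\mathsf{O})+\sfd(\mathsf{O},q)=2sL.
\]
Next, the radial curve gives $\sfd(p,\mathsf{O})\le r_p$, while the asymptotic form of $g$ together with $|\omega|(\Theta,r)=o(r^2)$ shows that any curve from $p$ to $\mathsf{O}$ has length $\ge(1-o(1))\,r_p$ (the $\dd r^2$ term dominates near the tip), so $\sfd(p,\mathsf{O})=(1+o(1))\,r_p$ and likewise $\sfd(\mathsf{O},q)=(1+o(1))\,r_q$. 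Combined with $\sfd(p,\mathsf{O})=\sfd(\mathsf{O},q)$ this forces $r_p/r_q\to1$ as $s\to0$; writing $\bar r:=\max(r_p,r_q)$, the quantities $r_p,r_q,\bar r$ are all comparable and $\sfd(p,q)=2\bar r+o(\bar r)$.

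Now the competitor: with $\rho:=\min(r_p,r_q)$, concatenate the radial segment from $(\Theta_p,r_p)$ to $(\Theta_p,\rho)$, a minimizing arc on the sphere $\{r=\rho\}$ from $\Theta_p$ to $\Theta_q$, and the radial segment from $(\Theta_q,\rho)$ to $(\Theta_q,r_q)$. The two radial pieces together have length $(1+o(1))\bigl((r_p-\rho)+(r_q-\rho)\bigr)=o(\bar r)$, since $r_p-\rho,\,r_q-\rho=o(\bar r)$. The arc has length at most the diameter of $\{r=\rho\}$ in its induced metric $\ell(\rho)^2\dd s^2_{\mathbb{S}^n}$ (perturbed by $\omega$), which is $(1+o(1))\,\pi\ell(\rho)$; by \eqref{eq:fr/r} there are $\eta>0$ and $\delta_0>0$ with $\ell(\rho)\le(\tfrac2\pi-\eta)\rho$ for $\rho<\delta_0$, so for $s$ small the competitor has total length at most
\[
(1+o(1))\,\pi\ell(\rho)+o(\bar r)\ \le\ (2-\pi\eta)\,\bar r+o(\bar r)\ <\ 2\bar r+o(\bar r)\ =\ \sfd(p,q),
\]
the strict inequality holding once $\pi\eta\,\bar r$ exceeds the error term, i.e.\ for $s$ small enough. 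This contradicts the minimality of $\gamma$, so $\mathsf{O}$ can only occur as $\gamma(0)$ or $\gamma(1)$.

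The step requiring the most care is the control of the perturbation $\omega$, \emph{uniformly} over the sphere and near the tip: one must verify that along all curves used $\omega$ is genuinely of lower order than $\dd r^2$ and than $\ell(r)^2\dd s^2_{\mathbb{S}^n}$, so that the radial distance to $\mathsf{O}$ is $(1+o(1))r$ and the sphere $\{r=\rho\}$ has diameter $(1+o(1))\pi\ell(\rho)$ as $\rho\to0$. This is exactly where the hypothesis on $|\omega|$ in \eqref{eq:fr/r} (and, in the applications, the corresponding clauses of Definition~\ref{def: Dp-brane mms}) enters; everything else is a routine length computation.
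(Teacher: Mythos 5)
Your proof is correct and follows essentially the same strategy as the paper's: argue by contradiction, lower-bound the length of the geodesic arc through $\mathsf{O}$ by the sum of radial distances $\approx r_p+r_q$ via the triangle inequality at the tip, and beat it with the competitor that goes radially inward, around a small sphere $\{r=\rho\}$ (whose diameter $\pi\ell(\rho)$ is strictly less than $2\rho$ by the pinching hypothesis), and radially back out. The paper's version streamlines the bookkeeping by restricting the geodesic so that $r(\gamma_0)=r(\gamma_1)=\varepsilon$ rather than taking symmetric parameter values and then deriving $r_p\approx r_q$, but the substance of the argument is identical.
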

	
	\begin{proof}
	Assume by contradiction that there exists a geodesic $\gamma:[0,1]\to X$ such that $\gamma_t=\mathsf{O}$ for some $t\in (0,1)$. Up to restricting and reparametrizing $\gamma$, we can assume without loss of generality that
	\begin{equation}\label{eq:gammave}
	 \gamma([0,1])\subset B_{\ve}(\mathsf{O}), \quad r(\gamma_0)=r(\gamma_1)=\ve\,,
	 \end{equation}
	   where $\ve>0$ is a small parameter to be fixed later in the proof.
	Since $\gamma$ passes through the origin $\mathsf{O}$, by triangle inequality we have that
	\begin{equation}\label{eq:LgammaO}
	{\rm Length}(\gamma)\geq \sfd(\gamma_0, \mathsf{O})+\sfd(\mathsf{O},\gamma_1)=r(\gamma_0)+r(\gamma_1)+o(\ve)=2\ve+o(\ve)\,.
	\end{equation} 
	Using \eqref{eq:fr/r} and assuming $\ve>0$ is small enough, we have that
	\begin{align}\label{eq:gammargamma}
	\sfd(\gamma_0, \gamma_1)&=\sfd\big( (\Theta(\gamma_0), r(\gamma_0) ), (\Theta(\gamma_1), r(\gamma_1) )  \big) \nonumber \\
	&\leq |r(\gamma_0)-r(\gamma_1)|+\pi \min\{ \ell(r(\gamma_0)), \ell(r(\gamma_1))\} +o(\ve)\nonumber \\
	&<2\ve+o(\ve)\,.
	\end{align}
	The combination of \eqref{eq:LgammaO} and \eqref{eq:gammargamma} contradicts the identity ${\rm Length}(\gamma)=\sfd(\gamma_0,\gamma_1)$ given by the assumption that $\gamma$ is a geodesic.	
	\end{proof}

The next proposition is inspired by \cite[Th.~4]{BaSt}. We first describe the physics interpretation. On a space of the form given in the proposition below, geodesics tend to be attracted by the origin and bend towards it. Indeed we will check later in this section that D-brane singularities, which have positive tension, are of this form. Two antipodal points can be connected by one of these bended geodesics rather than by one that goes through the origin.  Heuristically, this suggests that a  distribution of particles moving towards the origin will spread out before refocusing on the other side; moreover, a single particle belonging to it will hit the origin with probability zero.
 
\begin{proposition}\label{prop: geod not origin}
Let $(X,\sfd,\mm)$ be a metric measure space associated to a smooth, compact, weighted manifold glued smoothly with an end where the Riemannian metric $g$ can be written in polar coordinates around the (possibly non-smooth) origin $\mathsf{O}$, and assume the following:
\begin{enumerate}
\item The distance $\di$ is the length distance associated to the metric $g$ on the end of the form 
\begin{equation}\label{eq:l<r}
	g:=\dd r^2+\ell(r)^2 \dd s^2_{\mathbb{S}^n}\quad \text{with} \quad \ell(r)^2\leqslant r^2\,.
\end{equation}
\item On the end, the measure $\mm$ is absolutely continuous with respect to the standard product measure of $\mathbb{S}^n\times (0,a)$, $a>0$, and if $\mathsf{O}\in X$ it holds $\mm(\{\mathsf{O}\})=0$.  
\end{enumerate}

Then for every optimal dynamical plan $\nu$ such that $(\ee_{i})_{\sharp} \nu \ll \mm$, $i=0,1$, we have $\nu(\Gamma_{\mathsf{O}})=0$, where $\Gamma_{\mathsf{O}}:=\{\gamma\in \Gamma(X): \gamma_t=\mathsf{O} \ \textrm{for some} \ t\in(0,1)\}.$
\end{proposition}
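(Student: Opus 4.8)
The plan is to argue by contradiction: assuming $\nu(\Gamma_{\mathsf O})>0$, I will show that geodesics through $\mathsf O$ that lie in the support of an optimal plan are forced to be \emph{radial on both sides and to enter/leave along antipodal directions}, and that cyclical monotonicity then makes the source marginal $(\ee_0)_\sharp\nu$ concentrate on an $\mm$--negligible set, contradicting $(\ee_0)_\sharp\nu\ll\mm$.

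\textbf{Step 1: structure of geodesics through $\mathsf O$.} Since $(\ee_i)_\sharp\nu\ll\mm$ and $\mm(\{\mathsf O\})=0$, we have $\nu(\{\gamma:\gamma_0=\mathsf O\})=\nu(\{\gamma:\gamma_1=\mathsf O\})=0$, so $\nu$--a.e.\ $\gamma\in\Gamma_{\mathsf O}$ satisfies $\gamma_0\neq\mathsf O\neq\gamma_1$ and meets $\mathsf O$ at a \emph{unique} interior time $\tau(\gamma)\in(0,1)$ (two hitting times would make $\gamma$ restricted in between a non-constant minimising geodesic from $\mathsf O$ to $\mathsf O$). The pieces $\gamma|_{[0,\tau(\gamma)]}$ and $\gamma|_{[\tau(\gamma),1]}$ are minimising geodesics with one endpoint at $\mathsf O$, and on the end such a geodesic is radial: any curve to $\mathsf O$ has length $\int\sqrt{\dot r^{2}+\ell(r)^{2}|\dot\Theta|^{2}}\geqslant\int|\dot r|\geqslant\sfd(\cdot,\mathsf O)$, with equality forcing $\dot\Theta\equiv0$ (here $\ell(r)>0$ for $r>0$). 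Hence each such $\gamma$ enters $\mathsf O$ along a ray of direction $\Theta^{-}(\gamma)\in\mathbb S^{n}$, exits along a ray of direction $\Theta^{+}(\gamma)\in\mathbb S^{n}$, and $\sfd(\gamma_0,\gamma_1)=a(\gamma)+b(\gamma)$ with $a(\gamma):=\sfd(\gamma_0,\mathsf O)$, $b(\gamma):=\sfd(\gamma_1,\mathsf O)$.

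\textbf{Step 2: a sharp triangle inequality.} The geometric core is: if $\Theta_1,\Theta_2\in\mathbb S^{n}$ satisfy $\sfd_{\mathbb S^{n}}(\Theta_1,\Theta_2)=:\alpha<\pi$, then $\sfd\bigl((\Theta_1,\ve),(\Theta_2,\ve)\bigr)<2\ve$ for $\ve<r_0$. Indeed, the $2$--dimensional surface $S$ inside the end lying over the great circle through $\Theta_1,\Theta_2$ carries the induced metric $\dd r^{2}+\ell(r)^{2}\dd\phi^{2}$, which by $\ell(r)^{2}\leqslant r^{2}$ is pointwise dominated by the flat metric $\dd r^{2}+r^{2}\dd\phi^{2}$; hence $\sfd_X\leqslant\sfd_S\leqslant\sfd_{\mathrm{flat}}=2\ve\sin(\alpha/2)<2\ve$. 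Concatenating, for any $x,y$, a radial segment from $x$ to the point at distance $\ve$ from $\mathsf O$ on a radial-near-$\mathsf O$ minimising geodesic with direction $\Theta_x$, then this short arc at radius $\ve$, then the corresponding radial segment into $y$, gives: \emph{if $x,y$ are reached from $\mathsf O$ by minimising geodesics radial near $\mathsf O$ with directions differing by less than $\pi$, then $\sfd(x,y)<\sfd(x,\mathsf O)+\sfd(\mathsf O,y)$}, so no minimising geodesic from $x$ to $y$ passes through $\mathsf O$. (This is exactly where $\ell(r)\leqslant r$ is used, and sharply: it excludes \emph{all} non-antipodal directions, unlike the cruder threshold $2/\pi$ behind Lemma~\ref{lem: geod not orig}.) Applying this to $\gamma_0,\gamma_1$ themselves, with the directions $\Theta^{-}(\gamma),\Theta^{+}(\gamma)$ of Step~1, forces $\Theta^{+}(\gamma)=-\Theta^{-}(\gamma)$.

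\textbf{Step 3: cyclical monotonicity and conclusion.} Since $\nu$ is an optimal dynamical plan, $(\ee_0,\ee_1)_\sharp\nu$ is an optimal coupling for the cost $\sfd^{2}$, hence concentrated on a $c$--cyclically monotone Borel set $\Gamma$. Let $\gamma^{1},\gamma^{2}$ be two geodesics as in Step~1 with $(\gamma^{i}_0,\gamma^{i}_1)\in\Gamma$, and suppose $\Theta^{-}(\gamma^{1})\neq\Theta^{-}(\gamma^{2})$. Using $\Theta^{+}(\gamma^{i})=-\Theta^{-}(\gamma^{i})$, the directions of $\gamma^{1}_0$ and $\gamma^{2}_1$, and of $\gamma^{2}_0$ and $\gamma^{1}_1$, differ by strictly less than $\pi$; so Step~2 gives $\sfd(\gamma^{1}_0,\gamma^{2}_1)<a_1+b_2$ and $\sfd(\gamma^{2}_0,\gamma^{1}_1)<a_2+b_1$, which together with $\sfd(\gamma^{i}_0,\gamma^{i}_1)=a_i+b_i$ and the two-point inequality $\sfd(\gamma^{1}_0,\gamma^{1}_1)^2+\sfd(\gamma^{2}_0,\gamma^{2}_1)^2\leqslant\sfd(\gamma^{1}_0,\gamma^{2}_1)^2+\sfd(\gamma^{2}_0,\gamma^{1}_1)^2$ yields $(a_1-a_2)(b_1-b_2)<0$, in particular $a_1\neq a_2$. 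Contrapositively, $a(\gamma^{1})=a(\gamma^{2})$ forces $\Theta^{-}(\gamma^{1})=\Theta^{-}(\gamma^{2})$, hence $\gamma^{1}_0=\gamma^{2}_0$ (same direction, same distance along the uniquely determined ray from $\mathsf O$). Therefore $(\ee_0)_\sharp\nu$ gives mass $\geqslant\nu(\Gamma_{\mathsf O})>0$ to the set $P$ of all such $\gamma_0$, on which the $1$--Lipschitz function $\sfd(\cdot,\mathsf O)$ is injective (up to passing to a Borel subset of full measure, which is harmless). But on the smooth manifold $X\setminus\{\mathsf O\}$ the function $\sfd(\cdot,\mathsf O)$ has $|\nabla\,\sfd(\cdot,\mathsf O)|=1$ almost everywhere (minimisers from $\mathsf O$ exist by compactness of $X$), so by the coarea formula any set on which it is injective is $\mm$--negligible: its level sets would be at most single points while $\dim X=n+1\geqslant2$. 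Hence $\mm(P)=0$, contradicting $(\ee_0)_\sharp\nu(P)>0$ and $(\ee_0)_\sharp\nu\ll\mm$.

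\textbf{Main obstacle.} I expect the delicate point to be dovetailing the two ingredients rather than either one in isolation: the comparison of Step~2 must be sharp enough to rule out \emph{every} non-antipodal direction — this is precisely what the pointwise domination by the flat metric (hence $\ell(r)^{2}\leqslant r^{2}$) provides, whereas a naive ``go to $\mathsf O$ and back'' estimate does not — while the closing argument requires care with analytic sets, the selection of the ``good'' geodesics, and the coarea formula for the merely Lipschitz distance function $\sfd(\cdot,\mathsf O)$ on the weighted manifold $X\setminus\{\mathsf O\}$. Intuitively, when geodesics bend \emph{towards} the singularity an optimal plan between diffuse endpoints never needs to route mass exactly through $\mathsf O$, which is the opposite of the O--plane behaviour in Sec.~\ref{subsec: Oplanes not}.
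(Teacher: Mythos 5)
Your proof is correct and follows essentially the same approach as the paper's: the key idea in both is to (i) use the cone comparison $\ell(r)\leq r$ to show that a geodesic through $\mathsf O$ must enter and leave along antipodal directions, (ii) use $c$-cyclical monotonicity of the optimal plan to show that at most one such geodesic can start at a given distance $a$ from $\mathsf O$, and (iii) conclude that $(\ee_0)_\sharp\nu$ would be concentrated on an $\mm$-negligible set. The only stylistic differences are that the paper phrases the cone comparison via the explicit bound $\sfd\leq\sfd_{C(\mathbb S^n)}$ and carries out the cyclical-monotonicity computations coordinate-by-coordinate, and concludes with the observation that a graph $\{(h(r),r)\}$ has zero product measure, whereas you organize the comparison as a sharp triangle inequality at $\mathsf O$ (which is cleanly stated for arbitrary endpoints, not just those in the end) and conclude with a coarea argument on $\sfd(\cdot,\mathsf O)$ — both of which are correct and interchangeable with the paper's steps.
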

\begin{proof}
In the proof we will consider points on the end of the manifold, and accordingly we will use the polar coordinates to denote them.  

First of all, notice that for every $x_0=(\Theta_0,r_0)$, $x_1=(\Theta_1,r_1)$ we have $\di(x_0,\mathsf{O})=r_0$ and, as a consequence of the fact that $\ell(r)^2\leqslant r^2$,  we infer that $\di(x_0,x_1)\leqslant \di_{C(\mathbb{S}^n)}(x_0,x_1)$
where
$$\di_{C(\mathbb{S}^n)}(x_0,x_1):=\sqrt{r_0^2+r_1^2-2r_0r_1\cos(\di_{\mathbb{S}^n}(\Theta_0,\Theta_1)})$$
is the standard cone distance for $C(\mathbb{S}^n)$. 
\\The result will be a consequence of the following:

\textbf{Claim $(\ast)$}: For every $r>0$ there exists at most one $\Theta\in \mathbb{S}^n$ such that $\gamma_0=(\Theta,r)$ is the starting point of some geodesic $\gamma\in\supp(\nu)\cap \Gamma_{\mathsf{O}}$.

Once the claim is settled,  the proposition can be proved by contradiction. Since the restriction of an optimal dynamical plan is still optimal (see for instance \cite[Th. 7.30 (ii)]{Vil}), we can assume that $\nu$ is concentrated on $\Gamma_\mathsf{O}$. Using the fact that $(\ee_{i})_{\sharp} \nu \ll \mm$, $i=0,1,$ and since $\mm$ gives zero mass to $\mathsf{O}$ we can also assume that $\gamma_0\neq \mathsf{O}$ and $\gamma_1\neq \mathsf{O}$ for $\nu$-a.e.~$\gamma$. Equivalently, that the set of geodesics not starting nor ending in $\mathsf{O}$ is of measure zero with respect to $\nu$. The claim $(\ast)$ implies that the measure $(\ee_0)_{\sharp}\nu$ is concentrated on a set of the form $C_h:=\{(h(r),r):r>0\}$ for some function $h$, and thus $\mm(C_h)=0$ which contradicts the fact that $(\ee_0)_{\sharp}\nu\ll \mm.$ 

Thus, it remains to prove the claim $(\ast)$.  We split its proof in three steps.

\textbf{Step 1}. We first show that if $\gamma:[0,1]\rightarrow X$ is a non-constant geodesic  with endpoints $\gamma_0=(\Theta_0,r_0)$ and $\gamma_1=(\Theta_1,r_1)$ such that $\gamma_t=\mathsf{O}$ for some $t\in (0,1)$, then $\Theta_0$ and $\Theta_1$ are antipodal as points in $\mathbb{S}^n$.

Indeed, by the fact that the curve $\gamma$ is a geodesic we obtain $r_0=t\di(\gamma_0,\gamma_1)$ and $r_1=(1-t)\di(\gamma_0,\gamma_1)$ which implies $r_1=\frac{1-t}{t}r_0$.
In particular
\begin{equation}
\frac{r_0^2}{t^2}=\di^2(\gamma_0,\gamma_1)\leqslant \di^2_{C(\mathbb{S}^n)}(\gamma_0,\gamma_1)=r_0^2+\frac{(1-t)^2}{t^2}r_0^2-2\frac{(1-t)}{t}r^2_0\cos(\di_{\mathbb{S}^n}(\Theta_0,\Theta_1))
\end{equation}
from which $\cos(\di_{\mathbb{S}^n}(\Theta_0,\Theta_1))\leqslant -1$, i.e.~$\Theta_0$ and $\Theta_1$ are antipodal, as desired.

\textbf{Step 2}. We show that for  every $t\in (0,1)$ there exists at most one geodesic $\gamma\in \supp(\nu)$ with $\gamma_t=\mathsf{O}$.

Let us consider two geodesics $\gamma,\gamma'\in\supp(\nu)$ that at time $t$ pass through $\mathsf{O}$. We have $\gamma_0=(\Theta_0,tr)$, $\gamma_1=(\Theta_1,(1-t)r)$ for some $\Theta_0,\Theta_1\in\mathbb{S}^n$ and similarly  $\gamma'_0=(\Theta'_0,tr')$, $\gamma'_1=(\Theta'_1,(1-t)r')$, and we can assume that $\Theta_0,\Theta_1$ (resp.~$\Theta'_0,\Theta'_1$) are antipodal by what we have previously proved. By the cyclical monotonicity and the triangle inequality for $\di_{C(\mathbb{S}^n)}$, we know that
\begin{align*}
0&\leqslant \di^2(\gamma_0,\gamma'_1)+\di^2(\gamma'_0,\gamma_1)-\di^2(\gamma_0,\gamma_1)-\di^2(\gamma'_0,\gamma'_1)\\
&\leqslant \di^2_{C(\mathbb{S}^n)}(\gamma_0,\gamma'_1)+\di^2_{C(\mathbb{S}^n)}(\gamma'_0,\gamma_1)-r^2-r'^{\,2}\\
&\leqslant [tr+(1-t)r']^2+[tr'+(1-t)r]^2-r^2-r'^{\,2}=-2t(1-t)(r-r')^2
\end{align*}
which implies $r=r'$. By taking advantage of this information, we can derive
\begin{align*}
0&\leqslant \di^2(\gamma_0,\gamma'_1)+\di^2(\gamma'_0,\gamma_1)-\di^2(\gamma_0,\gamma_1)-\di^2(\gamma'_0,\gamma'_1)\\
&\leqslant \di^2_{C(\mathbb{S}^n)}(\gamma_0,\gamma'_1)+\di^2_{C(\mathbb{S}^n)}(\gamma'_0,\gamma_1)-2r^2\\
&=-2r^2t(1-t)[2+\cos(\di_{\mathbb{S}^n}(\Theta_0,\Theta'_1))+\cos(\di_{\mathbb{S}^n}(\Theta'_0,\Theta_1))]
\end{align*}
and in particular $\Theta_0$ and $\Theta'_1$ are antipodal and thus $\Theta_0=\Theta'_0$ and $\Theta_1=\Theta'_1$. 

\textbf{Step 3}. Conclusion of the proof of the claim $(\ast)$.

 Consider $\gamma,\gamma'\in\supp(\nu)\cap \Gamma_{\mathsf{O}}$ with $\gamma_0=(\Theta_0,r)$, $\gamma'_0=(\Theta'_0,r)$ for some $r>0$. By assumption $\gamma$ and $\gamma'$ are geodesics passing through the origin, so that $\gamma_1=(\Theta_1,r_1)$, $\gamma'_1=(\Theta'_1,r'_1)$ where $\di_{\mathbb{S}^n}(\Theta_0,\Theta_1)=\pi$, $\di_{\mathbb{S}^n}(\Theta'_0,\Theta'_1)=\pi$ for $r_1$, $r'_1$ positive real numbers.
Again by the cyclical monotonicity and the properties of $\di$ we have 
\begin{align*}
0&\leqslant \di^2(\gamma_0,\gamma'_1)+\di^2(\gamma'_0,\gamma_1)-\di^2(\gamma_0,\gamma_1)-\di^2(\gamma'_0,\gamma'_1)\\
&\leqslant \di^2_{C(\mathbb{S}^n)}(\gamma_0,\gamma'_1)+\di^2_{C(\mathbb{S}^n)}(\gamma'_0,\gamma_1)-(r+r_1)^2-(r+r'_1)^2\\
&=-2rr'_1[1+\cos(\di_{\mathbb{S}^n}(\Theta_0,\Theta'_1)]-2rr_1[1+\cos(\di_{\mathbb{S}^n}(\Theta'_0,\Theta_1)]\,.
\end{align*}
That forces $\Theta_0$ and $\Theta'_1$ to be antipodal and thus $\Theta_0=\Theta'_0$.
\end{proof}

\begin{remark}\label{rem:NonBranch}
From Step 1 in the  proof,  it follows that geodesics passing through the singular point $\mathsf{O}$ do not branch; i.e.~if two geodesics coincide for a finite time, they coincide for ever. Moreover, since out of $\mathsf{O}$ the space is smooth, we conclude that if $(X, \sfd)$ is as in the assumptions of Proposition \ref{prop: geod not origin}, then $(X,\sfd)$ is non-branching.
\end{remark}

\begin{theorem}\label{th: manifold rcd}
Let $(X,\di,\mm)$ be a metric measure space associated to a smooth, compact, weighted manifold glued smoothly with an end where the metric is non-smooth at a point $\mathsf{O}\in X$. Let us suppose that ${\rm Ric}_f^N \geqslant K$ on $X\setminus \{\mathsf{O}\}$, where $f$ is the weight function, $K\in \R$ and $N<0$. Let us also suppose that for every optimal dynamical plan $\nu$ such that $(\ee_{i})_{\sharp} \nu \ll \mm$, $i=1,2$, we have $\nu(\Gamma_{\mathsf{O}})=0$, where $\Gamma_{\mathsf{O}}:=\{\gamma\in \Gamma(X): \gamma_t=\mathsf{O} \ \textrm{for some} \ t\in(0,1)\}.$ Then $(X,\di,\mm)$ is an $\RCD(K,N)$ space. 
\end{theorem}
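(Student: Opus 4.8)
The statement has two parts: that $(X,\di,\mm)$ is infinitesimally Hilbertian, and that it satisfies $\CD(K,N)$. The first is immediate from Proposition~\ref{prop:InfHilbSing}, since the singular set is the single point $\{\mathsf{O}\}$, which is closed with $\mm(\{\mathsf{O}\})=0$, and on its complement the space is (isometric to) a smooth weighted Riemannian manifold. So the whole content is the verification of $\CD(K,N)$ in the sense of Definition~\ref{def: CD(K,N)}, and the plan is to transfer the smooth statement --- a smooth weighted manifold with $\Ricnf\geq K$ is $\CD(K,N)$, cf.\ the characterization recalled in Remark~\ref{rem:compatibility} --- across $\mathsf{O}$ using the geodesic-avoidance hypothesis.

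Concretely, I would fix absolutely continuous $\mu_0=\rho_0\mm,\mu_1=\rho_1\mm\in\mathcal{P}_2(X)$. As $X$ is compact, hence proper and geodesic, there exists an optimal dynamical plan $\nu\in\mathrm{OptGeo}(\mu_0,\mu_1)$; since $(\ee_0)_\sharp\nu=\mu_0$ and $(\ee_1)_\sharp\nu=\mu_1$ are absolutely continuous, the hypothesis yields $\nu(\Gamma_{\mathsf{O}})=0$. Combined with $\mm(\{\mathsf{O}\})=0$, this forces $\gamma([0,1])\subset M:=X\setminus\{\mathsf{O}\}$ for $\nu$-a.e.\ $\gamma$; on $M$ the structure is smooth with $\Ricnf\geq K$, and in fact ${\rm Ric}_f^{N'}\geq\Ricnf\geq K$ for every $N'\in[N,0)$ because $\frac{1}{n-N'}\geq\frac{1}{n-N}>0$ and $\dd f\otimes\dd f\geq 0$, so the same plan $\nu$ can serve all $N'\in[N,0)$ as Definition~\ref{def: CD(K,N)} requires.

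Next I would run the smooth distortion estimate along $\nu$-a.e.\ geodesic. Writing $\mu_t:=(\ee_t)_\sharp\nu$, one gets $\mu_t\ll\mm$, $\mu_t=\rho_t\mm$ for $t\in(0,1)$, and the pointwise Bishop/Jacobian inequality
\begin{equation*}
\rho_t(\gamma_t)^{-1/N'}\ \leq\ \tau_{K,N'}^{(1-t)}(\di(\gamma_0,\gamma_1))\,\rho_0(\gamma_0)^{-1/N'}+\tau_{K,N'}^{(t)}(\di(\gamma_0,\gamma_1))\,\rho_1(\gamma_1)^{-1/N'}
\end{equation*}
for $\nu$-a.e.\ $\gamma$, all $N'\in[N,0)$ and $t\in(0,1)$. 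Integrating against $\nu$, using $\int h(\gamma_t)\,\dd\nu=\int h\rho_t\,\dd\mm$ (so that the left side becomes $\int\rho_t^{(N'-1)/N'}\dd\mm$) together with the identity $\mathcal{S}_{(N'-1)/N'}(\mu_t)=-N'+N'\int\rho_t^{(N'-1)/N'}\dd\mm$ (valid for $N'<0$), and multiplying through by $N'<0$, gives exactly \eqref{eq: def CD(K,N)} with $\pi:=(\ee_0,\ee_1)_\sharp\nu$. (In the degenerate case $\mathcal{S}_{(N'-1)/N'}(\mu_0)=+\infty$ or $\mathcal{S}_{(N'-1)/N'}(\mu_1)=+\infty$ the right-hand side of \eqref{eq: def CD(K,N)} is $-\infty$ and the inequality is trivial.) This is the $N<0$ analogue of the $N\in(1,\infty]$ statement of \cite[Sec.~3]{deluca-deponti-mondino-t}.

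The one step I expect to require real care is justifying the distortion estimate on the incomplete manifold $M=X\setminus\{\mathsf{O}\}$: the standard derivations of $\CD(K,N)$ from $\Ricnf\geq K$ are phrased on complete weighted manifolds. The point is that, once $\nu$ is chosen so that $\nu$-a.e.\ geodesic is a \emph{compact} curve contained in the open smooth region, every ingredient actually used --- local semiconcavity of $\di^2$ along those geodesics, the representation of the optimal coupling by a map and hence the absolute continuity and a.e.\ uniqueness of the interpolants, and the Jacobian inequality itself --- is local along the geodesic and therefore insensitive to incompleteness of $M$ away from these curves. Making this localization rigorous (working in tubular neighborhoods of the transport geodesics, or approximating $M$ near $\mathsf{O}$ by complete weighted manifolds with $\Ricnf\geq K-o(1)$ and appealing to stability of $\CD(K,N)$ for negative $N$ \cite{MaRiSo21}, following the line of \cite{deluca-deponti-mondino-t}) is the technical heart; modulo it, the reduction above is routine, and combining with infinitesimal Hilbertianity gives $\RCD(K,N)$.
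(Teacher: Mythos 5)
Your argument is correct and follows the same route as the paper's: pass to $\nu$-a.e.\ geodesic living entirely in the smooth part $X\setminus\{\mathsf{O}\}$ via the geodesic-avoidance hypothesis, invoke the weighted Jacobian/density inequality there (the paper cites \cite[Th.~4.10]{Ohta16} for this step), integrate against $\nu$ to get \eqref{eq: def CD(K,N)} with $N'=N$, and upgrade to all $N'\in[N,0)$ via the monotonicity ${\rm Ric}_f^{N'}\geq{\rm Ric}_f^{N}$; infinitesimal Hilbertianity is supplied by Proposition~\ref{prop:InfHilbSing} exactly as you say. The one point you flag as requiring care — running the smooth distortion estimate on the incomplete manifold $X\setminus\{\mathsf{O}\}$ — is precisely where the paper's proof is terse, resolving it by citing Ohta's theorem along the $\nu$-a.e.\ geodesics that stay in the smooth region; your localization/approximation remedy is a reasonable way to make that step fully rigorous.
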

\begin{proof}
Let $N'\in [N,0)$, $\mu_0=\rho_0\mm$, $\mu_1=\rho_1\mm$ be two measures in $\mathcal{P}_2(X)$, and $\nu\in  \mathrm{OptGeo}(\mu_0,\mu_1)$. Since $\nu(\Gamma_{\mathsf{O}})=0$, following \cite[Th.~4.10]{Ohta16} for $\nu$-a.e. geodesic $\gamma$ emanating from $x$ with velocity $v=\nabla \varphi(x)$ and for every $t\in[0,1]$ it holds
\begin{equation}\label{eq: Jac eq}
\mathbf{J}_t(x)^{\frac{1}{N}}\leqslant \tau_{K,N}^{(1-t)}(|v|)+\tau_{K,N}^{(t)}(|v|)\mathbf{J}_1(x)^{\frac{1}{N}}
\end{equation}
where $\mathbf{J}_t$ is the Jacobian of $\gamma(t)=\mathbf{T}_t(x):=\exp(tv)$ and $(\mu_t)_{t\in[0,1]}$ is the unique minimal geodesic $\mu_t=\rho_t\mm=(\mathbf{T}_t)_{\sharp}\mu_0$ connecting $\mu_0$ to $\mu_1$. Integrating \eqref{eq: Jac eq} with respect to $\nu$ we obtain \eqref{eq: def CD(K,N)} with $N'=N$.
Since ${\rm Ric}_f^{N'}\geqslant {\rm Ric}_f^N$ on $X\setminus \{\mathsf{O}\}$, the conclusion follows. 
\end{proof}

In the following corollary we specify the previous results to the case of D$p$-branes.

\begin{corollary}\label{cor:DpbraneRCD}
Let $(X,\sfd,\mm)$ be an asymptotically D-brane metric measure space in the sense of Def.~\ref{def: Dp-brane mms}. 

\begin{itemize}
\item  Fix $K,N\in \R$. Assume that, on the smooth part of $X$, the $N$-Bakry--\'Emery Ricci tensor \eqref{eq:defRicciNf} is bounded below by $K$. Then $(X,\di,\mm)$ is an $\RCD(K,N)$ space.

\item In particular, if $(X,\sfd,\mm)$ is an asymptotically D-brane metric measure space satisfying (on the smooth part) the Einstein equations \eqref{eq:RMN}, and  the Reduced Energy Condition \eqref{eq: REC} holds, then $(X,\di,\mm)$ is an $\RCD(K,N)$ space  for $K=\Lambda$ (the cosmological constant) and $N=2-d$ (where $d$ is the dimension of the extended space-time).
\end{itemize}

If, more strongly,   $(X,\sfd,\mm)$ is an  (exactly) D-brane  metric measure space, then it also satisfies the $\RCD(K', N')$ condition for some $K'\in \R$ and $N'\in (1,\infty)$,
\end{corollary}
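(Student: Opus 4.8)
The plan is to deduce the first two bullets from Theorem~\ref{th: manifold rcd} (the $N<0$ gluing result) together with the infinitesimal-Hilbertianity criterion of Proposition~\ref{prop:InfHilbSing}, and to obtain the last sentence by quoting the positive-dimension analysis of exactly D-brane spaces carried out in \cite{deluca-deponti-mondino-t}. First I would check infinitesimal Hilbertianity: every asymptotically D-brane metric measure space fits the hypotheses of Proposition~\ref{prop:InfHilbSing}, since the singular set $\Sigma$ (the union of the finitely many tips of the ends) is closed, $X\setminus\Sigma$ is by construction a smooth weighted Riemannian manifold, and $\mm(\Sigma)=0$ because by Definition~\ref{def: Dp-brane mms} the weight $\ee^{f}$ gives zero mass to the singular set. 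Hence $(X,\di,\mm)$ is infinitesimally Hilbertian, so by Definition~\ref{def: CD(K,N)} the $\RCD(K,N)$ conclusion will follow once the $\CD(K,N)$ condition is verified.

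For the $\CD(K,N)$ part with $N<0$ I would apply Theorem~\ref{th: manifold rcd}, iterated over the ends: its curvature hypothesis is exactly the assumption of the first bullet, so the only thing to check is that at each tip $\mathsf{O}$ every optimal dynamical plan $\nu$ with $(\ee_i)_{\sharp}\nu\ll\mm$ has $\nu(\Gamma_{\mathsf{O}})=0$. I would split according to $p$. For $p\in\{0,\dots,5\}$, and likewise for the M2, M5, F1, NS5 singularities which have the same asymptotics, the proper radial distance to the tip, $\int_{0}\sqrt{H}\,\dd r$ with $H\sim(r/r_0)^{p-7}$ (resp.\ $(r/r_0)^{q-8}$), diverges, so the tip sits at \emph{infinite} distance; then for any nonconstant geodesic $\gamma:[0,1]\to X$ one has $\di(\gamma_0,\gamma_t)=t\,\di(\gamma_0,\gamma_1)<\infty$ while $\di(\gamma_0,\mathsf{O})=\infty$ for $\gamma_0\neq\mathsf{O}$, so $\Gamma_{\mathsf{O}}$ contains no such geodesic and $\nu(\Gamma_{\mathsf{O}})=0$ trivially (equivalently, $X\setminus\Sigma$ is then a complete weighted Riemannian manifold and one may invoke the smooth equivalence of Remark~\ref{rem:compatibility} directly). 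For $p=6,7$ the tip is at finite distance: I would change to the proper (unit-speed) radial coordinate, in which \eqref{eq:metricEndAs1} and \eqref{eq:metricEndAs7} become $\dd r^2+\ell(r)^2\,\dd s^2_{\mathbb{S}^{8-p}}$ up to the lower-order term $\omega$ (resp.\ the $\eta$-term and the flat product direction $\dd x^2_{p+1-d}$), and check that $\limsup_{r\to0}\ell(r)/r=\tfrac12<\tfrac2\pi$ for $p=6$ and $\ell(r)^2\leq r^2$ for $p=7$; then Lemma~\ref{lem: geod not orig} gives $\Gamma_{\mathsf{O}}=\emptyset$ for $p=6$, and Proposition~\ref{prop: geod not origin}, applied to the transverse factor after absorbing the $\eta$ perturbation, gives $\nu(\Gamma_{\mathsf{O}})=0$ for $p=7$. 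The remaining case $p=8$ is a flat product with a one-dimensional transverse interval whose singular point is only a Lipschitz point lying in the interior of geodesics, so Theorem~\ref{th: manifold rcd} does not apply directly; there I would verify the $\CD(K,N)$ inequality by the one-dimensional characterization, using that the transverse weight is concave in the proper coordinate (its Lipschitz kink contributes a favourable distributional term to $-f''$), so that ${\rm Ric}_f^N\geq K$ holds in the distributional sense. Combined with the previous paragraph this gives the first bullet, and the second is the special case $K=\Lambda$, $N=2-d$, via the implication $\mathrm{REC}\Rightarrow{\rm Ric}_f^{2-d}\geq\Lambda$ of \eqref{eq:boundRicnf1}.

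For the last sentence I would appeal to \cite[Sec.~3]{deluca-deponti-mondino-t}, where exact D$p$-brane metric measure spaces are shown to carry a synthetic Ricci lower bound with a \emph{positive} effective dimension, i.e.\ to satisfy $\CD(K',N')$ for suitable $K'\in\R$ and $N'\in(1,\infty)$; infinitesimal Hilbertianity is once more Proposition~\ref{prop:InfHilbSing}.

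The main obstacle is the case analysis of the $\CD$ part: bringing each end metric into the normal form required by Lemma~\ref{lem: geod not orig} or Proposition~\ref{prop: geod not origin} after the change to the proper radial coordinate and verifying $\ell(r)^2\leq r^2$ (with the $\omega,\eta$ corrections $o(r^2)$) uniformly in $p$, and cleanly separating the infinite-distance singularities ($p\leq5$, M-branes, F1, NS5) from the finite-distance ones. The borderline case $p=7$, where $\ell(r)^2\leq r^2$ holds only asymptotically with equality so that Lemma~\ref{lem: geod not orig} fails and one genuinely needs the optimal-transport statement of Proposition~\ref{prop: geod not origin} (and an extension of it tolerating the $\eta$ perturbation and the flat factor), together with the separate one-dimensional treatment of $p=8$, are where most of the care is required.
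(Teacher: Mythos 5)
Your proposal follows the paper's own route almost exactly for $p\le 7$: infinitesimal Hilbertianity via Proposition~\ref{prop:InfHilbSing}, the ``infinite distance'' argument for $p\le 5$ and the M/F/NS-branes, the change to the proper radial coordinate for $p=6,7$ to land in the hypotheses of Lemma~\ref{lem: geod not orig} (checking $\ell(\rho)/\rho\to\tfrac12<\tfrac2\pi$ for $p=6$) and Proposition~\ref{prop: geod not origin} respectively, and then Theorem~\ref{th: manifold rcd} plus the smooth compatibility of Remark~\ref{rem:compatibility}. The second bullet and the final sentence are handled the same way as the paper (via \eqref{eq:boundRicnf1} and citation of the earlier work).

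The one place where you genuinely diverge is $p=8$. The paper does not attempt a direct verification: it simply cites the prior result that a D8-brane metric measure space is $\RCD(0,N)$ for some $N\in(1,\infty)$, then invokes the monotonicity fact that $\RCD(0,N)$ with $N\in(1,\infty)$ implies $\RCD(0,N')$ for every $N'\in(-\infty,0)$. Your alternative --- a direct one-dimensional check of $\CD(K,N)$ using the concavity of the transverse weight and a distributional $\Ricnf\ge K$ --- is plausible but considerably more work to make rigorous for negative $N$ (in particular, you would need a theorem transferring a distributional Bakry--\'Emery bound on a Lipschitz one-dimensional weight to the synthetic $\CD(K,N)$ inequality, which is not quoted anywhere in the paper). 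The paper's shortcut is cleaner; you may want to adopt it. On the other hand, you flag more candidly than the paper does that Proposition~\ref{prop: geod not origin} is stated for a metric of the exact form $\dd r^2+\ell(r)^2\dd s^2_{\mathbb S^n}$ on the end, whereas the asymptotic $p=7$ model in Definition~\ref{def: Dp-brane mms} carries both a flat factor $\dd x^2_{8-d}$ and a perturbation $\eta(x,r)$; an honest application requires either restricting to $d=8$ or a mild extension of the proposition, a point the paper passes over silently.
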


\begin{proof}

We first observe that, since (by the very definition  \ref{def: Dp-brane mms}) the singular set of an asymptotically D-brane metric measure space $(X,\sfd,\mm)$ is closed and has zero $\mm$-measure, then $(X,\sfd,\mm)$ is infinitesimally Hilbertian thanks to Proposition \ref{prop:InfHilbSing}. In order to get that $(X,\sfd,\mm)$ is an $\RCD(K,N)$ space it is thus enough to prove it satisfies the $\CD(K,N)$ condition. We discuss it case by case below.

\begin{itemize}
\item \textbf{Case $0\leqslant p\leqslant 5$.}
The distance between $\mathsf{O}$ and any other point is infinite. Thus, we do not include the point $\mathsf{O}$ in $X$ so that the space $(X, \di)$ is a complete metric space. The $\CD(K,N)$ condition then follows by the compatibility with the smooth setting, see Remark \ref{rem:compatibility}.

\item \textbf{Case $p=6$.} 
With the change of coordinates $\rho=2\sqrt{r_0}\sqrt{r}$ the metric near $\mathsf{O}$ takes the form $g=\dd\rho^2+\frac{\rho^2}{4} \dd s^2_{\mathbb{S}^2}+\tilde{\omega}(\Theta,\rho)$ with $\sup_\Theta \limsup_{\rho\to 0} \frac{|\tilde{\omega}|(\Theta,\rho) }{\rho^2}=0$. The conclusion follows from Lemma \ref{lem: geod not orig} and Th.~\ref{th: manifold rcd}, together with the bound on ${\rm Ric}_f^N$ on the smooth part.

\item \textbf{Case $p=7$.} 
With the change of coordinates $\rho=\sqrt{\frac{2\pi}{g_s}}\int_0^r\sqrt{-\log(\frac{s}{r_0})}\,\dd s$ the metric near $\mathsf{O}$ takes the form $g=\dd\rho^2+\ell(\rho)^2 \dd s^2_{\mathbb{S}^1}$ with $\ell(\rho)^2<\rho^2$ for $\rho>0$ small enough. The conclusion follows from Proposition \ref{prop: geod not origin} and Th.~\ref{th: manifold rcd}, together with the bound on  ${\rm Ric}_f^N$ on the smooth part.

\item \textbf{Case $p=8$.} 
In the previous work \cite{deluca-deponti-mondino-t} we have already proved that a $D8$-brane metric measure space is an $\RCD(0,N)$ space for some  $N\in (1,\infty)$.
The claim follows by the fact that  $\RCD(0,N)$ for some $N\in (1,\infty)$ implies $\RCD(0,N')$ for all $N'\in (-\infty, 0)$.
\end{itemize}

The second part of the statement  follows from the first one, once we recall that the REC \eqref{eq: REC} coupled with the Einstein equations \eqref{eq:RMN} implies that $\Ricdf \geqslant \Lambda$
(see \eqref{eq:boundRicnf1}).  The final claim was proved in our previous work \cite[Th 3.2]{deluca-deponti-mondino-t}
\end{proof}

\subsubsection{O-planes are not $\CD(K,N)$, even for negative $N$}\label{subsec: Oplanes not}

\begin{figure}[ht]
\centering	
	\subfigure[\label{fig:geodesics-D6}]{\includegraphics[width=5cm]{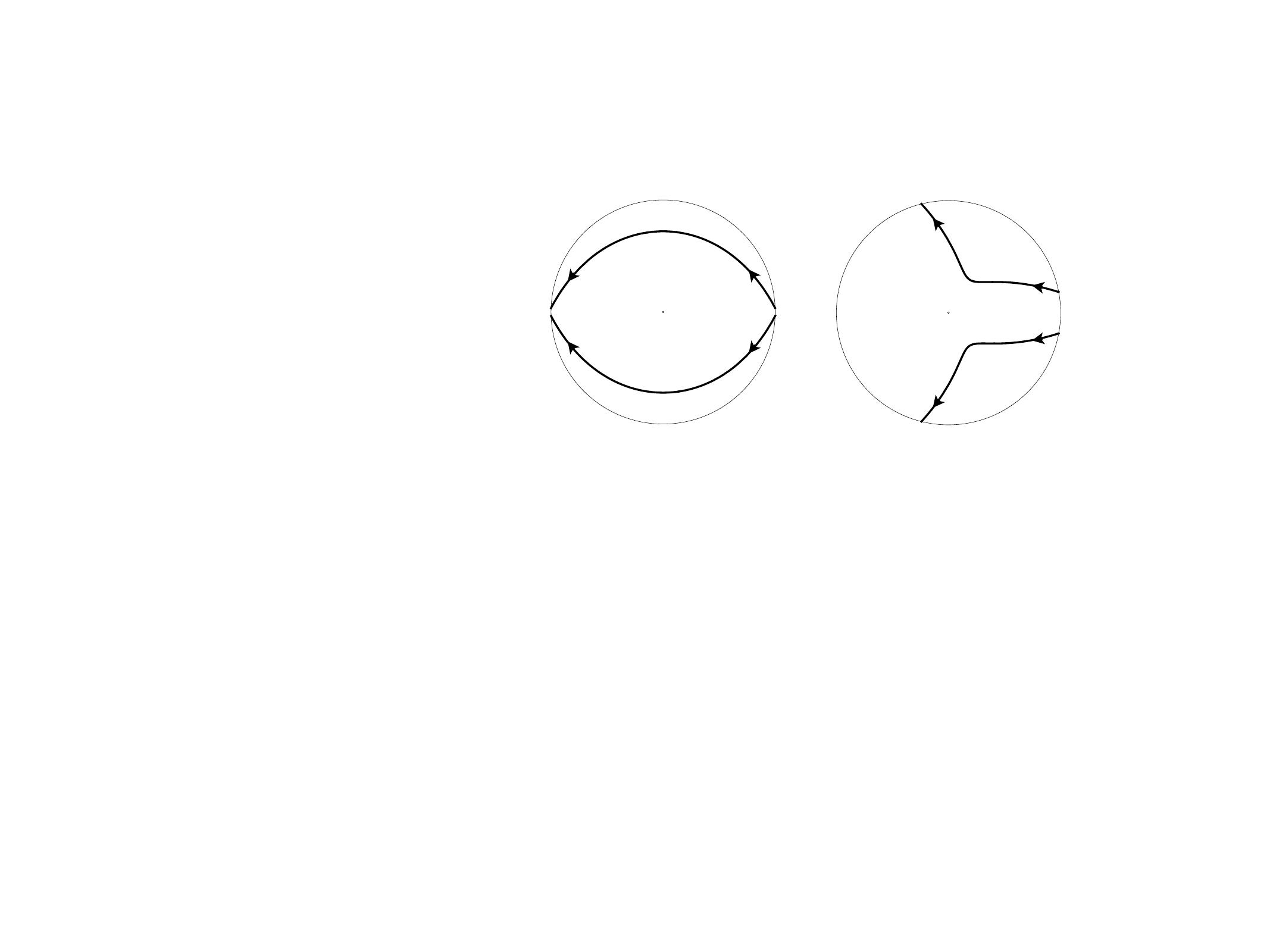}}
	\hspace{1cm}
	\subfigure[\label{fig:geodesics-Op}]{\includegraphics[width=5cm]{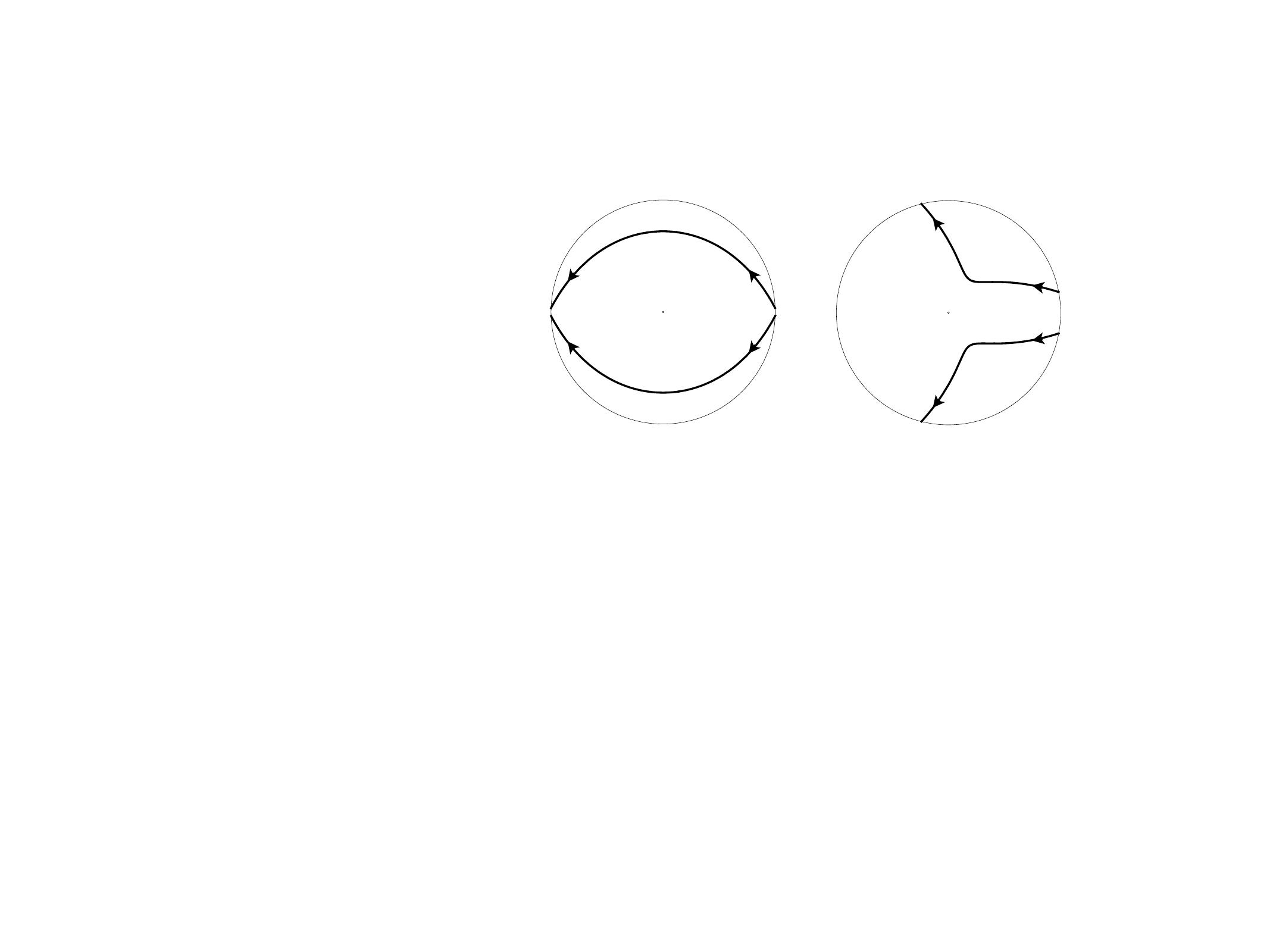}}
	\caption{\small \subref{fig:geodesics-D6} Geodesics near a D$p$ singularity get attracted by it, and curve around it. (Here the $p=6$ is depicted.) \subref{fig:geodesics-Op} An O$p$-plane has negative tension and repels geodesics, as one sees with (\ref{eq:OplaneMetric}).}
	\label{fig:geodesics}
\end{figure}

The heuristic reason why O-planes are not $\CD$ is in a sense the opposite of that behind Prop.~\ref{prop: geod not origin} (see the informal discussion above it, and Fig.~\ref{fig:geodesics}). For O-planes, we will see now that in (\ref{eq:l<r}) we need to take $l(r)>r$. One can now check that geodesics tend to be repelled by the origin $\mathsf{O}$, which is intuitively due to O-planes having negative tension. If we send a distribution of particles towards the origin with the aim of making it reform with the same density on the other side, it will actually tend to focus near $\mathsf{O}$, (say at time $t=1/2$) before spreading out. As a consequence, two antipodal points are only connected by the geodesic going through the origin, in contrast with the positive-tension case in Prop.~\ref{prop: geod not origin}. Since the reference measure \eqref{eq:OplaneMeas} of small balls centred at the origin goes to zero as the radius of the ball goes to zero and the entropy is super-linear for $N<0$,  it will follow that the entropy tends to negative infinity at the time $t=1/2$ when the distribution of particles is concentrated near the origin. See Fig.~\ref{fig:geo-Op} for a visualization of this behavior of geodesics in the O$p$-geometry. 
\begin{figure}[ht]
	\centering	
		\subfigure[\label{fig:geo-Op-scattering}]{\includegraphics[width=5cm]{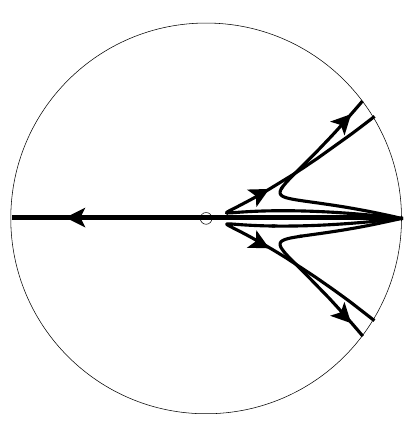}}
		\hspace{1cm}
		\subfigure[\label{fig:geo-Op-antipodal}]{\includegraphics[width=5cm]{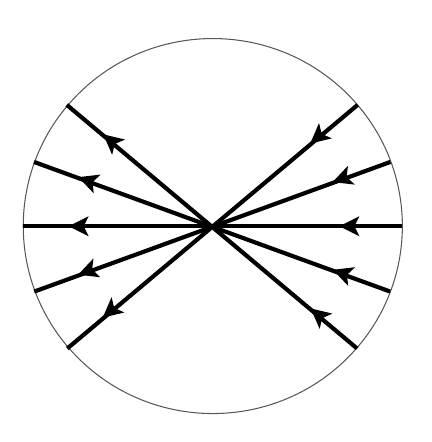}}
		\caption{\small \subref{fig:geo-Op-scattering} Geodesics in an O$p$ geometry get deflected more as they get closer to the origin; the only way to get to the other side is to go straight through it. \subref{fig:geo-Op-antipodal}  As a consequence, the optimal plan that connects two antipodal distributions that are localized enough consists of straight geodesics going through the origin.}
		\label{fig:geo-Op}
	\end{figure}
\\

Let us now turn such heuristics into a rigorous argument.
\\First of all observe that the $\CD(K,N)$ condition (for some $K\in \R$ and $N\in (-\infty,0)$, see Def.~\ref{def: CD(K,N)}) on a metric measure space $(X,\sfd,\mm)$ implies that there exists a constant $C=C(K,N)>0$ with the following property: for any couple of absolutely continuous probability measures $\mu_0,\mu_1\in \mathcal{P}_2(X)$ there exists a $W_2$-geodesic $(\mu_t)_{t\in [0,1]}$ such that
\begin{equation}\label{eq:S1201}
\frac{{\mathcal S}_{(N-1)/N}(\mu_{1/2})}{ \max \left( {\mathcal S}_{(N-1)/N}(\mu_{0}),   {\mathcal S}_{(N-1)/N}(\mu_{1})\right)}\geq -C(K,N)\,.
\end{equation}
Recall that the (internal part of the) metric for an O-plane singularity is asymptotic to 
\begin{equation}\label{eq:OplaneMetric}
\dd s^2= \dd r^2 + \rho_0^2  r^{2/3}  \dd s^{2}_{\mathbb{S}^{8-p}}
\end{equation}
and the weighted measure is asymptotic to
\begin{equation}\label{eq:OplaneMeas}
\dd \mm(\Theta,r)= \rho_{0} r^{1/3} \dd r\,  \dd \Theta\,,
\end{equation}
for some constant $\rho_0>0$ and some $p=2, \ldots, 7$, where 
$$\Theta=(\Theta_0,\Theta_1, \ldots, \Theta_{8-p})\in\mathbb{S}^{8-p}\subset \mathbb{R}^{9-p}$$
denotes the unit vector in $\mathbb{R}^{9-p}$ parametrizing the unit sphere $\mathbb{S}^{8-p}$. 

For $\varepsilon,\delta,r_0>0$ small, consider the probability measures 
$$\mu^{\varepsilon, \delta, r_0}_i:=\mm(A_i^{\varepsilon, \delta, r_0})^{-1} \mm\llcorner_{A_i^{\varepsilon, \delta, r_0}},  \quad i=0,1\,, $$
obtained by normalizing the restriction of $\mm$ to the sets
\begin{equation}\label{def:Aeps0Aeps1}
\begin{split}
A^{\varepsilon, \delta, r_0}_{0}&:=\{\Theta_0\in [1-\delta, 1],  \Theta_1, \ldots, \Theta_{8-p}\in [0, \delta], r\in [r_0-\varepsilon, r_0], \}\,,\\
 A^{\varepsilon, \delta, r_0}_{1}&:=\{\Theta_0\in  [-1, -1+\delta], \Theta_1, \ldots, \Theta_{8-p}\in [0, \delta], r\in [r_0-\varepsilon, r_0]\}\,.
\end{split}
\end{equation}
Notice that $A^{\varepsilon, \delta, r_0}_0$ and $A^{\varepsilon, \delta, r_0}_1$ are antipodal with respect to the singular origin 
$$\{\mathsf{O}\}:=\{r=0\}\,.$$ Moreover, since the power  of $r$ in \eqref{eq:OplaneMetric} is $2/3<2$ (where the exponent $2$ would correspond to the cone metric), then, for $r>0$ sufficiently small, the unique geodesic connecting  the antipodal points $(\Theta,r)$ and $(-\Theta,r)$ passes through $\mathsf{O}$.

It can be checked that, for $r_0>0$ and $\delta>0$ small enough,  the unique $W_2$-geodesic $(\mu^{\varepsilon, \delta, r_0}_t)_{t\in [0,1]}$ from $\mu_0^{\varepsilon, \delta, r_0}$ to $\mu_1^{\varepsilon, \delta, r_0}$  passes through $\mathsf{O}$  at time $t=1/2$. More precisely, one can check that there exists $\rho=\rho(\varepsilon, \delta, r_0)>0$ with 
\begin{equation}\label{eq:rhoto0}
\lim_{(\delta, r_0)\to (0,0)} \ \lim_{\varepsilon\to 0} \frac{\rho(\varepsilon, \delta, r_0)}{r_0}=0\,,
\end{equation}
 such that $\supp \, (\mu_{1/2}^{\varepsilon, \delta, r_0})\subset B_{\rho}(\mathsf{O})$.
The expression of the measure \eqref{eq:OplaneMeas} and \eqref{eq:rhoto0} yield
$$
\lim_{(\delta, r_0)\to (0,0)} \ \lim_{\varepsilon\to 0} \frac{\mm(B_{\rho}(\mathsf{O}))}{\mm(A^{\varepsilon, \delta, r_0}_{0})+\mm( A^{\varepsilon, \delta, r_0}_{1})}=0\,.
$$
Since the integrand of the entropy ${\mathcal S}_{(N-1)/N}$  is super-linear for $N<0$,  it follows that
\begin{equation}\label{eq:Seps1201}
\lim_{(\delta, r_0)\to (0,0)} \ \lim_{\varepsilon\to 0} \frac{{\mathcal S}_{(N-1)/N}(\mu^{\varepsilon, \delta, r_0}_{1/2})}{ \max \left( {\mathcal S}_{(N-1)/N}(\mu^{\varepsilon, \delta, r_0}_{0}),   {\mathcal S}_{(N-1)/N}(\mu^{\varepsilon, \delta, r_0}_{1})\right)}=-\infty\,.
\end{equation}

Clearly, the last equation \eqref{eq:Seps1201} is in contradiction with \eqref{eq:S1201}. We conclude that O-plane singularities \eqref{eq:OplaneMetric} are not $\CD(K,N)$ spaces for any value of  $K\in \R$ and $N\in (-\infty,0)$.

\subsection{A lower bound in terms of the diameter} 
\label{sub:calderon}

The statement below was proved in the Ph.D. thesis of E. Calderon  \cite[Th.~5.2.1]{calderon} in the framework of smooth weighted Riemannian manifolds. By using $1$-dimensional localization we can extend it to the non-smooth metric measure setting that includes D$p$-brane singularities. To this aim, we either assume that the metric measure space is  asymptotically D-brane (see Def.~\ref{def: Dp-brane mms}), or 
 we will have two $\RCD$ conditions: we assume the validity of both an  $\RCD(K,N)$ condition for some $N\in (-\infty, -1]$, $K<0$, and an $\RCD(K',N')$ condition for some $K'\in \R$ and $N'\in (1,+\infty)$. This should be read as follows: while the former is giving the synthetic curvature-dimension condition we are actually interested in (so the bound we obtain will be in terms of $K$ and $N$), the latter should be read as a \emph{qualitative regularity} assumption on the metric measure space to make the proof work (thus we do not want $K'$ and $N'$ to appear in the thesis).

\begin{theorem}\label{Th:calderon}
	Assume that $(X,\sfd,\mm)$  is an  $\RCD(K,N)$ space  for some $N\in (-\infty, -1]$ and $K<0$, whose diameter satisfies $0<\mathrm{diam}(X)\leq \pi \sqrt{\frac{N-1}{K}}$.   Assume moreover that
	\begin{itemize}
         \item either $(X,\sfd,\mm)$ is an  asymptotically D-brane metric measure space in the sense of Def.~\ref{def: Dp-brane mms}, 
         \item or $(X, \sfd, \mm)$ satisfies also an $\RCD(K',N')$ condition for some $K'\in \R$ and $N'\in (1,+\infty)$.
	\end{itemize}
	  Then, the smallest eigenvalue $\lambda_1$ of the Laplacian satisfies
	\begin{equation}\label{eq:calderon}
		\lambda_1 \geqslant \frac{\alpha (\mathrm{diam}(X)\sqrt{-K})}{\mathrm{diam}^2(X)}\,,
	\end{equation}
	where $\alpha(\mathrm{diam}(X)\sqrt{-K})$ is the minimum of 
	\begin{equation}\label{eq:calderon-rayleigh}
		\frac{\int_{-1/2}^{1/2} \dd z \,\ee^{f} (\partial_z \psi)^2}{\int_{-1/2}^{1/2} \dd z\,\ee^{f} \psi^2}\, ,\qquad \ee^f := \cos^{N-1}\left(\diam(X) \sqrt{\frac{K}{N-1}} z\right)\,,
	\end{equation}
	among the functions $\psi$ which are smooth and have vanishing weighted average $\int_{-1/2}^{1/2}\ee^f\psi$. 
\end{theorem}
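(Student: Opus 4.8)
The plan is to deduce the theorem from the \emph{localization} (needle decomposition) technique, by showing that the first eigenfunction satisfies the Poincaré inequality $\int_X|\nabla u|^2\,\dd\mm\ge \tfrac{\alpha(\diam(X)\sqrt{-K})}{\diam^2(X)}\int_X u^2\,\dd\mm$. First I would collect the structural consequences of the hypotheses. Since $\diam(X)<\infty$, the $\RCD(K,N)$ space $(X,\sfd,\mm)$ is compact, has finite measure, and has discrete Laplace spectrum; under either additional assumption it is moreover \emph{essentially non-branching} — automatic for $\RCD(K',N')$ with $N'\in(1,\infty)$, while in the asymptotically D-brane case it follows from the fact that the singular set is closed, $\mm$-negligible, and avoided by geodesics (Proposition~\ref{prop: geod not origin}, Lemma~\ref{lem: geod not orig} and Remark~\ref{rem:NonBranch}). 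Let $u$ be a first eigenfunction, $\Delta u=\lambda_1 u$, normalised so that $\int_X u\,\dd\mm=0$ and $\int_X u^2\,\dd\mm=1$; $u$ is Lipschitz (in the D-brane case, smooth and bounded away from the negligible singular set).

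Next I would apply $L^1$-optimal transport to the zero-mean function $u$. A Kantorovich potential for the cost $\sfd$ between $u_+\mm$ and $u_-\mm$ induces, on the essentially non-branching $\CD(K,N)$ space $X$, a partition of $X$ up to an $\mm$-null set into transport rays $\{X_\alpha\}_{\alpha\in Q}$ together with a disintegration $\mm=\int_Q\mm_\alpha\,\dd\qq(\alpha)$ such that, for $\qq$-a.e.\ $\alpha$: $\overline{X_\alpha}$ is a geodesic of length $\ell_\alpha\le\diam(X)$, the conditional space $(\overline{X_\alpha},\sfd,\mm_\alpha)$ is a one-dimensional $\CD(K,N)$ space, and $\int_{X_\alpha}u\,\dd\mm_\alpha=0$; here I would invoke the metric-measure localization theorem of \cite{cavalletti-mondino-inv} together with the existence of optimal maps and the localization paradigm for $\CD(K,N)$ with $N<0$ \cite{MaRi21,MaRiSo21,Osh22}. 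Because the restriction of $u$ to a ray is Lipschitz with one-dimensional derivative $u'_\alpha$ satisfying $|u'_\alpha|\le|\nabla u|$ along the ray, the disintegration yields $\int_X|\nabla u|^2\,\dd\mm\ge\int_Q\!\int_{X_\alpha}|u'_\alpha|^2\,\dd\mm_\alpha\,\dd\qq(\alpha)$ and $\int_X u^2\,\dd\mm=\int_Q\!\int_{X_\alpha}u^2\,\dd\mm_\alpha\,\dd\qq(\alpha)$.

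This reduces matters to a one-dimensional statement: any one-dimensional $\CD(K,N)$ space of length $\ell\le\diam(X)\le\pi\sqrt{(N-1)/K}$ admits, among functions of zero weighted mean, the Poincaré inequality with constant $\alpha(\diam(X)\sqrt{-K})/\diam^2(X)$. Writing $\mm_\alpha=h\,\mathcal{H}^1$ in arclength, the one-dimensional $\CD(K,N)$ condition states that $h^{1/(N-1)}$ is positive and $\tfrac{K}{N-1}$-concave (with $\tfrac{K}{N-1}>0$), and — as in Calderón's one-dimensional argument \cite[Th.~5.2.1]{calderon} — this forces the Neumann spectral gap to be at least that of the model weight $\cos^{N-1}\!\big(\sqrt{K/(N-1)}\,(t-\ell/2)\big)$ on $[0,\ell]$, which after the rescaling $t=\ell z$ equals $\alpha(\ell\sqrt{-K})/\ell^2$; combined with the monotonicity (non-increasingness) of $\ell\mapsto\alpha(\ell\sqrt{-K})/\ell^2$ this gives the claimed constant. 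Integrating over $Q$ produces the Poincaré inequality, hence $\lambda_1\ge\alpha(\diam(X)\sqrt{-K})/\diam^2(X)$.

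I expect the main obstacle to be carrying the localization machinery through for $N<0$ in the non-smooth setting: one needs essential non-branching (supplied by the auxiliary $\RCD(K',N')$ hypothesis or by the D-brane geometry), the existence of optimal transport maps, and the validity of the ``localization paradigm'' — that the conditional one-dimensional measures inherit the synthetic bound — within the still immature $\CD(K,N)$, $N<0$ theory; and, in the D-brane case, checking that the $\mm$-negligible singular tips are genuinely harmless (i.e.\ that the transport rays and the disintegration do not concentrate there). A secondary but essential point is the sharp one-dimensional comparison with the extremal $\cos^{N-1}$-weighted model, including the monotonicity in $\ell$, which generalises the smooth one-dimensional estimate of \cite{calderon} to one-dimensional $\CD(K,N)$ spaces with $N\le-1$.
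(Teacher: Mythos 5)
Your proposal is correct and follows essentially the same route as the paper's proof: $L^1$-localization (needle decomposition via Cavalletti--Mondino, with essential non-branching supplied by the auxiliary $\RCD(K',N')$ hypothesis or by the D-brane geometry), inheritance of the one-dimensional $\CD(K,N)$ condition with $N<0$ on the transport rays, and then Calder\'on's sharp one-dimensional comparison with the $\cos^{N-1}$-weighted model. The one place where the paper is more careful than your sketch is the asymptotically D-brane case, which it argues case by case in $p$: for $p\le 5$ the singular tip is at infinite distance, for $p=6$ geodesics avoid it (Lemma~\ref{lem: geod not orig}), for $p=8$ the space is already $\RCD(0,N')$ with $N'\in(1,\infty)$, and for $p=7$ one uses non-branching (Remark~\ref{rem:NonBranch}) together with the fact that the cut locus has measure zero to conclude that the transport set lies $\mm$-a.e.\ in the smooth part — a subtler point than ``avoided by geodesics,'' since for $p=7$ geodesics \emph{can} pass through the origin; it is the optimal dynamical plans that avoid it.
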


\begin{proof}
The proof in the smooth weighted setting given in \cite[Th.~5.2.1]{calderon} can be summarised in two steps: first show that the desired bound can be reduced to a family of inequalities on weighted intervals (of topological dimension 1), second establish such a family of inequalities on weighted intervals.
\\The first step goes under the name of 1-dimensional localisation; in the setting of smooth weighted manifolds, such a dimensional reduction  was obtained in \cite{klartag}.  The second step is the contribution of \cite{calderon}.

In\cite{cavalletti-mondino-inv}, the 1-dimensional localisation   was established  in the framework of  essentially non-branching $\CD(K', N')$ metric measure spaces for $N'\in (1,\infty)$, setting which includes $\RCD(K',N')$ spaces, for $N'\in (1,\infty)$, thanks to \cite{rajala-sturm}.  

This gives the following. Given $u\in L^1(X,\mm)$ with $\int_X u\, \dd\mm=0$, there exists a partition of $X$ as 
\begin{equation}
X=\mathcal{N}\mathring\cup \mathcal{Z} \mathring\cup \left( \mathring \bigcup_{\alpha\in Q} X_\alpha\right)
\end{equation}
where 
\begin{itemize}
\item $\mathring\cup$ denotes a disjoint union,
\item $\mm(\mathcal{N})=0$,
\item  for $\mm$-a.e. $z\in \mathcal{Z}$ it holds that $u(z)=0$,
\item $Q$ is a suitable set of indices and, for all $\alpha\in Q$,   $X_\alpha$ is a geodesic in $X$.
\end{itemize}
Moreover, associated to the above partition of $X$, we have a disintegration of the measure $\mm$ as 
\begin{equation}
\mm=\mm\llcorner_{\mathcal Z}+\int_Q \mm_\alpha \, \dd\mathfrak{q}(\alpha)
\end{equation}
where 
\begin{itemize}
\item  $\mathfrak{q}$ is a suitable measure on the set of indices $Q$,
\item for  $\mathfrak{q}$-a.e. $\alpha\in Q$,  the measure $\mm_\alpha$ is concentrated on the geodesic $X_\alpha$ and the one-dimensional metric measure space $(X_\alpha, |\cdot|, \mm_\alpha)$ satisfies $\CD(K',N')$,
\item   for  $\mathfrak{q}$-a.e. $\alpha\in Q$, it holds that $\int_{X_\alpha} u \, \dd \mm_\alpha=0$.
\end{itemize}
Using now that the ambient space satisfies also the $\CD(K,N)$ condition for $N<0$ in the sense of Def.~\ref{def: CD(K,N)}, one can follow verbatim the proof of \cite[Th.~4.2]{cavalletti-mondino-inv} and infer that  one-dimensional metric measure space $(X_\alpha, |\cdot|, \mm_\alpha)$ satisfies $\CD(K,N)$ as well,  for  $\mathfrak{q}$-a.e. $\alpha\in Q$. 

Once these information are at disposal, the proof of the spectral bound under the additional assumption that $(X,\sfd,\mm)$ is also an $\RCD(K',N')$ space for some $K'\in \R$, $N'\in (1,\infty)$ is obtained verbatim as in \cite{calderon} (the modifications for the metric measure setting are now completely analogous to the proof of \cite[Th.~4.4]{cavalletti-mondino-GT}, setting $p=2$).
\\

Let us now briefly sketch the proof in the case when $(X,\sfd,\mm)$ is an asymptotically $D$-brane metric measure space. By the proof of Corollary \ref{cor:DpbraneRCD}, the only case when some geodesic can pass through the singular set is for $p=7$ or $p=8$. In the latter case we already know that the singularity satisfies $\RCD(0,N')$ for some $N'\in (1,\infty)$ (at least locally in that end) and thus we can argue as above. 

The only case remained to discuss is then for $p=7$. From Remark \ref{rem:NonBranch} we get that the singular space corresponding to an asymptotic D$7$-brane is non-branching.  From \cite[Th.~3.3.5]{cavalletti-overview}, we infer that the transport set (associated to the $L^1$-optimal transport problem used in the 1-d localization) is endowed by an equivalence relation whose equivalence classes are the geodesics $X_\alpha$, $\alpha \in Q$, mentioned above. Moreover,  since it is easily seen that the cut locus has measure zero, we get that $\mm(X\setminus \bigcup_{\alpha\in Q} X_\alpha)=0$. It follows that, up to a set of measure zero, all the transport set used in the 1-d localization is  contained in the smooth part of the space. The result follows then by the smooth arguments in \cite{calderon}.
\end{proof}

\begin{remark}\label{rem:6.16}
\begin{itemize}
\item Notice that $\alpha$ can also be interpreted as the lowest eigenvalue of the operator $-\ee^{-f}\partial_z (\ee^f \partial_z (\, \cdot \,))$ on the interval $[-1/2,1/2]$, with Neumann boundary conditions.

\item The assumptions of Th.~\ref{Th:calderon} are natural thanks to Cor. \ref{cor:DpbraneRCD}: we proved that the validity of the Einstein equations and of the REC imply that an  asymptotically D-brane metric measure space satisfies  $\RCD(K,N)$ for $K=\Lambda$ (the cosmological constant) and $N=2-d<0$ ($d$ is the dimension of the extended space-time); if, more strongly, $(X,\sfd,\mm)$ is an exactly D-brane metric measure space then it satisfies also $\RCD(K',N')$ for some $K'\in \R$ and $N'\in(1,\infty)$.

\item The bound is sharp, in the sense that there exist examples that saturate it. In particular it improves on a result \cite[Th.~3]{charalambous-lu-rowlett}, mentioned in \cite[Th.~3]{deluca-t-leaps}, both because the bound does not contain $K$ (and hence the warping, in our physical application), and because it allows for singularities of $\RCD(N<-1,K)$ type, which as we saw in section \ref{sub:RCD} includes D-branes.

\item The result in \cite[Th.~5.2.1]{calderon} actually also contains diameter bounds on other $N$ and $K$. In particular, for $N\in [-1, 0]$ and $K<0$, the function $\cos$ is replaced by $\sin$ and the interval is shifted from $[-1/2, 1/2]$ to $[\varepsilon, 1+\varepsilon]$. Also,  for $N<0$ and $K= \Lambda>0$, the function $\cos$ is replaced by $\cosh$; there is now no further limit on $K$. Unfortunately, $\Lambda>0$ compactifications require the use of O-planes (or quantum corrections), so that part of the theorem is not of immediate application.
\end{itemize}
\end{remark}

\begin{remark}\label{rem: scale}
	\begin{itemize}
		\item As recalled above, for us $N=2-d$;  when the REC is satisfied, the lower Ricci bound (\ref{eq:boundRicnf1}) holds, so we can choose $K= \Lambda$. Defining as usual the AdS radius $L_\mathrm{AdS}$ via the identity $\Lambda=(1-d)/L_\mathrm{AdS}^2$, the weight function becomes $\ee^f = \cos^{N-1}(\frac{\mathrm{diam}}{L_\mathrm{AdS}}z)$. 
Since from \eqref{eq:BELaplacian} the first eigenvalue corresponds to the mass of the first spin 2 Kaluza-Klein mode, in this situation we get the bound
\begin{equation}\label{eq: scaleSepRem}
	\frac{m_1^2}{|\Lambda|} \geqslant \frac{\alpha(\mathrm{diam}(X)/L_{\text{AdS}})}{d-1}\frac{L^2_{\text{AdS}}}{\mathrm{diam}^2(X)}\,.
\end{equation}
While we have not proven this rigorously, in $d=4$ the minimum of (\ref{eq:calderon-rayleigh}) appears to be attained on $\psi=\sin(\pi z)$; the resulting $\alpha$ is a function of $\mathrm{diam}(X)/L_\mathrm{AdS}\in (0,\pi)$ such that 
\begin{equation}
	\lim_{\mathrm{diam}(X)/L_\mathrm{AdS}\to 0} \alpha = \pi^2 \, ,\qquad \lim_{\mathrm{diam}(X)/L_\mathrm{AdS}\to \pi} \alpha = 0\,,
\end{equation}
monotone decreasing in between. In particular, the bound is most effective when $\mathrm{diam}(X)\ll L_\mathrm{AdS}$, and loses efficacy when $\diam(X) \to \pi L_\mathrm{AdS}$. A numerical analysis shows that a similar conclusion also holds in $d\neq 4.$

	\item When the REC is not satisfied, $K$ is not necessarily equal to $\Lambda$, but we can still consider the limit where the internal space is flat (or positively curved).  
	Setting $K = -|\varepsilon| \to 0^-$, since in this limit $\alpha \to \pi^2$, we obtain
	\begin{equation}\label{eq: scaleSepFlat}
		\frac{m_1^2}{|\Lambda|} \geqslant \frac{\pi^2}{d-1}\frac{L^2}{\mathrm{diam}^2(X)}\,,
\end{equation}
where $L^2 = (d-1)/|\Lambda|$ is the curvature radius of the vacuum (with either sign of the cosmological constant).  
\eqref{eq: scaleSepFlat} is now valid also for compactifications where the REC is violated, provided they are either smooth or at most with D/M-brane sources, and it proves that such vacua are automatically scale-separated when $\mathrm{diam}\ll L$. We will construct such an example in Sec. \ref{sub:scales}.
\end{itemize}
\end{remark}


\section{Scale separation} 
\label{sec:scale}

\subsection{Applications of KK bounds} 
\label{sub:scale}

Most of the theorems we have seen in Sec.~\ref{sec: RCD} are lower bounds on the eigenvalues of the weighted Laplacian. These have a natural application to the issue of scale separation, namely the problem of finding solutions in which $m_\mathrm{KK} \gg \sqrt{|\Lambda|}$.

In particular, (\ref{eq:calderon}) and the Remark \ref{rem: scale} imply that a solution without O-planes with small diameter,
\begin{equation}
	\mathrm{diam}\ll L_\mathrm{AdS}\,,
\end{equation}
 has scale separation at least in its spin-two tower. While this is intuitively expected, Theorem \ref{Th:calderon} establishes it rigorously. Recall that O-planes are not included because we have shown in section \ref{subsec: Oplanes not} that they do not satisfy the $\RCD$ condition, even for negative $N$. The theorem \cite[Th.~3]{charalambous-lu-rowlett} quoted in \cite[Th.~3]{deluca-t-leaps} was not quite as conclusive, because one could have in principle weakened its conclusions by finding compactifications with large dilaton gradients. Moreover, as we mentioned, \cite[Th.~3]{charalambous-lu-rowlett} only allowed smooth manifolds.

Most proposed solutions with scale separation do include O-planes. An exception is suggested in \cite{cribiori-junghans-vanhemelryck-vanriet-wrase}. This construction originates in IIA, with a $T^2$-fibration over $T^4$ as internal space and intersecting O6-planes, which can be made approximately localized in the limit of small $\Lambda$. However, the Romans mass vanishes (unlike the more famous \cite{dewolfe-giryavets-kachru-taylor}), so an uplift to M-theory is expected to exist, where the O6-planes become purely geometric features, locally described by the smooth Atiyah--Hitchin metric. (The uplift of an O6 intersection is not known, but one would expect it to be geometrized as well, at worst involving a mild singularity allowed in supergravity, such as that of an orbifold.) The sizes of the $T^2$, $T^4$ and M-theory circle can be made to scale with different powers as $\Lambda \to 0$; the diameter is expected to scale in this limit as the largest of these three, which can still be much smaller than $L_\mathrm{AdS}$. Thus Theorem \ref{Th:calderon} implies that the M-theory version of this solution should have scale separation. It would thus be very interesting to test further the approximations made in finding it.\footnote{Strictly speaking, the solution was shown in \cite{cribiori-junghans-vanhemelryck-vanriet-wrase} to be supersymmetric only in the smeared limit, while for the localized version only the equations of motion were checked; while it does not seem very likely, in principle supersymmetry might be broken, and the vacuum might be unstable. We thank T.~Van Riet and V.~Van Hemelryck for discussions on this solution.}

We next consider a similar application for Theorem \ref{th: cheeger ineq}. This now implies that any solution with 
\begin{equation}\label{eq:h1-scalesep}
	\frac1{h_1} \ll L_\mathrm{AdS}\,
\end{equation}
is scale separated. Recall that the inequality (\ref{eq: cheeger ineq lambda1}) was also present in \cite{deluca-deponti-mondino-t}, but that here we stated it under the weaker assumption that the space is infinitesimally Hilbertian (Def.~\ref{def:iH}). This framework includes O-plane singularities (see Remark \ref{rem:InfHilbExamples}).\footnote{This property might even extend to the corrected O-plane geometry in full string theory. It is actually generally expected that source singularities are smoothed out in quantum gravity, so this does not seem much of a stretch. A much bigger issue, which we do not tackle here, is that the spin-two operator itself might be affected significantly by string corrections near O-planes.}

While the Cheeger constant $h_1$ looks less familiar than the diameter, just like the diameter it is a non-local quantity that can be at least estimated if not precisely computed. Recall from (\ref{eq:defChConst}) that we need to find the infimum of $\mathrm{Per}(B)/\mm(B)$. In a solution with singularities, it is natural to first check what happens when $B$ surrounds one of them. In \cite[(4.12)]{deluca-deponti-mondino-t} we took $B$ to be a tubular neighborhood of radius $R$ around a D$p$ singularity, obtaining $\mathrm{Per}(B)/\mm(B) \sim R^{(5-p)/2}$, which is arbitrarily small for $p<5$ and large otherwise. A similar computation for an O$p$ ($p<8$) singularity gives, in the same notation of \cite{deluca-deponti-mondino-t},
\begin{equation}\label{eq:h1-Op}
	\frac{\mathrm{Per}(B)}{\mm(B)}= \frac{\int_{\partial B} \sqrt{\bar g_{\partial B}}\,\ee^{(D-2)A}\,\dd^{n-1}x}{\int_{B} \sqrt{\bar g}\,\ee^{(D-2)A}\,\dd^{n}x}= 
	\frac{R^{8-p}\sqrt{H(R)}}{\int_{R_0}^{R} r^{8-p} H(r)\,\dd r}
	\sim (R-R_0)^{-3/2}\,,
\end{equation}
where $R_0 \sim l_s (g_s)^{1/(7-p)}$ is the radius below which the O$p$ metric becomes imaginary and loses meaning (see \cite[Sec.~2]{cordova-deluca-t-dso6} for a quick review of O$p$ solutions). We see  $\mathrm{Per}(B)/\mm(B)$ diverges when the neighborhood gets close to the O$p$ singularity, so this is not a good candidate to obtain the infimum that defines $h_1$. (The $p=8$ case has to be treated separately, but the same conclusion holds.)

This logic can be applied to the famous proposal in \cite{dewolfe-giryavets-kachru-taylor}. The KK scale was already estimated there using an effective $d=4$ theory as $m_\mathrm{KK}\sim N^{-1/4} \ll 1/L_{\mathrm{AdS}}\sim N^{3/4}$, where $N$ is the $F_4$ flux quantum. The geometry of the internal $M_6$ was given in \cite{junghans-dgkt,marchesano-palti-quirant-t}, again in an approximation where $\Lambda$ is small. The overall length scale of $M_6$ is $N^{1/4}$, which confirms the $d=4$ estimate, but one might wonder if the backreaction of the O6s might affect this result significantly.

The result (\ref{eq:h1-Op}) indicates that this does not happen. Since taking $B$ near the O6s gives a large result, $h_1$ is more likely to be minimized by taking $B$ away from them. In such a region, the metric is approximately Calabi--Yau. For example, for a torus orbifold such as $T^6/\mathbb{Z}_2^3$ in \cite[Sec.~6.2]{marchesano-palti-quirant-t}, \cite{camara-font-ibanez}, we can take $B= \{x^1=1/4\}$; the integrals of the $B_i$ functions is zero, and we obtain
\begin{equation}
	h_1 \sim N^{-1/4}\,,
\end{equation}
which by (\ref{eq: cheeger ineq lambda1}) gives $m_\mathrm{KK}> N^{-1/4}$, in line with the above-mentioned estimates. Of course this result is only relevant at the level of approximation considered in \cite{junghans-dgkt,marchesano-palti-quirant-t}; it is in principle still possible that the solution would somehow be destroyed in full string theory, were one able to perform such a computation.


\subsection{Scale-separated solutions with Casimir energy}\label{sub:scales}

In this Section, we construct a new example of a scale-separated AdS solution with energy sources that violate the Reduced Energy Condition (Sec.~\ref{sub:REC}).\footnote{We thank Eva Silverstein and Gonzalo Torroba for discussions about this solution, its properties and related work \cite{deluca-silverstein-torroba}.} When such sources are present in an AdS compactification ($\Lambda <0$) equation \eqref{eq:eoms-warp} for $\Ricdf$ reads schematically
\begin{equation}\label{eq: deltaRec}
	\Ricdf = -|\Lambda|+ \text{REC}+\delta\text{REC}\;,
\end{equation}
where $\delta\text{REC}<0$ refers to terms that violate the REC.
Compared to a situation where $\delta\text{REC} = 0 $, in which case $\Ricdf$ has at least a positive direction in all the known AdS examples, the negative term $\delta\text{REC}$ can provide a mechanism to tune $\Ricdf =  0$ and thus decouple the internal curvature scale from the scale of the cosmological constant.\footnote{Another possible mechanism is to use codimension-2 sources, which have the appropriate scaling to cancel the internal curvature and achieve separation of scales \cite{polchinski-silverstein}.}
While it is not proved whether such a decoupling is necessary, for this reason it is believed that sources that violate the REC can be useful to achieve separation of scales, as already noticed in \cite{gautason-schillo-vanriet-williams,deluca-t-leaps} (although they are not sufficient, e.g. \cite[Sec.~2.2]{apruzzi-deluca-gnecchi-lomonaco-t}\cite[Sec.~7]{apruzzi-deluca-lomonaco-uhlemann}).\footnote{It is also known that sources that violate the REC are necessary in order to obtain de Sitter compactifications ($\Lambda>0$) \cite{gibbons-nogo,dewit-smit-haridass,maldacena-nunez}.}
In the DGKT example analyzed in Sec.~\ref{sub:scale} this is achieved through O6-planes, which violate the REC and stabilize a Ricci-flat internal space. In the following, we construct a new explicit example of an AdS scale-separated solution of the equations of motion with a Ricci flat internal space and with parametrically large ratio between the first Kaluza-Klein modes and $|\Lambda|$, by violating the REC through quantum effects.

We work in M-theory through its low energy description in terms of 11-dimensional supergravity, and we aim to construct a semi-classical solution of the equations of motion in which the quantum energy densities generated by the low-energy fields enter as a source in the Einstein equations \eqref{eq:RMN} through the stress-energy tensor
\begin{equation}
	T_{MN} = T_{MN}^{\text{classical}} + \langle T_{MN}^{\text{quantum}} \rangle \,.
\end{equation}
The semi-classical approximation consists in choosing a geometry and topology for the space-time and computing the quantum effects with this assumption. Self-consistency requires then that the chosen space-time solves the equations of motion with the induced $\langle T_{MN}^{\text{quantum}} \rangle$. This approach has been employed to construct various semi-classical gravity solutions such as compactifications of the Standard Model \cite{arkanihamed-dubovsky-nicolis-villadoro}, traversable wormholes in four dimensions \cite{maldacena-milekhin-popov-traversable} and dS$_4$ compactifications of M-theory \cite{deluca-silverstein-torroba}. 

We consider an AdS$_4$ compactification on a 7-dimensional torus, so that the 11-dimensional space-time metric has the form
\begin{equation}\label{eq: metCas}
	\dd s^2_{11} = R_4^2 \, \dd s^2_{\text{AdS}_4}+R_{7}^2\,  \dd s^2_{\text{T}^7}\,,
\end{equation}
where in this decomposition the metric on the AdS$_4$ and T$^7$ factors have unit radii. 

The zero point energy of fields in flat space and in curved backgrounds with different topologies can been computed explicitly in many cases (see e.g.~ \cite{casimir-book} for a book-length review). The massless fields of eleven-dimensional supergravity are the metric $g_{11}$, the four-form $F_4$ and the gravitino $\psi$, and in order to generate a non-trivial zero point energy we break supersymmetry by imposing anti-periodic boundary conditions for fermions on the torus cycles. The contribution of massive states is exponentially suppressed and we need not consider them.
Since we are considering an isotropic and homogenous internal torus, we can constrain the form of the induced effective action energy by requiring that: {\it i)} The energy density obtained from it is an eleven-dimensional energy density, depending only on the circle size $R_7$ and growing when it shrinks {\it ii)} It is homogenous and isotropic along the internal directions {\it iii)} The overall sign is due to bosons. These requirements are enough to impose that the leading term in the effective action has the form $S^\text{eff} =  (2\pi)^{-8}|\rho_c|\int_{M_{11}}\sqrt{-g_{11}}R_7^{-11}$, giving the stress-energy tensor
\begin{equation}\label{eq: CasTmunu}
  \langle T_{\mu \nu}^{\text{Cas}} \rangle= (2\pi)^{-8}|\rho_c|\ell_{11}^9 R_7^{- 11} g_{\mu \nu},
    \qquad \langle T_{m  n}^{\text{Cas}}\rangle = - \frac{4}{7} (2\pi)^{-8}|\rho_c| \ell_{11}^9
  R_7^{- 11} g_{m n}
\end{equation}
where $m , n$ are the internal torus directions, $\ell_{11}^9$ is the eleven-dimensional Planck length and $| \rho_c |$ is a positive order one numerical coefficient depending on the topology as well as on the number of degrees of freedom. This coefficient can be computed explicitly with a one-loop calculation of the Casimir energy on a torus (see e.g. \cite[Sec.~3]{buchbinder-odintsov} for a computation in general higher-dimensional supergravity theories or \cite[App.~A]{arkanihamed-dubovsky-nicolis-villadoro}), but it is not important for our purposes since we will obtain parametric control. It is an easy check that \eqref{eq: CasTmunu} violates the REC \eqref{eq: REC}:
\begin{equation}
	T_{m n} - \frac{1}{4} g_{m n} T^{(4)}\Big\rvert_{\text{Cas}} = - \frac{11}{7} (2\pi)^{-8}|\rho_c| \ell_{11}^9 R_7^{- 11} g_{m n} < 0\,. 
\end{equation}

This property makes it promising for stabilizing a flat internal space, through the mechanism in \eqref{eq: deltaRec}. The Casimir energy \eqref{eq: CasTmunu} tends to make the torus expand, and we can stabilize its effect with an energy contribution of the opposite sign, such as a flux.  In M-theory, we can consider a simple homogeneous configuration  on AdS$_4$ for the four-form flux:
\begin{equation}
	F_4 = f_4 \text{vol}_{\text{AdS}_4}\,,
\end{equation}
with $f_4$ a real constant. Flux quantization of its magnetic counterpart $F_7 := \star F_4 = - f_4 \frac{R_7^7}{R_4^4} \text{vol}_{T^7}$ requires to relate $f_4$ to an integer $N_7$ as:
\begin{equation}
  \frac{1}{( 2\pi \ell_{11})^6} \int F_7 = N_7 \quad \implies \quad  f_4^2 =
  \frac{N_7^2}{4 \pi^2} \ell_{11}^{12} \frac{R_4^8}{R_7^{14}}  .
\end{equation}
Plugging these sources in the equations of motion \eqref{eq:eoms} specialized to the ansatz \eqref{eq: metCas}, we get a solution if the radii are fixed as 
\begin{equation}\label{eq: solCas}
	R_4^{2} = \ell_{11}^2 \left(\frac{N_7}{2\pi}\right)^{22/3} |\rho_c|^{-14/3}  \frac{7^{14/3}}{2^{11}\times 3^{8/3}}\;,\quad R_7^{11} = \ell_{11}^{11} \left(\frac{N_7}{2\pi}\right)^{22/3} |\rho_c|^{-11/3}  \frac{7^{11/3}}{2^{11}\times 3^{11/3}}\,.
\end{equation}
We want to assess the validity of this solution in the parametric regime $N_7 \gg 1$. We first notice from \eqref{eq: solCas} that in this regime all the radii are large in Planck units ($\gg \ell_{11}$), and thus we have a competition of classical and quantum effect without needing subplanckian regimes. With this parametric control, other quantum and non-perturbative effects are suppressed. 

In the large $N_7$ regime we then find a parametric separation between the AdS and internal scale as 
\begin{equation}
	\frac{R_7^2}{R_4^2} = \frac{2^9\times 3^2}{7^4}|\rho_c|^4 \left(\frac{N_7}{2 \pi}\right)^{-6} \,.
\end{equation}
Since $R_7$ is proportional to the internal diameter, from \eqref{eq: scaleSepFlat} we get that this ratio directly controls the spectrum of the KK modes, achieving parametric separation of scales \begin{equation}
	\frac{m_{\text{KK}}^2}{|\Lambda|} \propto N_7^6\; \gg 1 \quad\quad\text{when } N_7 \gg 1\,.
\end{equation}


For $N_7\to\infty$, the KK modes go to zero as $m\sim|\Lambda|^{1/11}$. This result can be compared with the statement of the AdS Distance Conjecture of \cite{lust-palti-vafa} that posits that in 4-dimensional Planck units $m\sim|\Lambda|^{\alpha}$. For this AdS Casimir vacuum, we find $\alpha = 1/4$.\footnote{More recent studies \cite{Rudelius:2021oaz,Castellano:2021mmx} suggest $\alpha \geqslant 1/d$ (where $d = 4$ in the present case).
In particular this has implications for the Dark Dimension proposal of \cite{montero-vafa-valenzuela-dark}, which is based on the bound $\alpha \geqslant 1/d$. 
We thank Irene Valenzuela for pointing out the agreement with these bounds.}

The solution \eqref{eq: solCas} has been computed by exploiting the fact that by tuning the flux integer $N_7 \gg 1$ the background is parametrically close to flat space. Moreover, being non-supersymmetric, it might be unstable for deformations of the torus or other effects. While as a function of the volume modulus this solution is at a minimum (obtained by balancing the negative contribution to the effective potential due to the Casimir with the positive potential from the flux), a more general perturbative analysis would require to compute the form of the Casimir stress energy tensor on the seven-dimensional torus away from the symmetric point. Being in AdS, a parametric analysis of this effect might not suffice, since tachyons of the scale of the cosmological constant could still be allowed if above the BF bound \cite{breitenlohner-freedman}.
However, a naive probe computation suggests that it is unstable for nucleation of M2 bubbles in AdS$_4$. It would be interesting to assess in detail its stability with a more careful analysis, taking into account possible corrections of the M2 action due to the Casimir effect, and to understand the role of subleading effects.


\section*{Acknowledgements}

We thank A.~Legramandi, A.~Shahbazi-Moghaddam, E.~Silverstein, G.~Torroba, V.~Van Hemelryck, T.~Van Riet for discussions. GBDL is supported in part by the Simons Foundation Origins of the Universe Initiative (modern inflationary cosmology collaboration) and by a Simons Investigator award. AM is supported by the ERC Starting Grant 802689 ``CURVATURE''. AT is supported in part by INFN and by MIUR-PRIN contract 2017CC72MK003.

\appendix

\section{Raychaudhuri and Bochner} 
\label{sec:bochner}

Here we will review some facts about the Raychaudhuri and Bochner identities, familiar in physics and geometry respectively, and derive their weighted counterparts.

We begin with the physics side. Consider a vector field $\xi^m$ to which geodesics are tangent. (We will use Latin indices, even though usually this logic is employed in Lorentzian signature.) The \emph{expansion}
\begin{equation}
	\theta := \nabla_m \xi^m\,
\end{equation}
measures if nearby geodesics tend to attract or repel, as we will motivate shortly. We can compute its derivative along the geodesic:
\begin{equation}\label{eq:ray0}
	\xi^m \nabla_m (\nabla_n \xi^n) = \xi^m \nabla_n \nabla_m \xi^n + \xi^m R^n{}_{pmn}\xi^p \,.
\end{equation}
Using the geodesic equation $\xi^m \nabla_m \xi^n=0$ we obtain the \emph{Raychaudhuri equation}
\begin{equation}\label{eq:raychaudhuri-app}
	-\nabla_\xi \theta = \nabla_n \xi_m \nabla^m \xi^n + R_{mn}\xi^m \xi^n\,.
\end{equation}

We now motivate $\theta$'s name. When $\xi^m$ it is timelike, one usually normalizes it to $\xi^2=-1$ so that the affine parameter is proper time $\tau$. Moreover, one often assumes that the field is \emph{hypersurface orthogonal}: there exists a family of hypersurfaces to which $\xi$ is everywhere orthogonal, representing ``space slices''. By the Frobenius theorem and the normalization we chose, this implies $\dd \xi =0$.
Consider now a family of geodesics depending on a parameter $s$. The vector field $d:= \partial_s$ represents the ``displacement'' between two nearby geodesics; it commutes with $\xi = \partial_\tau$, since together they define a two-dimensional sheet. But then we know $\nabla_\xi d^n =\xi^m \nabla_m d^n = \nabla_m \xi^n d^m$. Since $\nabla_\xi$ represents evolution along geodesics, the matrix $\nabla_m \xi^n$ gives its action on the displacement $d$. But then its trace $\theta$ measures the overall attraction around a geodesic, as claimed.

Let us now instead assume $\xi$ to be a closed one-form $\xi$, $\dd \xi=0$, but not necessarily related to geodesics. Recall that the Laplace--de Rham operator acts on forms as $\Delta = \{\dd,\dd^\dagger \}$. In particular
\begin{equation}\label{eq:bochner-0}
	-\Delta \xi_n = -(\dd \dd^\dagger \xi)_n = \nabla_n \nabla^m \xi_m = \nabla^m \nabla_n \xi_m - R^p{}_{mn}{}^m \xi_p = \nabla^m \nabla_m \xi_n - R_{np}\xi^p\,,
\end{equation}
with the third step similar to (\ref{eq:ray0}). We now compute 
\begin{equation}
	\frac12 \nabla^2 \xi^2 = \nabla^m (\xi^n \nabla_m \xi_n)= \nabla^m \xi^n \nabla_m \xi_n+ \xi^n \nabla^m \nabla_m \xi_n \,.
\end{equation}
Using (\ref{eq:bochner-0}) we arrive at the \emph{Bochner identity}:
\begin{equation}\label{eq:bochner}
	\xi \cdot \Delta \xi +\frac12 \nabla^2 (\xi^2) = \nabla_m \xi_n \nabla^m \xi^n + R_{mn} \xi^m \xi^n\,.
\end{equation}
Essentially the difference with Raychaudhuri is in how one handles the term $\xi^m \nabla_n \nabla_m \xi^n$, which is rewritten either using the geodesic equation or using closure of $\xi$. Other than this, the first term in (\ref{eq:bochner}) is $\xi^n \Delta \xi_n= - \xi^n \nabla_n \nabla_m \xi^m= - \nabla_\xi \theta$; recalling the normalization $\xi^2=-1$, (\ref{eq:bochner}) reduces to the Raychaudhuri equation (\ref{eq:raychaudhuri-app}). 

(\ref{eq:bochner}) has several interesting mathematical applications, such as to show that when the Ricci tensor is positive there cannot be any harmonic one-forms. When $\xi=\dd \eta$, one can rewrite the first term as $\nabla \eta \cdot \nabla ( \Delta \eta)=-\nabla \eta \cdot \nabla (\nabla^2 \eta)$, using that $\dd$ and $\Delta$ commute.

We will now show how (\ref{eq:bochner}) is modified when the measure is of the form $\int \ee^f \sqrt{g}$. The adjoint of the exterior derivative is now 
\begin{equation}\label{eq:ddf}
	\dd^\dagger_f := \ee^{-f} \dd^\dagger \ee^f\,.
\end{equation}
The associated Laplacian is $\Delta_f = \{ \dd, \dd^\dagger_f\}$; on a function, this reproduces (\ref{eq:Delta_f}). With these definitions and using (\ref{eq:bochner-0}), we replace it by
\begin{align}
		\nonumber
		-\Delta_f \xi_n &= -(\dd \dd^\dagger_f \xi)_n = \nabla_n (\ee^{-f} \nabla^m (\ee^f\xi_m)) = \nabla_n (\nabla^m \xi_m + \nabla^m f \xi_m )\\
		\label{eq:wbochner-0}
		&=\nabla^m \nabla_m \xi_n - R_{np}\xi^p + \nabla_n \nabla^m f \xi_m + \nabla^m f \nabla_n \xi_m\\
		\nonumber &= \ee^{-f} \nabla^m(\ee^f \nabla_m \xi_n)- ( R_{mn}- \nabla_m \nabla_n f)\xi^n\,.
\end{align}
We see the appearance here of $({\text{Ric}_f^\infty})_{mn} = R_{mn}- \nabla_m \nabla_n f $, the $N\to \infty$ limit of (\ref{eq:defRicciNf}). Using this and the definition (\ref{eq:Delta_f}) we arrive at the \emph{weighted Bochner} identity:
\begin{equation}\label{eq:weighted-bochner}
	\xi \cdot \Delta_f \xi +\frac12 \nabla^2_f (\xi^2) = \nabla_m \xi_n \nabla^m \xi^n +  ( R_{mn}- \nabla_m \nabla_n f)\xi^m \xi^n\,.
\end{equation}

With this result in hand we can now also find the weighted Raychaudhuri equation. We can define the weighted expansion
\begin{equation}\label{eq:theta-f}
	\theta_f = \ee^{-f}\nabla_m (\ee^f \xi^m)= - \dd^\dagger_f \xi\,.
\end{equation}
Similar to the unweighted case, when $\xi^2=-1$ we can simplify (\ref{eq:weighted-bochner}) to
\begin{equation}\label{eq:weighted-r}
	- \nabla_\xi \theta_f = \nabla_m \xi_n \nabla^m \xi^n +  ( R_{mn}- \nabla_m \nabla_n f)\xi^m \xi^n\,.
\end{equation}


\section{Flows of probability distributions} 
\label{sec:flows}

In this appendix we provide more details on the formal computations of the first and second derivatives of functionals along geodesics on $\mathcal{P}_2(X)$. These formal manipulations can be found in \cite[Chap.~15]{Vil} and are based on a formalism introduced by Otto \cite{Otto-CPDE2001, otto-villani}.
As in the main text, given a probability distribution $\mu$ on $X$, we can define its density $\rho$ with respect to the (possibly weighted) volume form as
\begin{equation}
	\mu := \rho \sqrt{g}\ee^f \dd x^n\,,
\end{equation}
where $\ee^f$ is the weight function. We then represent a generic functional $\mathcal{F}$ as
\begin{equation}
	\mathcal{F}[\mu]:= \int_X \sqrt{g}\ee^f  F(\rho)\,.
\end{equation}
When $\mu$ is time-dependent, $\rho$ will depend on time too and we have for the derivative of $\mathcal{F}$ the expression 
\begin{equation}
	\frac{\dd}{\dd t} \mathcal{F}[\mu] = \int_X \sqrt{g} \ee^f \dot{\rho} F'(\rho)  = \int_X \dot{\mu}  F'(\rho)\,.
\end{equation}
Using the continuity equation $\dot\mu := -\nabla\cdot(\mu\nabla\eta) $, \eqref{eq: continuity}, and integrating by parts we have
\begin{equation}
	\frac{\dd}{\dd t}\mathcal{F}[\mu] = -\int_X \sqrt{g}\ee^f \rho\nabla\eta\cdot\nabla(F'(\rho)) = -\int_X \sqrt{g}\ee^f \nabla\eta\cdot\nabla(P(\rho)),
\end{equation}
where we used $\nabla(P(\rho)) = \rho \nabla(F'(\rho))$ which directly follows from the definition $P(\rho) := \rho F'(\rho)-F(\rho)$.
Integrating by parts one last time we finally get
\begin{equation}\label{eq:der1App}
	\frac{\dd}{\dd t}\mathcal{F}[\mu] =\int_X\sqrt{g}\ee^f P(\rho)\Delta_f(\eta) \,,
\end{equation}
where the warped Laplacian was defined in (\ref{eq:Delta_f}).
Equation \eqref{eq: der1U} in the main text is obtained from \eqref{eq:der1App} for $f = 0$.

We can then compute an extra time-derivative of \eqref{eq:der1App}. This time we have two terms:
\begin{equation}\label{eq:der2App0}
		\frac{\dd^2}{\dd t^2} \mathcal{F}[\mu] =\int_X\sqrt{g}\ee^f \dot{\rho}P'(\rho)\Delta_f(\eta) +\int_X\sqrt{g}\ee^f P(\rho)\Delta_f(\dot{\eta})\,.
\end{equation}
The first term on the right-hand side can be manipulated as
\begin{equation}\label{eq:appfirsterm}
	\begin{split}
		\int_X\dot{\mu} P'(\rho)\Delta_f(\eta)  &= -\int_X \nabla\cdot(\mu\nabla\eta) P'(\rho)\Delta_f(\eta) \\
		&=  \int_X \mu\nabla\eta\cdot\nabla(P'(\rho))\Delta_f(\eta)+\int_X \mu P'(\rho) \nabla\eta\cdot\nabla(\Delta_f(\eta))\\
		& = \int_X\sqrt{g}\ee^f\nabla\eta\cdot \nabla P_2(\rho)\Delta_f(\eta)+\int_X \mu P'(\rho)\nabla\eta\cdot\nabla(\Delta_f(\eta))\\
		& = \int_X\sqrt{g}\ee^f P_2(\rho)(\Delta_f(\eta))^2+\int_X \sqrt{g}\ee^f P(\rho)\nabla\eta\cdot\nabla(\Delta_f(\eta))
	\end{split}
\end{equation}
where we used the relation $\rho \nabla(P'(\rho)) = \nabla P_2(\rho)$ that follows from the definition $P_2(\rho):= \rho P'(\rho)-P(\rho)$.
For the second term in \eqref{eq:der2App0}, we need to use the fact that the motion is along a geodesic in Wasserstein space, as opposed to a generic curve. In this case, $\eta$ satisfies the Hamilton--Jacobi equation \eqref{eq: geoW}. Plugging it in \eqref{eq:der2App0} and using \eqref{eq:appfirsterm} we finally arrive at
\begin{equation}\label{eq:der2App1}
	\frac{\dd^2}{\dd t^2}  \mathcal{F}[\mu] = -\int_X
 \sqrt{g}\ee^f P(\rho)\left[\Delta_f\left(\frac{\lvert\nabla\eta\rvert^2}{2}\right)-\nabla(\Delta_f(\eta))\cdot\nabla\eta\right]+\int_X\sqrt{g}\ee^f P_2(\rho)(\Delta_f(\eta))^2\,.
\end{equation}
Equation \eqref{eq: der2U} in the main text is obtained for $f = 0$, while in the proof of Th.~\ref{th: comp} we used the expression with $f\neq 0$ being the warping.

Finally, for our proofs we also relied on the following lemma \cite[p.~402]{Vil}, which is easy to show in normal coordinates.
\begin{lemma}\label{lemma:villani-p402}
For any $x_0\in X$, $\xi_0 \in T_{x_0}X$, and a function $f$, there exists another function $\eta$ such that 
\begin{equation}
	\left(\nabla_m \nabla_n \eta - \frac1n g_{mn} \nabla^2 \eta\right) (x_0)=0 \, ,\qquad 
	(\nabla^2 \eta + \nabla f \cdot \nabla \eta)(x_0)=0\,
\end{equation}
and
\begin{equation}
	\left(\nabla \eta\right) (x_0)= \xi_0\,.
\end{equation}
\end{lemma}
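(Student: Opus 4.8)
The plan is to reduce everything to a pointwise statement at $x_0$ by working in geodesic normal coordinates, and then to exhibit $\eta$ explicitly as a quadratic polynomial in those coordinates. Fix normal coordinates $(x^1,\dots,x^n)$ centered at $x_0$, so that $g_{mn}(x_0)=\delta_{mn}$ and all Christoffel symbols vanish at the origin, $\Gamma^k_{mn}(x_0)=0$. Consequently, at $x_0$ covariant derivatives collapse to ordinary partial derivatives: $(\nabla\eta)^m(x_0)=\partial_m\eta(0)$, $(\nabla_m\nabla_n\eta)(x_0)=\partial_m\partial_n\eta(0)$, and $(\nabla^2\eta)(x_0)=\sum_m\partial_m^2\eta(0)$. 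Thus the three conditions become, writing $\xi_0^m$ for the components of $\xi_0$ and $f_m:=\partial_m f(0)$: (a) $\partial_m\partial_n\eta(0)$ is a multiple of $\delta_{mn}$; (b) $\sum_m\partial_m^2\eta(0)+\sum_m f_m\,\partial_m\eta(0)=0$; (c) $\partial_m\eta(0)=\xi_0^m$.

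The key step is then immediate: set $c:=-\tfrac1n\sum_m f_m\xi_0^m=-\tfrac1n(\nabla f\cdot\xi_0)(x_0)$ and take, in the chart,
\begin{equation}
	\eta(x):=\sum_{m} \xi_0^m\, x^m+\frac{c}{2}\sum_m (x^m)^2\,.
\end{equation}
Then $\partial_m\eta(0)=\xi_0^m$, which gives (c); $\partial_m\partial_n\eta(0)=c\,\delta_{mn}$, which gives (a); and $\sum_m\partial_m^2\eta(0)=nc=-\sum_m f_m\xi_0^m$, so (b) holds as well. Translating back via the normal-coordinate identifications above, $\eta$ satisfies $\bigl(\nabla_m\nabla_n\eta-\tfrac1n g_{mn}\nabla^2\eta\bigr)(x_0)=c\,g_{mn}(x_0)-\tfrac1n g_{mn}(x_0)(nc)=0$, $\bigl(\nabla^2\eta+\nabla f\cdot\nabla\eta\bigr)(x_0)=nc+\nabla f\cdot\xi_0=0$, and $(\nabla\eta)(x_0)=\xi_0$, as required.

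Finally, since all the required properties are conditions only at the single point $x_0$ (on $\eta$, its gradient, and its Hessian there), if one wishes $\eta$ to be defined and smooth on all of $X$ one simply multiplies the above coordinate expression by a smooth cutoff function that equals $1$ on a neighborhood of $x_0$ and is supported inside the normal coordinate chart; this does not affect the $0$th, $1$st, or $2$nd order jet of $\eta$ at $x_0$. There is no real obstacle here: the only point requiring a line of care is the correct invocation of the normal-coordinate facts $g_{mn}(x_0)=\delta_{mn}$ and $\Gamma^k_{mn}(x_0)=0$, which is what allows one to replace covariant by partial derivatives at $x_0$ and thereby reduce the lemma to a triviality about quadratic polynomials.
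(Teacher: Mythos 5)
Your proof is correct and takes exactly the route the paper has in mind: the paper itself does not spell out a proof but merely remarks that the lemma "is easy to show in normal coordinates" (citing Villani, p.~402), and your argument—normal coordinates at $x_0$, a quadratic polynomial $\eta(x)=\xi_0\cdot x+\tfrac{c}{2}|x|^2$ with $c=-\tfrac1n\,\nabla f\cdot\xi_0$, and a cutoff to globalize—is precisely that computation, carried out in full.
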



\section{Geodesics} 
\label{sec:geodesics}

We will now show several facts about geodesics that we need in the main text.
\subsection{Space-time geodesics from Wasserstein geodesics}
We used in the main text that geodesics in the Wasserstein space $\mathcal{P}_2(X)$ satisfy \eqref{eq: geoW}, where $\xi = \nabla \eta$ is the time-dependent velocity vector field appearing in the continuity equation \eqref{eq: continuity}
\begin{equation}\label{eq: continuityApp2}
	\dot{\mu}= -\nabla\cdot\left(\mu \xi \right) \,.
\end{equation}
Thus, the vector field $\xi(x,t)$ describes the motion of the bit of mass at $x$ at time $t$, and as a consequence of all the bits of mass composing $\mu$ moving along to $\xi$, $\mu$ changes in time as in \eqref{eq: continuityApp2}.
We will now show that when the motion is geodesic on $\mathcal{P}_2(X)$, then the trajectories of the individual bits of mass follow geodesics on $(X,g)$.

To see this, consider a bit of mass in $\mu$ that at $t = 0$ starts at $x = x_0$. It will then follow $\xi(x_0, 0)$ and after an amount of time $\Delta t$ it will end up in $x_1=x_0 +\xi(x_0, 0)\Delta t +O(\Delta t^2)$. Once there, it will follow $\xi(x_1, \Delta t)$ and so on. Thus, along the trajectory $x(t)$ its tangent vector will be
\begin{equation}
	\zeta(t):=\xi(x(t), t)\,.
\end{equation}
For each $t$ this is a tangent vector in $T_{x(t)}(X)$.
Lowering an index (i.e.~considering the one-form $g(\zeta,\cdot)$), and taking a derivative along the curve we get
\begin{equation}\label{eq:dzeta}
	\begin{split}
		\frac{\dd}{\dd t} \zeta_m &=	\partial_t\xi_m + \dot{x}^n\partial_n\xi_m\\
		&=  \partial_t\xi_m + \xi^n\partial_n\xi_m 
	\end{split}
\end{equation}
where the right hand side is understood at $x=x(t)$, and thus we could substitute $\dot{x}^n$ with $\xi^n$. 

Since $\eta = \nabla \xi$, we can take a covariant derivative of \eqref{eq: geoW} to obtain
\begin{equation}
	\partial_t\xi_m = -\xi^n\nabla_m\xi_n = -\xi^n\nabla_n\xi_m 
\end{equation}
where we also used $\dd\xi = 0$ to commute the indices. Plugging this into \eqref{eq:dzeta} we get
\begin{equation}
	\begin{split}
		\frac{\dd}{\dd t} \zeta_m &= \xi^n(\partial_n\xi_m-\nabla_n\xi_m)
		=\xi^n\Gamma^p_{n m}\xi_p\\
		& = \frac12 \partial_m(g_{np})\xi^n\xi^p\\
		& =\frac12 \partial_m(g_{np})\zeta^n\zeta^p\,.
	\end{split}
\end{equation}
This shows that when the probability distribution $\mu$ follows a geodesic on the probability space $\mathcal{P}_2(X)$, the individual particles follow geodesics on $(X,g)$. Tracing back the steps shows also the other implication.

A similar analysis can be performed for the Lorentzian case studied in Sec.~\ref{sub:Lor}. Specifically, a direct computation following the one given above shows that imposing the equations of motions \eqref{eq:contGeoLor} on a distribution of  massive particles $\mu$ is equivalent to the requirement that each particle in the distribution follows a (time-like) geodesic. This formalism needs to be modified for massless particles since the $q$-gradient \eqref{eq:q-grad} would not be defined for light-like geodesics.

\subsection{Internal geodesics}
In this section, we will show that massless geodesics on the $D$-dimensional warped product space-time (\ref{eq:metric}), where $A$ is function depending only on the coordinate on $n$-dimensional Riemannian manifold $X_n$, can be projected to Riemannian geodesics for $X_n$, and viceversa.

This is a direct consequence of the well-known fact that only massless geodesics are mapped into geodesics upon a conformal transformation.\cite[App.~D]{wald}, which we quickly review as follows.
Take the geodesic equation on $\mathcal{M}_D$:
\begin{equation}
	\frac{\dd}{\dd \sigma}(g_{M N}\partial_\sigma X^N) - \frac{1}{2}\partial_M (g_{QP})\partial_\sigma X^P \partial_\sigma X^Q =0
\end{equation}
where $X^M$ are local coordinates on $\mathcal{M}_D$ and $\sigma$ is the coordinate along the geodesics. Defining $\bar{g}:= \ee^{-2A} g_D$, a new coordinate $\bar \sigma = \bar \sigma(\sigma)$ along geodesics with $\frac{\partial\bar \sigma}{\partial\sigma} = \ee^{-2A}$, and
recalling that $m^2 = -g_{QP}\partial_\sigma X^P \partial_\sigma X^Q$, we obtain
\begin{equation}
	\frac{\dd}{\dd \bar \sigma}(\bar{g}_{M N}\partial_{\bar \sigma} X^N) - \frac{1}{2}\partial_M (\bar{g}_{QP})\partial_{\bar \sigma} X^P \partial_{\bar \sigma} X^Q + \frac{m^2}{2}\partial_M(\ee^{2A}) = 0\,.
\end{equation}
This shows that when the warp function $A$ is not constant only massless geodesics on $\mathcal{M}_D$ are geodesic on the unwarped product space-time. Since the latter is a simple product its geodesics then directly split into geodesics on the $d$ dimensional space-time and geodesics on $X_n$.

\section{Lorentzian transport}\label{sec:appLor}
In this Appendix we briefly derive some formulas we need in Sec.~\ref{sub:Lor} to prove the entropic reformulation of Einstein gravity in the Lorentzian case.  

In particular, we want to compute derivatives of functionals on a space-time $M$ along time-like geodesics. Given a probability distribution $\mu$ on $M$, we write a generic functional as 
\begin{equation*}
	\mathcal{F}[\mu] := \int_M \sqrt{-g} F(\rho)\qquad\text{with}\qquad \mu := \sqrt{-g}\rho\,.
\end{equation*}
Calling $\sigma$ the coordinate describing the geodesic evolution, we get
\begin{equation*}
	\begin{split}
		\frac{\dd}{\dd \sigma}\mathcal{F}[\mu] & = \int_M\dot{\mu} F'(\rho)
		= \int_M \sqrt{-g} \nabla^q\eta\cdot\nabla P(\rho)\\
		&= \int_M \sqrt{-g} \Box_q(\eta) P(\rho)
	\end{split}
\end{equation*}
where we used that curves in probability space are parametrized in terms of vector fields on $M$ through the non-linear continuity equation in \eqref{eq:contGeoLor}; we defined again $P(\rho):= \rho F'(\rho)-F(\rho)$. We also introduced the non-linear $q$-box operator
\begin{equation*}
	\Box_q(\eta):= -\nabla^M\nabla_M^q \eta\,, 
\end{equation*}
as the natural second order operator associated to the $q$-gradient \eqref{eq:q-grad}.
We can now take another derivative of $\mathcal{F}$ and evaluate it at $\sigma = 0$ (which is, without loss of generality, a generic point along the geodesic). After a lengthy computation, in which we also make use of the geodesic equation in \eqref{eq:contGeoLor}, we get
\begin{equation*}
		\frac{\dd^2}{\dd\sigma^2}\mathcal{F}[\mu]  = \int_M \sqrt{-g} P(\rho)\left[ \frac{\dd}{\dd\sigma} \Box_q(\eta)+\nabla_M^q\nabla^M(\Box_q(\eta))\right] + \int_M \sqrt{-g} P_2(\rho)(\Box_q(\eta))^2	
\end{equation*}
where we defined $P_2(\rho):= \rho P'(\rho)-P(\rho)$.
Defining the linear operator 
\begin{equation*}
	\mathcal{L}_{\eta, q} (\phi) :=	\left.\frac{\dd}{\dd \sigma} (\Box_q (\eta + \sigma \phi)) \right\rvert_{\sigma =0},
\end{equation*}
which along geodesics satisfies the relation
\begin{equation*}
	\left. \frac{\dd}{\dd \sigma} (\Box_q (\eta)) \right\rvert_{\sigma = 0} = \mathcal{L}_{\eta, q} (\dot{\eta}) = - \frac{1}{q} \mathcal{L}_{\eta, q} (|
\nabla \eta |^q)\;,
\end{equation*}
we then get the expression
\begin{equation}\label{eq:d2FLor}
	\left.\frac{\dd^2}{\dd\sigma^2}\mathcal{F}[\mu]\right\rvert_{\sigma = 0} = \int_M \sqrt{- g} P (\rho) \left[ -  \frac{1}{q} \mathcal{L}_{\eta, q} (|
\nabla \eta |^q) + \nabla_{\nu}^q \eta \nabla^{\nu} (\Box_q \eta)  \right] +
\int_M \sqrt{- g} (\Box_q \eta)^2 P_2 (\rho) \,,
\end{equation}
where on the right hand side we have omitted the evaluation symbol.

Finally, we also need the $q$-analogue of the Bochner equation \eqref{eq:boch1}. An explicit computation (see for instance \cite[App. A]{mondino-suhr}) gives
\begin{eqnarray}
	- \frac{1}{q}  \mathcal{L}_{\eta, q} (| \nabla \eta |^q) + \nabla_q \eta
	\nabla (\Box_q \eta)  & = & (q - 2)^2 | \nabla \eta |^{2 (q - 4)}
	(\nabla_{P} \nabla_{M} \eta \nabla^{P} \eta \nabla^{M} \eta)^2
	+ \label{eq:qBoch} \\
	& - & 2 (q - 2) | \nabla \eta |^{2 (q - 3)} \nabla^{N} \eta \nabla_{M}
	\nabla_{N} \eta \nabla_{S} \eta \nabla^{M} \nabla^{S} \eta +
	\nonumber\\
	& + & | \nabla \eta |^{2 (q - 2)} (R_{M N} \nabla^{M} \eta
	\nabla^{N} \eta + \nabla^{M} \nabla^{N} \eta \nabla_{M} \nabla_{N}
	\eta) \,. \nonumber
  \end{eqnarray}

We can now specialize our formulas to the Shannon entropy functional
\begin{equation}
\mathcal{S}[\mu] := -\int_M \sqrt{-g}	\rho\ln\rho\,.
\end{equation}
Combining \eqref{eq:d2FLor} and \eqref{eq:qBoch} and collecting the $q$-gradients, we finally get the expression \eqref{eq:HessSLor} for the second derivative of the Shannon entropy:
\begin{equation}
	\left.\frac{\dd^2}{\dd\sigma^2}\right|_{\sigma=0}\mathcal{S}=  -\int_{M} \sqrt{-g}\rho \left[R_{M N}
	\nabla^M_q \eta \nabla^N_q \eta + \nabla_M \nabla^q_N \eta \nabla^M \nabla_q^N
	\eta \right]\,.
\end{equation}

For the proof of Th.~\ref{th: Lorentzian} we also need the following 
\begin{lemma}\label{lemma: q-grad}
	On a $D$-dimensional Lorentzian space-time, for any function $\eta$ with time-like gradient and $q<1$ we have the following inequality\begin{equation}
		\nabla_{M} \nabla_{N}^q \eta \nabla^{M} \nabla^{N}_q \eta \geqslant 0\,,
	\end{equation}
	where the $q$-gradient $\nabla^q$  is defined as in \eqref{eq:q-grad}.
	\begin{proof}
		Given a time-like vector field $\xi$ consider the  $q$\emph{-Hamiltonian}
		\begin{equation}
			H_q(\xi):=-\frac{1}{q} (-g_{M N} \xi^M\xi^N)^{q/2}\,,
		\end{equation}
		and define the quantity
		\begin{equation}\label{eq:HMN}
		H_{M N} := \frac{\partial^2 H_q(\xi)}{\partial \xi^M\partial\xi^N}	 = (2-q)\xi_M\xi_N|\xi|^{q-4}+|\xi|^{q-2}g_{M N}
		\end{equation}
		where $|\xi|:= (-g_{M N} \xi^M\xi^N)^{\frac12}$.
		Now, defining the matrix $B_{M}{}^{P} := H_{M N}\nabla^M\xi^P$, we have
		\begin{equation}
			B_{M P} = - (q - 2) \xi_{M} | \xi |^{q - 4} \xi^{S} \nabla_{S} \xi_{P} + | \xi
			|^{q - 2} \nabla_{M} \xi_{P}= - \nabla_M \nabla_N^q \eta\,,
		\end{equation}
		so that
		\begin{equation}
				B_{M P} B^{P M} 
				= \nabla_{M} \nabla_{N}^q \eta \nabla^{M} \nabla^{N}_q \eta
		\end{equation}
	where as usual $\xi_M = \nabla_M\eta$. To prove the lemma we then need to show that $B_{M P} B^{P M} \geqslant 0$. To do that, let us consider a vielbein $e^A$ adapted to the time-like vector field $\xi$, that is $e^0 = \frac{\xi}{|\xi|}$ and $e^a$, $a = 1,\dots, D-1$, being $D-1$ normalized space-like vectors orthogonal to $e^0$ and to each other. In this basis, the matrix \eqref{eq:HMN} has the non-vanishing components (in flat indices):
	\begin{equation}
		H_{0 0} = | \xi |^{q - 2} (1 - q)\;,  \qquad H_{ab} = | \xi |^{q - 2} \delta_{ab} \,.
	\end{equation}
Then, defining $b_{M N}:= \nabla_M\xi_N$, which is symmetric as a consequence of $\xi_M := \nabla_M\eta$, we have 
\begin{equation}
	B_{AB} B^{AB} =  | \xi |^{2 (q - 2)} \left((1 - q)^2 b_{00}^2 + 2 (1 - q) b_{0 i} b^{0 i}  + b_{i
	k} b_{k i}\right)\,.
\end{equation} 
This shows that $\nabla_{M} \nabla_{N}^q \eta \nabla^{M} \nabla^{N}_q \eta \geqslant 0$ if $q<1$.
\end{proof}
\end{lemma}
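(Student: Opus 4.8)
The plan is to turn the quadratic expression $\nabla_{M}\nabla_{N}^{q}\eta\,\nabla^{M}\nabla_{q}^{N}\eta$ into the trace of the square of a single endomorphism built from the Hessian of the convex \emph{$q$-Hamiltonian}, and then to evaluate that Hessian in an orthonormal frame adapted to the timelike direction $\xi:=\nabla\eta$. The hypothesis $q<1$ will enter exactly once, to make one coefficient positive so that the whole expression becomes a manifest sum of squares. No curvature tensor appears anywhere, so this is really a pointwise linear-algebra statement about the Hessian of a function with timelike gradient.

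Concretely, set $\xi_{M}:=\nabla_{M}\eta$, $|\xi|:=\sqrt{-g_{MN}\xi^{M}\xi^{N}}>0$, and $b_{MN}:=\nabla_{M}\xi_{N}=\nabla_{M}\nabla_{N}\eta$, which is symmetric because $\xi=\mathrm d\eta$ is exact. By \eqref{eq:q-grad}, $\nabla_{N}^{q}\eta=-|\xi|^{q-2}\xi_{N}$. I would introduce the $q$-Hamiltonian $H_{q}(\xi):=-\tfrac1q\bigl(-g_{MN}\xi^{M}\xi^{N}\bigr)^{q/2}$ and its $\xi$-Hessian
\[
H_{MN}:=\frac{\partial^{2}H_{q}}{\partial\xi^{M}\partial\xi^{N}}=(2-q)|\xi|^{q-4}\xi_{M}\xi_{N}+|\xi|^{q-2}g_{MN}.
\]
Differentiating $\nabla_{N}^{q}\eta=-|\xi|^{q-2}\xi_{N}$ with the Leibniz rule, using $\nabla_{M}|\xi|=-|\xi|^{-1}\xi^{P}b_{MP}$ and the symmetry of $b$, one checks that the matrix $B_{M}{}^{P}:=H_{MN}\nabla^{N}\xi^{P}$ satisfies $\nabla_{M}\nabla_{N}^{q}\eta\,\nabla^{M}\nabla_{q}^{N}\eta=B_{MP}B^{PM}$. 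Everything is thereby reduced to showing that the trace $B_{MP}B^{PM}$ of the square of $B=H\cdot b$ is non-negative.

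For this I would pick an orthonormal vielbein $e^{A}$ with $e^{0}=\xi/|\xi|$ timelike and $e^{a}$ ($a=1,\dots,D-1$) spacelike. Since $\xi$ has only a time component in this frame, the $\xi\otimes\xi$ term in $H$ cancels the Lorentzian sign of $g_{00}$ and leaves $H$ diagonal,
\[
H_{00}=(1-q)\,|\xi|^{q-2},\qquad H_{ab}=|\xi|^{q-2}\,\delta_{ab}.
\]
This is the only place the hypothesis is used: for $q<1$ one has $1-q>0$, so $H$ is a positive-definite bilinear form on the tangent space. Writing $b_{AB}$ for the (symmetric) frame components of $\nabla\xi$ and expanding $B_{MP}B^{PM}$ in this frame, a short computation gives
\[
\nabla_{M}\nabla_{N}^{q}\eta\,\nabla^{M}\nabla_{q}^{N}\eta=|\xi|^{2(q-2)}\Bigl((1-q)^{2}b_{00}^{2}+2(1-q)\textstyle\sum_{a}b_{0a}^{2}+\sum_{a,c}b_{ac}^{2}\Bigr),
\]
which is manifestly non-negative for $q<1$, proving the claim.

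The main obstacle, and the step needing the most care, is bookkeeping rather than ideas: the Hessian $\nabla_{M}\nabla_{N}^{q}\eta$ of the $q$-gradient is \emph{not} symmetric in $M\leftrightarrow N$ (unlike $b$), and the raisings use the indefinite metric, so one must track transposes carefully through $B_{MP}B^{PM}=\nabla_{M}\nabla_{N}^{q}\eta\,\nabla^{M}\nabla_{q}^{N}\eta$ and verify that the a priori ``dangerous'' mixed components $b_{0a}$ re-enter with a \emph{positive} coefficient $2(1-q)$ — this is exactly the interplay between the sign of $g_{00}$, the $\xi\otimes\xi$ correction in $H$, and the convexity encoded in $q<1$. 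Once the expression is brought to the adapted frame this is transparent, so no Bochner- or Raychaudhuri-type input is needed.
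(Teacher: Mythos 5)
Your proof is correct and follows essentially the same approach as the paper: introduce the $q$-Hamiltonian $H_q$, take its velocity Hessian $H_{MN}$, recognize $-\nabla_M\nabla_N^q\eta$ as $B_{MP}=H_{MN}\nabla^N\xi_P$, and diagonalize $H$ in a frame adapted to $\xi$ to turn $B_{MP}B^{PM}$ into a manifest sum of squares whose coefficients are positive precisely when $q<1$. Your write-up is in fact slightly more careful with the index bookkeeping than the paper's final display, which writes the middle term as $2(1-q)\,b_{0i}b^{0i}$ — taken literally in Minkowski flat indices this would carry a wrong sign, while your $2(1-q)\sum_a b_{0a}^2$ is unambiguously correct.
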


\bibliography{at}
\bibliographystyle{at}

\end{document}